\def\eqref#1{equation~\ref{#1}}
\def\1{\bm{1}}
\DeclareMathAlphabet{\mathsfit}{\encodingdefault}{\sfdefault}{m}{sl}
\SetMathAlphabet{\mathsfit}{bold}{\encodingdefault}{\sfdefault}{bx}{n}
\DeclareMathOperator*{\argmin}{arg\,min}
\theoremstyle{plain}
\newtheorem{theorem}{Theorem}[section]
\newtheorem{lemma}[theorem]{Lemma}
\theoremstyle{definition}
\theoremstyle{remark}
\title{Solving Football by Exploiting Equilibrium Structure of 2p0s Differential Games with One-Sided Information}
\author{%
  Mukesh Ghimire \\
  Arizona State University\\
  Tempe, AZ 85281, USA \\
  \texttt{mghimire@asu.edu} \\
  % examples of more authors
  \And
  Lei Zhang \\
  Purdue University \\
  West Lafayette, IN 47907, USA\\
  \texttt{zhan5814@purdue.edu} \\
  \And
  Zhe Xu \\
  Arizona State University\\
  Tempe, AZ 85281, USA \\
  \texttt{xzhe1@asu.edu} \\
  \And
  Yi Ren\thanks{Corresponding author.} \\
  Arizona State University\\
  Tempe, AZ 85281, USA \\
  \texttt{yiren@asu.edu} \\
  % \And
  % Coauthor \\
  % Affiliation \\
  % Address \\
  % \texttt{email} \\
}
\begin{document}

\maketitle

\begin{abstract}
For a two-player imperfect-information extensive-form game (IIEFG) with $K$ time steps and a player action space of size $U$, the game tree complexity is $U^{2K}$, causing existing IIEFG solvers to struggle with large or infinite $(U,K)$, e.g., differential games with continuous action spaces. To partially address this scalability challenge, we focus on an important class of 2p0s games where the informed player (P1) knows the payoff while the uninformed player (P2) only has a belief over the set of $I$ possible payoffs. Such games encompass a wide range of scenarios in sports, defense, cybersecurity, and finance. 
We prove that under mild conditions, P1's (resp. P2's) equilibrium strategy at any infostate concentrates on at most $I$ (resp. $I+1$) action prototypes. When $I\ll U$, this equilibrium structure causes the game tree complexity to collapse to $I^K$ for P1 when P2 plays best responses, and $(I+1)^K$ for P2 in a dual game where P1 plays best responses. We then show that exploiting this structure in model-free multiagent reinforcement learning and model predictive control leads to significant improvements in learning accuracy and efficiency from SOTA IIEFG solvers. Our demonstration solves a 22-player football game with continuous action spaces and $K=10$ time steps, where the offense team needs to strategically conceal their play until a critical moment in order to exploit information advantage. Code is available \href{https://github.com/ghimiremukesh/cams/blob/iclr/}{here}.
\end{abstract}

\section{Introduction}
\vspace{-0.1in}
The strength of game solvers has grown rapidly in the last decade, beating elite-level human players in Chess~\citep{alphazero}, Go~\citep{alphago}, Poker~\citep{pluribus, rebel}, Diplomacy~\citep{diplomacy}, Stratego~\citep{stratego}, among others with increasing complexity. 
However, most of the existing solvers with proven convergence, either based on regret matching~\citep{tammelin2014solving,burch2014solving, moravvcik2017deepstack,rebel,lanctot2009monte} or gradient descent-ascent~\citep{pmlr-v15-mcmahan11b, perolat2021poincare, sokota2022unified, cen2021fast, vieillard2020leverage}, have computational complexities increasing with the size of the \textit{finite} action set, and suffer from game-tree complexities growing exponentially with both the action size $U$ and the tree depth $K$. 
Real-world games, however, often have continuous actions and happen in continuous time and state spaces, making them \textit{differential} in nature. 
Applying existing solvers to differential games would require game-specific insight or extra computational overhead for automated abstraction \citep{kroer2015discretization, hawkin2011automated, brown2015simultaneous}.
% Directly applying existing solvers to differential games would require either insightful action-state-time abstraction or enormous compute. Neither are readily available.

\begin{figure}[!htb]
    \centering
    \includegraphics[width=\linewidth]{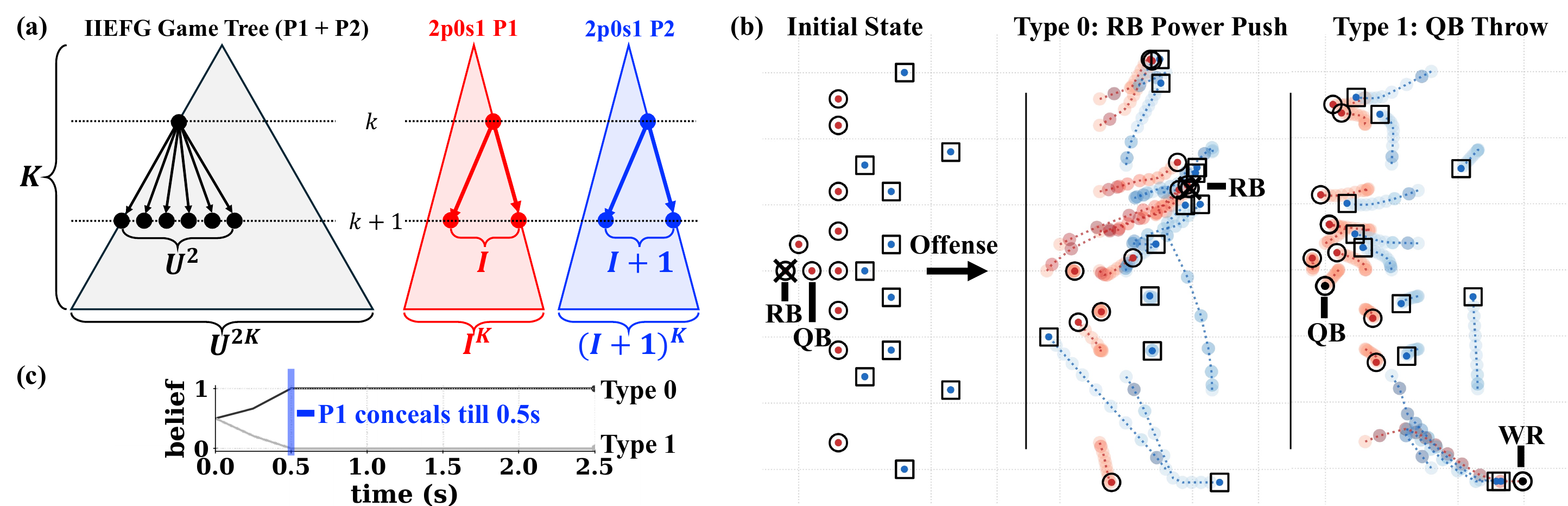}
    % \vspace{-0.3in}
    % \caption{\textbf{(a)} We explain the atomic nature of equilibrium strategies in the games of our interest. Exploiting this nature allows us to tractably solve games with continuous action spaces. \textbf{(b)} Sample equilibrium state trajectories of a 2p0s differential game where P2 guesses P1's target (magenta circles). P1's optimal strategy is to reveal his target after a critical time $t_r$. \textbf{(c)} Illustration of a 2-level multigrid solver. Fine grid errors are restricted to the coarse grid, where cheap corrections are computed and prolongated to the fine grid. \textbf{(d)} Multigrid further accelerates value approximation for games with various number of time steps.}
    \vspace{-0.15in}
    \caption{(a) IIEFG with $U$ actions per player per infostate and $K$ time steps has a game-tree complexity of $U^{2K}$. For 2p0s1 with $I$ payoff types, deterministic dynamics, and Isaacs' condition, we show that the NE is $I$-atomic for P1 and $(I+1)$-atomic for P2, leading to a game-tree complexity of $I^K$ for P1 in the primal game where P2 plays best responses and $(I+1)^K$ for P2 in the dual game where P1 plays best responses. (b) American Football with 22 players and continuous action spaces ($U=\infty$) with $K=10$ time steps. P1 (red) attacks with two private game types ($I=2$): Running back (RB) power-runs through the space created by blockers, and quarterback (QB) throws the ball to the leading wide receiver (WR). See \href{https://github.com/ghimiremukesh/cams/blob/iclr/README.md}{animation}. (c) At NE, P1 conceals type until 0.5 sec., similar to the reported 1.0 sec. Due to significant tree size reduction, the game can be solved in 30 minutes.  
    % , and blue squares denote the defending team. The solution via MPC recovers real football tactics: type  mimics an inside-zone left---runner steps left, then goes straight upfield behind blockers; type ‘QB-throw’ mimics a waggle---the quarterback fakes the run and rolls right with no lead blocker.
    }
    % (b)-(d) Variants of Hexner's game (Sec.~\ref{sec:validation}) where players compete at moving closer to one of the two \textcolor{magenta}{targets}: Both players know the chosen \textcolor{magenta}{target} in (b); only P1 knows the chosen target in (c)-(d). 
    % In (c), MARL agents trained w/ SOTA policy gradient methods fail to converge to NE since these methods are designed for interior NEs (inside of $\Delta(\mathcal{U}_{\text{dis.}})$); in (d), agents trained by exploiting the atomic structure of NE converge well: By waiting until a critical time to reveal his target, P1 wins by a much larger margin than in the complete-info setting, fully exploiting his information advantage. 
    % Trajectory color in (d) indicates belief dynamics.}
    \label{fig:overview_fig}
    \vspace{-0.15in}
\end{figure}

% \begin{figure}
%     \centering
%     \includegraphics[width=\linewidth]{figures/new_fig_1.pdf}
%     \caption{Try-1}
%     \label{fig:overview_fig}
% \end{figure}

In this paper, we address this scalability challenge for an K subset of 2p0s differential games where the informed player (P1) knows the payoff type while the uninformed player (P2) only holds a public belief $p_0 \in \Delta(I)$ over the finite set of $I$ possible types. At the beginning of the game, nature draws a game type according to $p_0$ and informs P1 about the type. As the game progresses, the public belief $p$ about the true game type is updated from $p_0$ based on the action sequence taken by P1 and its strategy via the Bayes' rule. P1's (resp. P2's) goal is to minimize (resp. maximize) the expected payoff over $p_0$. 
Due to the zero-sum nature, P1 may need to delay information release or manipulate P2's belief to take full advantage of information asymmetry.
While restricted, such games represent a wide range of attack-defense scenarios including football set-pieces where the attacker has private information about which play is to be executed, and missile defense where multiple potential targets are concerned. The setting of one-sided information, i.e., P1 knows everything about P2, is necessary for P2 to derive defense strategies in risk-sensitive games. We call this focused set of games ``\textbf{2p0s1}''.

We claim the following contributions: 
% \vspace{-0.1in}
\begin{itemize}
% \vspace{-0.05in}
    \item We prove two unique Nash equilibrium (NE) structures for 2p0s1: (1) The equilibrium behavioral strategy for P1 (resp. P2) is $I$-atomic (resp. $(I+1)$-atomic) on their continuous action space, and (2)
    the equilibrium strategies for P1 and P2 can be computed via separated primal and dual reformulations of the game. Together, these structures collapse the game-tree complexity to at most $I^K$ for P1 and $(I+1)^K$ for P2. In comparison, solving the same game through the lens of IIEFG would have a game-tree complexity of $U^{2K}$, with a discretized action size of $U$ (Fig.~\ref{fig:overview_fig}a).
    
    \item We demonstrate how this structural knowledge can significantly accelerate game solving: (1) For value and policy approximation settings where the ground-truth NE is available, we achieve qualitative improvements on solution accuracy and efficiency from SOTA normal- and behavioral-form solvers (CFR+, MMD, Deep-CFR, JPSPG, PPO, R-NaD). (2) We further demonstrate the practical value of the equilibrium structure of 2p0s1 by solving an American football setting where the attacking team needs to strategically conceal their true intention between ``RB power push'' and ``QB throw''. While this IIEFG has a complexity of $10^{440}$ even with a coarse mesh of 10 values per action dimension, the atomic structure of its NE makes the game solvable in less than 30 minutes on a M1 Pro Macbook. See Fig.~\ref{fig:overview_fig}b,c.
\end{itemize}

%%%%%%%%%%%%%%%%%%%%%%%%%%%%%%%%%%%%%%%%%%%%%
\vspace{-0.15in}
\section{Related Work}
\vspace{-0.1in}
\paragraph{2p0s games with incomplete information.}
Games where players have missing information only about the game types are called \textit{incomplete-information}. These are a special case of imperfect-information games where nature only plays a chance move at the beginning~\cite{harsanyi1967games}. The seminal work of \cite{aumann1995repeated} developed equilibrium strategies for a repeated game with one-sided incomplete information through the ``Cav u'' theorem,
which reveals that belief-manipulating behavioral strategies optimize value through value convexification. 
Building on top of this, \cite{de1996repeated} introduced a dual game reformulation where the behavioral strategy of P2 becomes Markov. This technique helped \cite{cardaliaguet2007differential,ghimire24a} to establish the value existence proof for 2p0s differential games with incomplete information. Unlike repeated games where belief manipulation occurs only in the first round of the game, differential games may have multiple critical time-state-belief points where belief manipulation is required to achieve equilibrium, depending on the specifications of system dynamics, payoffs, and state constraints~\citep{ghimire24a}. Our paper builds on top of \cite{cardaliaguet2007differential,ghimire24a}, providing a thorough analysis of the convexification nature of behavioral strategies and, as a consequence, their atomic structure. 

% \vspace{-0.12in}
\paragraph{IIEFGs.} IIEFGs are partially-observable stochastic games (POSGs) with finite horizons. 
% Since any 2p0s IIEFG with finite action sets has a normal-form formulation, a Nash equilibrium always exists in the space of mixed strategies. 
Significant efforts have been taken to approximate equilibrium of large IIEFGs with finite $U$~\citep{koller1992complexity, billings2003approximating, gilpin2006finding, gilpin2007gradient, sandholm2010state, pluribus,brown2020combining, stratego, sog}.
% leading to algorithms with sublinear or linear convergence rates. 
% Let $U$ be the discrete action size. 
Regret matching algorithms~\citep{zinkevich2007regret, lanctot2009monte, abernethy2011blackwell, brown2019deep, tammelin2014solving,johanson2012finding} have computational complexity $\mathcal{O}(\textcolor{red}{U}\varepsilon^{-2})$ to $\varepsilon$-Nash, while gradient-based solvers~\citep{pmlr-v15-mcmahan11b, perolat2021poincare, sokota2022unified} have $\mathcal{O}\left(\ln(\textcolor{red}{U})\varepsilon^{-1}\ln\left(\varepsilon^{-1}\right)\right)$ to $\varepsilon$-QRE. All these complexities increase with $U$, and convergence guarantee only exists if the NE lies in the \textit{interior} of the simplex $\Delta(U)$~\citep{perolat2021poincare}. 
Critically, this latter assumption does not hold for 2p0s1, as we explain in Sec.~\ref{sec:splitting}. 
For games with continuous action spaces, existing pseudo-gradient based approaches~\citep{martin2023finding, martin2024joint} lack convergence guarantee, and perform poorly in our case studies in both normal- and extensive-form settings.
\paragraph{Multigrid for value approximation.} 
Since solving 2p0s1 essentially requires solving a Hamilton-Jacobi PDE, we briefly review multigrid methods for accelerating PDE solving\citep{trottenberg2000multigrid}. In a typical V-cycle solver~\citep{braess1983new}, a fine-mesh solve is first performed briefly, the residual is then restricted to a coarser mesh where a correction is solved and prolongated back to the fine mesh. Essentially, multigrid uses coarse solves to reduce the low-frequency approximation errors in the PDE solution at low costs, leaving only high-frequency errors to be resolved through the fine mesh. 
Multigrid was successfully applied to solving Hamilton-Jacobi PDEs for optimal control and differential games~\citep{han2013multigrid}, although characterizing its computational complexity for such nonlinear PDEs is yet to be done. 
During value approximation, we extend multigrid to solving Hamilton-Jacobi PDEs underlying 2p0s1. 
%%%%%%%%%%%%%%%%%%%%%%%%%%%%%%%%%%%%%%%%%%%%%

% \section{Notations}  % adding notations might be a good idea?
%%%%%%%%%%%%%%%%%%%%%%%%%%%%%%%%%%%%%%%%%%%%%%%
\vspace{-0.1in}
\section{Problem Statement} \label{sec:diffgameintro}
\vspace{-0.1in}
We denote by $\Delta(I)$ the simplex in $\mathbb{R}^I$, $[I] :=\{1,...,I\}$, $a[i]$ the $i$th element of vector $a$, $\nabla_p V$ and $\partial_p V$ the respective gradient and subgradient of function $V$ with respect to $p$. $\|\cdot\|_2$ is the $l_2$-norm and $\|v\|_A := \sqrt{v^T A v}$. Consider a time-invariant dynamical system that defines the evolution of the joint state $x \in \mathcal{X} \subseteq \mathbb{R}^{d_x}$ of P1 and P2 with control inputs $u \in \mathcal{U} \subseteq \mathbb{R}^{d_u}$ and $v \in \mathcal{V} \subseteq \mathbb{R}^{d_v}$, respectively:
% \vspace{-0.05in}
\begin{equation}
    \dot{x}(t) = f(x(t), u, v).
    \label{eq:system}
% \vspace{-0.05in}
\end{equation}
The initial belief of players is set to nature's distribution about the game type. 
Denote by $\{\mathcal{H}^i_r\}^I$ the joint sets of $I$ behavioral strategies of P1, and $\mathcal{Z}_r$ the set of behavioral strategies of P2. 
The subscript $r$ indicates the random nature of behavioral strategies: 
At any infostate $(t,x,p) \in [0,T] \times \mathcal{X} \times \Delta(I)$ and with type $i$, P1 draws an action based on $\eta_i \in \mathcal{H}_r^i$, which is a probability measure over $\mathcal{U}$ parameterized by $(t,x,p)$. P2's strategy $\zeta \in \mathcal{Z}_r$ is a probability measure over $\mathcal{V}$ independent of $i$.
Denote by $X_{t_1}^{t_0, x_0, \eta_i, \zeta}$ the random state arrived at $t_1 \in (t_0, T]$ from $(t_0, x_0)$ following $(\eta_i, \zeta)$ and the system dynamics in Eq.~\ref{eq:system}.
With mild abuse of notation, let $(\eta_i(t), \zeta(t))$ denote the random controls at time $t$ induced by $(\eta_i, \zeta)$. 
P1 accumulates a running cost $l_i(x(t), \eta_i(t), \zeta(t))$ during the game and receives a terminal cost $g_i(X_{T}^{t_0, x_0, \eta_i, \zeta})$.
Together, a type-$i$ P1 minimizes:
\vspace{-0.05in}
\begin{equation*}
    J_i(t_0,x_0;\eta_i,\zeta) := \mathbb{E}_{\eta_i,\zeta} \left[g_i\left(X_T^{t_0, x_0, \eta_i, \zeta}\right) + \int_{t_0}^T l_i(x(s), \eta_i(s), \zeta(s)) ds\right],
\end{equation*}
while P2 maximizes $J(t_0,x_0,p_0;\{\eta_i\},\zeta) = \mathbb{E}_{i\sim p_0}[J_i]$.
We say the game has a value $V$ if and only if the upper value $V^+(t_0, x_0, p_0) = \inf_{\{\eta_i\}} \sup_{\zeta} J$
and the lower value $V^-(t_0, x_0, p_0) = \sup_{\zeta}\inf_{\{\eta_i\}} J$
are equal: $V = V^+ = V^-$.
The game is proven to have a value under the following sufficient conditions~\citep{cardaliaguet2007differential}:
\begin{enumerate}[label=A\arabic*.]
% \vspace{-0.1in}
    \item $\mathcal{U} \subseteq \mathbb{R}^{d_u}$ and $\mathcal{V} \subseteq \mathbb{R}^{d_v}$ are compact and finite-dimensional.
    \item $f: \mathcal{X} \times \mathcal{U} \times \mathcal{V} \rightarrow \mathcal{X}$ is $C^1$ and has bounded value and first-order derivatives. 
    % We further assume $f$ to be control-affine, which is a common setting for differential games.
    \item $g_i: \mathcal{X} \rightarrow \mathbb{R}$ and $l_i: \mathcal{X} \times \mathcal{U} \times \mathcal{V} \rightarrow \mathbb{R}$ are bounded and Lipschitz continuous.  
    % \item  \in C^2(\mathcal{U} \times \mathcal{V})$ is $\gamma_u$-strongly convex in $\mathcal{U}$ and $\gamma_v$-strongly concave in $\mathcal{V}$.
    \item Isaacs' condition holds for the Hamiltonian $H: \mathcal{X} \times \mathbb{R}^{d_x} \times \Delta(I) \rightarrow \mathbb{R}$:
    \vspace{-0.05in}\begin{equation}\label{eq:isaacs}
        \begin{aligned}
            H (x, \xi, p) &:= \min_{u \in \mathcal{U}} \max_{v \in \mathcal{V}} f(x, u, v)^\top \xi + \mathbb{E}_{i \sim p}[l_i(x, u, v)] \\
            &= \max_{v \in \mathcal{V}} \min_{u \in \mathcal{U}}f(x, u, v)^\top \xi + \mathbb{E}_{i\sim p}[l_i(x, u, v)].
        \end{aligned}
    % \vspace{-0.05in}
    \end{equation}
     Isaacs' condition allows any complete-information versions of this game to have \textit{pure} Nash equilibria, including \textit{nonrevealing} games where neither player knows the actual game type.
    \item Both players have full knowledge about $f$, $\{g_i\}_{i=1}^I$, $\{l_i\}_{i=1}^I$, $p_0$. Control inputs and states are fully observable. Players have perfect recall. 
\end{enumerate}
% \vspace{-0.25in}
Our goal is to compute a Nash equilibrium (NE) $(\{\eta_i^\dagger\}, \zeta^\dagger)$ that attains $V$, given the game $G=\{\mathcal{X},(\mathcal{U},\mathcal{V}),\{g_i\},\{l_i\}, f, T\}$.

%%%%%%%%%%%%%%%%%%%%%%%%%%%%%%%%%%%%%%%%%%%%%%%%%%%%%%%%%%%%%%
% \vspace{-0.1in}
\section{A Primal-Dual Reformulation of the Game}
\label{sec:splitting}
% \vspace{-0.1in}
% Since $G$ only satisfies a \textit{sub}-dynamic programming principle due to its incomplete information~\citep{cardaliaguet2007differential}, a Bellman backup for computing its equilibrium strategies and value is not readily available. To this end, 
In this section, we introduce discrete-time primal and dual reformulations of $G$, denoted by $G_\tau$ and $G_\tau^*$, respectively, for which dynamic programming principles (DP) exist. We show that P1's equilibrium behavioral strategy $\{\eta_{i,\tau}^\dagger\}$ in $G_\tau$ is $I$-atomic, i.e., $\eta_{i,\tau}^\dagger$ concentrates on at most $I$ actions in $\mathcal{U}$, and P2's strategy $\zeta^\dagger_\tau$ in $G_\tau^*$ is $(I+1)$-atomic. Then we show that $(\{\eta_{i,\tau}^\dagger\}, \zeta^\dagger_\tau)$ approaches the Nash equilibrium of the differential game $G$ as the time interval $\tau \rightarrow 0^+$. 

\vspace{-0.1in}
\paragraph{The primal game $G_\tau$.} $G_\tau$ is a discrete-time Stackelberg version of $G$ where P2 plays a pure best response \textit{after} P1 announces its next action. Let the value of $G_\tau$ be $V_\tau$. $V_\tau$ satisfies the following DP for $(t,x,p) \in [0,T]\times \mathcal{X} \times \Delta(I)$:
% \vspace{-0.05in}
\begin{equation}
\begin{aligned}
    V_\tau(t,x,p) =\min_{\{\eta_i\} \in \{\mathcal{H}_r\}^I}
    \mathbb{E}_{i \sim p,u \sim \eta_i}\Big[
    &\max_{v \in \mathcal{V}} V_\tau(t + \tau, x'(u,v), p'(u))
    + \tau l_i(x, u,v) \Big],
    % V_\tau(t,x,p) =\min_{\{\eta_i\} \in \{\mathcal{H}_r\}^I}
    % \mathbb{E}_{u \sim \eta_i}\Big[
    % &\max_{v \in \mathcal{V}} V_\tau(t + \tau, x'(u,v), p'(u))
    % + \tau \mathbb{E}_{i \sim p'(u)}[l_i(x, u,v)] \Big],
\end{aligned}
    \label{eq:primal-subdp}
% \vspace{-0.05in}
\end{equation}
with a terminal boundary $V_\tau(T,x,p) = \sum_i p[i]g_i(x)$. For small enough $\tau$, $x'(u,v) = x + \tau f(x,u,v)$. $p'(u)$ is the Bayes update of the public belief after P1 announces $u$: $p'(u)[i] = \eta_i(u;t,x,p) p[i]/\bar{\eta}(u;t,x,p)$,
where $\bar{\eta}(u;t,x,p) = \sum_{i\in[I]} \eta_i(u;t,x,p)p[i]$ is the marginal over $\mathcal{U}$ across types.
Note that P2's equilibrium behavioral strategy cannot be derived from Eq.~\ref{eq:primal-subdp}.

\vspace{-0.1in}
\paragraph{The dual game $G^*_\tau$.} For P2's strategy, we need a separate DP where P2 announces its next action and P1 best responses to it. We do so by first introducing the convex conjugate $V^*$ of the value:
% \vspace{-0.1in}
\begin{equation}\label{eq:dual_value}
% \small
    \begin{aligned}
        &V^*(t_0,x_0,\hat{p}_0) :=  \max_{p} p^T \hat{p}_0 - V(t_0,x_0,p) =  \max_{p} p^T \hat{p}_0 -  \sup_{\zeta \in \mathcal{Z}_r}\inf_{\{\eta_i\} \in \{\mathcal{H}_r\}^I} \mathbb{E}_{\eta_i, \zeta, i} \Big[J_i \Big] \\
        &=  \max_{p} \inf_{\zeta \in \mathcal{Z}_r} \sup_{\{\eta_i\} \in \{\mathcal{H}_r\}^I} p^T \hat{p}_0 -  \mathbb{E}_{\eta_i, \zeta, i} \Big [J_i \Big] =  \inf_{\zeta \in \mathcal{Z}_r} \sup_{\eta \in \mathcal{H}} \max_{i \in \{1, \dots, I\}} \Bigg\{\hat{p}_0[i] - \mathbb{E}_{\zeta} \Big[ J_i \Big]\Bigg\}.
    \end{aligned}
\end{equation}
% The last step of Eq.~\ref{eq:dual_value} uses the linearity of the payoff with respect to $p$ and again the fact that best responses are always pure (thus $\eta$ belongs to the pure strategy set $\mathcal{H}(t_0)$ rather than the random strategy set $\mathcal{H}_r(t_0)$). 
Eq.~\ref{eq:dual_value} describes a dual game $G^*$ with complete information, where the strategy space of P1 becomes $\mathcal{H} \times [I]$, i.e.,
the game type is now chosen by P1 rather than the nature. We prove in App.~\ref{app:primaldual} that P2's equilibrium in the dual game is also an equilibrium in the primal game if $\hat{p}_0 \in \partial_p V(t_0,x_0,p_0)$, and such $\hat{p}_0[i]$ represents the loss of type-$i$ P1 should it play best responses to P2's equilibrium strategy. Therefore $\hat{p}_0[i] - \mathbb{E}_{\zeta}[J_i]$ measures P2's risk, and P2's equilibrium strategy minimizes the worst-case risk across all game types. 
% Similar to the primal game, this dual game also satisfies a \textit{sub}-dynamic programming principle. Therefore, 
We now introduce a discrete-time version of the dual game $G^*_\tau$ where P1 plays a pure best response after P2 announces their action at each time step. Let the value of $G^*_\tau$ be $V_\tau^*$, the dual DP is:
\begin{equation}
\begin{aligned}
    &V_\tau^*(t,x,\hat{p}) = \min_{\zeta, \hat{p}'(\cdot)} \mathbb{E}_{v \sim \zeta}\left[
    \max_{u \in \mathcal{U}} V_\tau^*(t + \tau, x'(u,v), \hat{p}'(v)-\tau l(x, u,v))\right],
\end{aligned}
    \label{eq:dual-subdp}
\end{equation}
with a terminal boundary $V^*(T,x, \hat{p}) = \max_{i \in [I]} \{\hat{p}[i] - g_i(x)\}$. Here $\hat{p}'(\cdot): \mathcal{V} \rightarrow \mathbb{R}^{I}$ is constrained by $\mathbb{E}_{v \sim \zeta} [\hat{p}'(v)] = \hat{p}$ (similar to the martingale nature of $p$ in $G_\tau$), and $l(u,v)[i] = l_i(u,v)$.

% \vspace{-0.1in}
\paragraph{Equilibrium strategies of $G_\tau$ and $G^*_\tau$ are atomic.} Our first theoretical result is Thm.~\ref{thm:splitting}, which states that P1's strategy that solves $G_\tau$ is $I$-atomic, and P2's strategy that solves $G^*_\tau$ is $(I+1)$-atomic:
\begin{theorem}\label{thm:splitting}
    The RHS of Eq.~\ref{eq:primal-subdp} can be reformulated as
    % \vspace{-0.1in}
        \begin{equation}\label{eq:opt_prob} \tag{$\text{P}_1$}
        % \small
        \begin{aligned}
            &\min_{\{u^k\}, \{\alpha_i^k\}} \max_{\{v^k\}} \quad \sum_{k=1}^{I}\lambda^k \Big(V(t+\tau, x + \tau f(x,u^k,v^k), p^k)
            + \tau \mathbb{E}_{i \sim p^k} [l_i(x, u^k, v^k)]\Big)\\
            &\text{s.t.} \quad u^k \in \mathcal{U}, ~v^k \in \mathcal{V}, \alpha_i^k \in [0, 1], ~
            \sum_{k=1}^I \alpha_i^k = 1, ~\lambda^k = \sum_{i=1}^I \alpha_i^k p[i], ~p^k[i] = \frac{\alpha_i^k p[i]}{\lambda^k}, \quad \forall i, k \in [I],
        \end{aligned}
        \end{equation}
    i.e., $\eta_{i,\tau}^\dagger$ concentrates on actions $\{u^k\}_{k=1}^I$ for $i \in [I]$. The RHS of Eq.~\ref{eq:dual-subdp} can be reformulated as
    % \vspace{-0.1in}
    \begin{equation}\label{eq:opt_prob_dual} \tag{$\text{P}_2$}
    % \small
    \begin{aligned}
        &\min_{\{v^k\}, \{\lambda^{k}\}, \{\hat{p}^k\}} \max_{\{u^k\}} \sum_{k=1}^{I+1}\lambda^k \left(V^*(t+\tau, x + \tau f(x,u^k,v^k), \hat{p}^k - \tau l(x, u^k, v^k))\right)\\
        &\text{s.t.} \quad u^k \in \mathcal{U},~ v^k \in \mathcal{V}, ~ \lambda^{k} \in [0, 1], ~\sum_{k=1}^{I+1} \lambda^{k}\hat{p}^k = \hat{p},~ \sum_{k=1}^{I+1}\lambda^k = 1,\quad k \in [I+1].\\
    \end{aligned}
    \end{equation}
\end{theorem}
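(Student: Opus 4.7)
The plan is to recognize the RHS of each DP as a convex-envelope (``splitting'') problem in the information variable --- $p$ for the primal and $\hat p$ for the dual --- and then apply a Carath\'eodory-type atomic bound to obtain the claimed $I$ and $I+1$ atoms.

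\textbf{Primal.} For (P$_1$), I would first fold P2 pointwise into
\[
    W(u,q) := \max_{v\in\mathcal V}\Big[V_\tau(t+\tau,x'(u,v),q) + \tau\,\mathbb E_{i\sim q}[l_i(u,v)]\Big],
\]
so the bracket inside the expectation in Eq.~\ref{eq:primal-subdp} becomes $W(u,p'(u))$. Parametrizing P1's behavioral strategy via the joint law $\pi(u,i)=\eta_i(u)p[i]$ with $\lambda^u=\bar\eta(u)$, $q_u=p'(u)$, Bayes yields $\sum_u \lambda^u = 1$ and $\sum_u \lambda^u q_u = p$, so the RHS of Eq.~\ref{eq:primal-subdp} rewrites as $\inf \sum_u \lambda^u W(u,q_u)$. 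Since the atomic contributions decouple, pushing $\inf_u$ inside the sum exhibits this quantity as the convex envelope at $p$ of $q\mapsto\inf_u W(u,q)$ on $\Delta(I)$. Because $\Delta(I)$ has affine dimension $I-1$ and continuity of $V_\tau$ in its belief argument (provided by A1--A4 and inductively by Thm.~\ref{thm:convergence}) renders the graph of $\inf_u W(u,\cdot)$ connected, the Fenchel--Bunt refinement of Carath\'eodory's theorem guarantees that the envelope is attained with at most $I$ atoms $(u^k,\lambda^k,p^k)_{k=1}^I$. Setting $\alpha_{ki}:=\lambda^k p^k[i]/p[i]$ turns $\sum_k \lambda^k p^k=p$ into $\sum_k\alpha_{ki}=1$ with $\alpha_{ki}\in[0,1]$, recovering exactly the feasibility conditions of (P$_1$); Isaacs' condition (A4) then implies the inner $\max_v$ is achieved by a pure best response $v^k$ per atom, yielding the stated min--max.

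\textbf{Dual.} The argument for (P$_2$) is structurally symmetric with P1 and P2 swapped and $\Delta(I)$ replaced by $\mathbb R^I$. Now P2 splits $\hat p$ into atoms $(v^k,\lambda^k,\hat p^k)$ satisfying $\sum_k \lambda^k \hat p^k = \hat p$, $\sum_k \lambda^k = 1$, and P1 best-responds atom by atom through
\[
    W^*(v,\hat p) := \max_u V_\tau^*\big(t+\tau,x'(u,v),\hat p-\tau l(u,v)\big).
\]
The RHS of Eq.~\ref{eq:dual-subdp} thus becomes the convex envelope of $\hat p\mapsto\inf_v W^*(v,\hat p)$ at $\hat p$. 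Because $\mathbb R^I$ has full dimension $I$ (no simplex constraint this time), Fenchel--Bunt now yields $I+1$ atoms, matching (P$_2$).

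\textbf{Main obstacle.} The delicate step is the convex-envelope identification: one must (i) legitimately push $\inf_u$ inside the atomic sum (via A1 compactness of $\mathcal U$ and a measurable selection argument), and (ii) verify the connectedness/continuity needed for Fenchel--Bunt to yield the sharper $I$ bound for the primal (and $I+1$ for the dual) rather than the generic Carath\'eodory bound of $I+1$ (resp.\ $I+2$). Both reduce to the continuity and boundedness supplied by A1--A4, applied to $V_\tau(t+\tau,\cdot,\cdot)$ via the inductive regularity behind Thm.~\ref{thm:convergence}.
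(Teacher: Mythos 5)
Your proposal is correct and takes essentially the same route as the paper's proof: both rewrite the RHS of each DP as the convexification of the non-revealing value $q \mapsto \min_{u} a(u,q)$ over the posterior (resp.\ of $\hat p \mapsto \min_v W^*(v,\hat p)$ over the dual variable) by disintegrating the informed player's strategy into a marginal over posteriors and a conditional over actions, then bound the number of atoms by a Carath\'eodory-type count ($I$ on $\Delta(I)$, $I+1$ on $\mathbb{R}^I$) and invoke Isaacs' condition for the per-atom inner best response, recovering $\alpha_{ki}=\lambda^k p^k[i]/p[i]$. The only cosmetic difference is that you justify the atom count via the Fenchel--Bunt refinement (requiring connectedness/continuity), whereas the paper uses the standard Carath\'eodory bound for convex envelopes directly.
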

\vspace{-0.1in}
\paragraph{Proof sketch.} (1) Using Isaacs' condition, we show that P2's best response in Eq.~\ref{eq:primal-subdp} is implicitly governed by P1's action $u$, and $u$ is in turn governed by the posterior belief $p'$. 
(2) With this insight, we can rewrite Eq.~\ref{eq:primal-subdp} as $V_\tau(t,x,p) = \min_{\nu} \int_{\Delta (I)} \tilde{V}_\tau(t,x,p')\nu(dp')$, where we control a pushforward density $\nu(dp')$ for the posterior belief to be $p'$.
$\nu$ is subjected to $\int_{\Delta(I)} p'\nu(dp') = p$.
For each $p'$, $(u(p'),v(p'))$ is found by solving $\tilde{V}(t,x,p') = \min_{u \in \mathcal{U}}\max_{v \in \mathcal{V}} V_\tau(t+\tau, x',p') + \mathbb{E}_{i\sim p'}l_i$. Here $\tilde{V}(t,x,p')$ is P1's loss should both players play pure strategies at $(t,x,p')$ within $[t, t+\tau)$, and its existence is guaranteed by Isaacs' condition. 
Since playing pure strategies does not change the public belief, we call $\tilde{V}_\tau$ the \textit{non-revealing} value.
(3) We can then show that $\min_{\nu} \int_{\Delta (I)} \tilde{V}_\tau(p')\nu(dp')$ is a convexification of $\tilde{V}_\tau$ in $\Delta(I)$. Since convexification requires at most $I$ vertices in $\Delta(I)$, $\nu$ is at most $I$-atomic. Since $\nu$ determines $\eta_{i,\tau}^\dagger$, the latter is also $I$-atomic. (4) A similar argument can be made for $G_\tau^*$, in which case the convexification is with respect to the dual variable $\hat{p}$. Since $\hat{p}$ is defined on $\mathbb{R}^I$ rather than constrained on $\Delta(I)$, $\zeta^\dagger$ is at most $(I+1)$-atomic. Proof in App.~\ref{app:splitting}.

\begin{wrapfigure}{r}{0.3\textwidth}
    \centering
    \vspace{-0.2in}
    % \includestandalone[width=\linewidth]{figures/convex_cave_icml}
    % \includegraphics[width=\linewidth]{figures/image.png}
    \includegraphics[width=\linewidth]{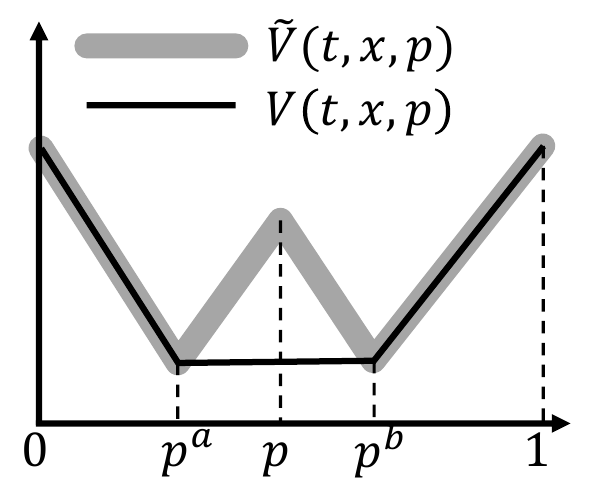}
    \vspace{-0.2in}
    \caption{Value convexification causes NE to be atomic.}
    \vspace{-0.25in}
    \label{fig:vex_cav}  
    \vspace{-0.1in}
\end{wrapfigure}
\vspace{-0.1in}
\paragraph{An example.} To support the proof sketch, we use Fig.~\ref{fig:vex_cav} to illustrate why an equilibrium behavioral strategy is atomic. Here $I=2$, thus the value is defined on $\Delta(2)$ for fixed $(t,x)$. To solve $\min_{\nu} \int_{\Delta (2)} \tilde{V}_\tau(p')\nu(dp')$, we scan the non-revealing value $\tilde{V}$ across $\Delta(2)$. 
One notices that if $\tilde{V}$ is not convex in $p$, it is always possible for P1 to achieve a lower loss by convexifying $\tilde{V}$ through a mixed strategy, leading to \ref{eq:opt_prob}. 
In the figure, P1 identifies $[\lambda^a, \lambda^b]^\top \in \Delta(2)$ and $\{p^a, p^b\}$ such that $\lambda^a p^a +  \lambda^b p^b = p$. Solving the non-revealing pure NE $(u^k, v^k)$ for each $k \in \{a,b\}$, P1's mixed strategy that convexifies $\tilde{V}$ is to play action $u^k$ with probability $\alpha^k_{i}= p^k[i]\lambda^k/p[i]$ if P1 is type-$i$. By announcing this strategy, the public belief shifts to $p^k$ via the Bayes' rule when P2 observes action $u^k$. As a result, P1 receives $V(p) = \lambda^a \tilde{V}(p^a) + \lambda^b \tilde{V}(p^b)$ on the convex hull of $\tilde{V}(p)$.
\vspace{-0.1in}
\paragraph{Remarks.} (1) The atomic structure of the equilibrium strategies was first discovered for 2p0s repeated games with one-sided information~\citep{aumann1995repeated,de1996repeated}. Our new contribution is in explaining its presence in differential games and in demonstrating its significant utilities in improving the effectiveness of multiagent reinforcement learning and model predictive control schemes. (2) P1's actions in 2p0s1 simultaneously advance system states and achieve signaling. The deterministic dynamics allows precise belief control (e.g., $(p^a,p^b)$ in Fig.~\ref{fig:vex_cav}) for value convexification. For games with stochastic dynamics/observations, however, P1 will not be able to precisely control the belief, and the convex Bellman operators (\ref{eq:opt_prob} and \ref{eq:opt_prob_dual}) become lower bounds of the value. Finding tight upper-bounding operators that preserve atomic NEs is left for future work.

% \vspace{-0.05in}
\paragraph{$(\{\eta_{i,\tau}^\dagger\}, \zeta_\tau^\dagger)$ approaches NE of $G$.}  Next we present Thm.~\ref{thm:convergence}, which proves that $(\{\eta_{i,{\tau}}^\dagger\}, \zeta_{\tau}^\dagger)$ computed from \ref{eq:opt_prob} and \ref{eq:opt_prob_dual} approaches the equilibrium of $G$ when $\tau$ is sufficiently small. The theorem uses P1's loss in $G$ when they adapt $\{\eta_{i,{\tau}}^\dagger\}$ to $G$ by playing a constant action within each time interval. This loss is denoted by $V_1(t,x,\hat{p}) := \max_{\zeta \in \mathcal{Z}} J(t,x,p;\{\eta_{i,{\tau}}^\dagger\}, \zeta)$. Similarly, P2's loss in $G^*$ when they adapt $\zeta_{\tau}^\dagger$ to $G$ is denoted by $V_1^*(t,x,p):=\max_{\{\eta_i\} \in \{\mathcal{H}^i\}^I} J^*(t,x,\hat{p};\{\eta_{i}\}, \zeta_{\tau}^\dagger)$.

% \begin{theorem}\label{thm:convergence}
%      If A1-5 hold, there exists $C>0$, such that $V_1(t,x,p) - V(t,x,p) \in [0, C(T-t)\tau]$ for any $(t,x,p) \in [0,T]\times \mathcal{X} \times \Delta(I)$, and 
%     $V_1^*(t,x,\hat{p}) - V^*(t,x,\hat{p}) \in [0, C(T-t)\tau]$ for any 
%     $(t,x,\hat{p}) \in [0,T]\times \mathcal{X} \times \mathbb{R}^I$.
% \end{theorem}

\begin{theorem}\label{thm:convergence}
     If A1-5 hold, then for any $\epsilon > 0$, there exists $\tau^*>0$, such that for any $\tau \in (0, \tau^*]$ and the corresponding $(\{\eta^\dagger_{i,\tau}\},\zeta^\dagger_\tau)$, $V_1(t,x,p) - V(t,x,p) \in [0, \epsilon]$ for any $(t,x,p) \in [0,T]\times \mathcal{X} \times \Delta(I)$, and 
    $V_1^*(t,x,\hat{p}) - V^*(t,x,\hat{p}) \in [0, \epsilon]$ for any 
    $(t,x,\hat{p}) \in [0,T]\times \mathcal{X} \times \mathbb{R}^I$.
\end{theorem}
\vspace{-0.05in}

\paragraph{Proof sketch.} We first note that $V \leq V_1$ because P1's NE in $G_\tau$ is not necessarily a NE in $G$, i.e., P1 can get a better value in $G$ using $\{\eta_i^\dagger\}$ than $\{\eta^\dagger_{i, \tau}\}$. We also have, from \cite{cardaliaguet2009numerical}, the uniform convergence of $V_\tau \rightarrow V$ as $\tau \rightarrow 0^+$. Then, it remains to show that $V_1$ is ``not far'' from $V_\tau$. To do so, we first show that the per-stage value gap between $V_1$ and $V_\tau$ is within $\mathcal O (\tau^2)$, which accumulates over time into $\mathcal O(\tau)$: $V_1 \leq V_\tau + \mathcal O(\tau)$. Then as $\tau \rightarrow 0^+$, combined with the uniform convergence of $V_\tau$, we conclude that the gap $V_1 - V \in [0, \epsilon]$. This implies that the equilibrium $\{\eta_{i, \tau}^\dagger\}$ derived in $G_\tau$ is $\epsilon$-optimal in the original game $G$ as long as $\tau$ is small enough. Full proof in App.~\ref{app:convergence}. The same techniques can be applied to $G^*_\tau$.  

% Using a step-by-step comparison along the discretization grid, we show that over
% each time interval of length $\tau$ the true continuous evolution is very close
% to the corresponding one-step discrete update; this creates only a very small
% (order $\tau^2$) discrepancy per step, and accumulating these per-step discrepancies
% over the remaining $\approx (T-t_k)/\tau$ steps yields a total gap of order
% $(T-t_k)\tau$ between $V_1(t_k,\cdot,\cdot)$ and $V_\tau(t_k,\cdot,\cdot)$ at
% grid times. To pass from grid times to an arbitrary initial time $t$, we ``snap''
% the comparison to the next grid point $t_k\ge t$: during the short interval of
% length $<\tau$, the extra running cost is at most order $\tau$ (bounded
% instantaneous loss) and the state can move only order $\tau$ (bounded dynamics),
% so this introduces only an additional order-$\tau$ term. Letting $\tau\to 0^+$
% and using the uniform convergence $V_\tau\to V$, all these error terms vanish,
% hence $V_1 - V \in [0,\epsilon]$ for $\tau$ small enough. 

With Thms.~\ref{thm:splitting} and \ref{thm:convergence}, and given that $G$ is differential ($\tau \rightarrow 0^+$), we now proved that NEs of $G$ are atomic, and can be approximated via Eqs.~\ref{eq:opt_prob} and \ref{eq:opt_prob_dual} with sufficiently small $\tau$. 
\section{Incorporating the Atomic Structure into MARL and Control}
% \vspace{-0.1in}
We study the efficacy of atomic NEs when applied to (1) value approximation, (2) model-free MARL, and (3) control settings. For (1), we approximate value and strategies in the entire $[0,T]\times \mathcal{X} \times \Delta(I)$ through \ref{eq:opt_prob} and \ref{eq:opt_prob_dual}, and introduce multigrid to further reduce the compute. We compare with discrete-action solvers CFR+, MMD, CFR-BR, and continuous-action solver JPSPG.
For (2) and (3), we solve NE strategies for a fixed initial $(t_0,x_0,p_0)$. We compare with discrete-action solvers MMD, PPO, and R-NaD. 
We call the proposed continuous-action mixed-strategy solvers CAMS, CAMS-DRL, and CAMS-MPC for value approximation, MARL, and control, respectively.  
% \vspace{-0.15in}

% \begin{algorithm}
%     \caption{CAMS for P1}
%     \label{alg:cams}
%     \SetKwComment{Comment}{/* }{ */}
%     \SetKwInput{KwIn}{Input}
%     \SetKwInput{KwOut}{Output}
%     \SetKwInput{Init}{Initialize}
%     \KwIn{time discretization $\tau$, terminal value $V(T,\cdot,\cdot)$, sample size $N$, minimax solver $\mathbb{O}$} 
%     % \KwOut{Value function $\boldsymbol{V}(t, \cdot, \cdot)$}
%     \Init{value network $\{\hat{V}_t\}_{t=0}^{T-\tau}$, training dataset $\mathcal{S} \leftarrow \emptyset$ }
%     $\mathcal{S} \leftarrow$ sample $N$ states $(x, p) \in \mathcal{X} \times \Delta(I)$\\
%     \For{$t$ in $\{T-\tau,...,0\}$}{
%         \For{$(x, p)$ in $\mathcal{S}$}{
%             $\vartheta \leftarrow \mathbb{O}(t, x, p)$  \Comment*[r]{Solution to \ref{eq:opt_prob}}
%             append $\{(t, x, p), \vartheta\}$ to $\mathcal{S}$\;
%         }
%         Fit $\hat{V}_{t}$ to $\mathcal{S}$    
%     }
%     % \vspace{-0.1in}
% \end{algorithm}
% \vspace{-0.2in}
% \vspace{-0.1in}
\subsection{Value approximation}
% \vspace{-0.08in}
\paragraph{CAMS for $G_\tau$.} We discretize time as $\{k\tau \}_{k=0}^{K}$, $\tau = T/K$. Let $\mathcal{S}=\{(x,p)_i\}_{i\in[|\mathcal{S}|]}$ be a collocation set. We solve \ref{eq:opt_prob} starting from $t=(K-1)\tau$ at all collocation points in $\mathcal{S}$. The resultant nonconvex-nonconcave minimax problems have size $(\mathcal{O}(I(I+d_u)), \mathcal{O}(Id_v))$ and are solved by DS-GDA~\citep{zheng2023universal}, which guarantees sublinear convergence on nonconvex-nonconcave problems.
To generalize value and strategies across $\mathcal{X}\times \Delta(I)$, a value network is trained on the minimax solutions and used to formulate the next round of minimax at $t-\tau$. $G^*_\tau$ is solved similarly.

\vspace{-0.15in}
\paragraph{Computational challenge.} 
% Our discussion so far addresses the scalability issue due to large or continuous action spaces. In particular, when the number of possible game types is small, i.e., $I^2 \ll |\mathcal{U}|+|\mathcal{V}|$, solving \ref{eq:opt_prob} and \ref{eq:opt_prob_dual} becomes more efficient than using IIEFG solvers. 
% The computational challenge, however, still remains for two reasons: 
From Thm.~\ref{thm:convergence}, large $K$ (small $\tau$) is required for strategies derived from \ref{eq:opt_prob} and \ref{eq:opt_prob_dual} to be good approximations of the NE. Yet suppressing the value prediction error at $t=0$ requires a computational complexity exponential in $K$. Specifically, let $\hat{V}_0(x,p):\mathcal{X} \times \Delta(I) \rightarrow \mathbb{R}$ be the trained value networks at $t=0$, we have the following result (see proof in App.~\ref{app:complexity}): 
\begin{theorem}
Given $K$, a minimax approximation error $\epsilon>0$, a prediction error threshold $\delta>0$, there exists $C\geq 1$, such that with a computational complexity of $\mathcal{O}(K^3C^{2K}I^2\epsilon^{-4}\delta^{-2})$, CAMS achieves 
\vspace{-0.08in}
\begin{equation}
    \max_{(x,p)\in \mathcal{X}\times \Delta(I)} |\hat{V}_0(x,p) - V(0,x,p)| \leq \delta.
\end{equation}
\end{theorem}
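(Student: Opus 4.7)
The plan is to separate the total sup-norm error at $t=0$ into (i) the DP discretization gap already controlled by Thm.~\ref{thm:convergence}, and (ii) the numerical error introduced by running Alg.~\ref{alg:cams} for $K$ backward steps, and then count the gradient-oracle work needed to drive (ii) below $\delta$. Concretely, I write
\begin{equation*}
|\hat{V}_0(x,p) - V(0,x,p)| \leq |\hat{V}_0(x,p) - V_\tau(0,x,p)| + |V_\tau(0,x,p) - V(0,x,p)|,
\end{equation*}
where the second term is at most $M_1 T\tau$ by Thm.~\ref{thm:convergence}, so choosing $K = \Omega(\delta^{-1})$ absorbs it into $\delta/2$. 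The first term is what the complexity bound must control.

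Next, I set up a per-step error recursion. Let $e_t := \sup_{(x,p)}|\hat{V}_t(x,p)-V_\tau(t,x,p)|$, and let $\mathcal{B}$ denote the backup operator on the RHS of \ref{eq:opt_prob}. Since $\hat{V}_t$ is obtained by (a) evaluating $\mathcal{B}[\hat{V}_{t+\tau}]$ up to minimax error $\epsilon$ at $N$ sampled points $(x,p)$ and (b) fitting a neural network to these targets with some $L^\infty$ error $\delta_t$, I obtain
\begin{equation*}
e_t \leq C\, e_{t+\tau} + \epsilon + \delta_t,
\end{equation*}
where $C\geq 1$ collects the Lipschitz-type amplification of $\mathcal{B}$ acting on the current value-function class (it depends on A2--A4, the geometry of the splitting from Thm.~\ref{thm:splitting}, and on converting $L^2$ training losses to $L^\infty$ guarantees). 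Unrolling $K$ steps and taking a worst-case $\delta_t \leq \delta_{NN}$ gives $e_0 \lesssim K\,C^K(\epsilon + \delta_{NN})$, so forcing $e_0 \leq \delta/2$ requires $\delta_{NN} = \mathcal{O}(\delta\, K^{-1} C^{-K})$ per step (and leaves $\epsilon$ as a parameter of the solver).

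I then translate these per-step accuracies into compute. Standard uniform-approximation bounds for bounded, Lipschitz targets on a compact domain yield an $L^\infty$-sample complexity of $N = \mathcal{O}(\delta_{NN}^{-2}) = \mathcal{O}(K^2 C^{2K}\delta^{-2})$ labelled samples per time step. Each label is produced by one $\epsilon$-approximate solve of \ref{eq:opt_prob}, which is a nonconvex--nonconcave minimax in $\mathcal{O}(I^2)$ decision variables (treating $d_u,d_v$ as constants). By the $\mathcal{O}(\epsilon^{-4})$ iteration complexity of DS-GDA~\citep{zheng2023universal} applied to this problem size, each solve costs $\mathcal{O}(I^2\epsilon^{-4})$ gradient evaluations. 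Multiplying $K$ backward steps, $N$ samples per step, and per-sample cost yields the stated $\mathcal{O}(K^3 C^{2K} I^2 \epsilon^{-4}\delta^{-2})$.

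The main obstacle is pinning down the constant $C$ and showing it can be chosen uniformly in $K$. Naively, $\mathcal{B}$ is $1$-Lipschitz in the sup-norm of its argument, which would collapse $C^{2K}$ to a polynomial factor; the genuine exponential dependence comes from (a) the cost of promoting finite-sample $L^2$ fits to worst-case $L^\infty$ guarantees on the full domain $\mathcal{X}\times\Delta(I)$, and (b) possible sensitivity of the convexification variables $\{\alpha_{ki},\lambda^k\}$ in \ref{eq:opt_prob} to perturbations of $V_\tau(t+\tau,\cdot,\cdot)$. Carefully bounding this sensitivity via the Lipschitz constants guaranteed by A2--A3 (propagated through $\text{ODE}(x,\tau,u^k,v^k;f)$ and through the Bayesian update of $p^k$) is what ultimately fixes the value of $C$ and justifies the final bound.
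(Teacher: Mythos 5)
Your proposal is essentially correct and reproduces the paper's arithmetic exactly (per\nobreakdash-step recursion with amplification $C\geq 1$, unrolled to $e_0\lesssim KC^K(\epsilon+\delta_{NN})$, hence $N=\mathcal{O}(K^2C^{2K}\delta^{-2})$ samples per step, times $K$ steps times the $\mathcal{O}(I^2\epsilon^{-4})$ DS-GDA cost per label), but it locates the constant $C$ in the wrong place. You attribute the exponential growth partly to Lipschitz amplification by the backup operator and to the sensitivity of the splitting variables $\{\alpha_{ki},\lambda^k\}$ to perturbations of $V_\tau(t+\tau,\cdot,\cdot)$, and you flag pinning this down as the main unresolved obstacle. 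The paper's Lemma~\ref{lemma:propagation1} shows that this step contributes \emph{no} amplification: since $\sum_k\lambda^k=1$ and the objective of \ref{eq:opt_prob} is an affine combination of next-step value evaluations, the propagated error enters with coefficient exactly $1$ plus an additive $\epsilon^{minmax}_t$, regardless of how the optimal $\{\alpha_{ki},\lambda^k\}$ move. The entire factor $C$ instead comes from the regression step: under an NTK linearization $\hat V_t(x)=r(x)^\top R^{-1}V_{\mathcal{S}}$, the paper defines $C:=\max_x\|r(x)^\top R^{-1}\|_1$, the worst-case $\ell_1$ norm of the interpolation weights (following \citet{zanette2019limiting}), which multiplies the target errors in Lemma~\ref{lemma:propagation2}; this gives a concrete existence argument for $C$ and explains why $C=1$ is attainable with a suitable linear architecture. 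Two smaller caveats: your claim that merely Lipschitz targets admit an $L^\infty$ rate of $N^{-1/2}$ is dimension-free and therefore too strong --- the paper obtains $\epsilon^{app}\propto N^{-1/2}$ only by assuming the target lies in the NTK-induced RKHS; and your extra triangle inequality through $V_\tau$ via Thm.~\ref{thm:convergence}, while a legitimate refinement the paper glosses over, forces $K=\Omega(\delta^{-1})$, which is not part of the theorem as stated (the paper implicitly measures error against the discrete-time DP value for the given $K$).
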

\vspace{-0.1in}
A similar result applies to the dual game. \citet{zanette2019limiting} discussed a linear value approximator that achieves $C=1$. However, their method requires solving a linear program (LP) for every inference $\hat{V}_t(x,p)$ if $(x,p)$ does not belong to the training set $\mathcal{S}$. In our context, incorporating their method would require auto-differentiating through the LP solver for each descent and ascent steps in minimax, which turned out to be too expensive. To this end, we introduce a multigrid scheme to reduce the cost for games with a large $K$.  

% Since $\hat{p}$ is not constrained to a simplex, P2 solves a harder problem than P1. Specifically, P2 needs to find at most $I+1$, rather than $I$, splitting points $\{\hat{p}^k\}_{k=1}^{I+1}$ in order to compute the convexification $V^*$. This is the extra cost P2 pays due to her information disadvantage. 

% \begin{figure}[!ht]
% \vspace{-0.12in}
% \centering
%     \includegraphics[width=\linewidth]{figures/overview_small.pdf}
%     \vspace{-0.2in}
%     \caption{SOTA algorithms like CFR cannot take advantage of the special structure of the differential games with incomplete information, i.e., P1 (resp. P2) has at most $I$ (resp. $I+1$) actions to mix at each infostate.}
%     \label{fig:overview}
%     \vspace{-0.12in}
% \end{figure}

% \vspace{-0.15in}
% \section{A Multigrid Approach}\label{sec:multigrid}
\vspace{-0.1in}
\paragraph{Multigrid.}
% We introduce a multigrid approach that accelerates value approximation through backward inductions on multiple time grids. 
Since strategies at time $t$ are implicitly nonlinear functions of the value at $t+\tau$, the Hamilton-Jacobi PDEs underlying \ref{eq:opt_prob} and \ref{eq:opt_prob_dual} are nonlinear. Our method extends the Full Approximation Scheme (FAS) for solving nonlinear PDEs~\citep{trottenberg2000multigrid, henson2003multigrid}. 
% FAS reduces the number of ``fine sweeps'' by shifting global error correction onto the cheaper coarse pass.
A two-grid FAS has four steps: (1) Restrict the fine-grid approximation and its residual; (2) solve the coarse-grid problem using the fine-grid residual; (3) compute the coarse-grid correction; (4) prolong the coarse-grid correction to fine-grid and add the correction to fine-grid approximation. 
For conciseness, we focus on the primal problem.
Let $\hat{V}_{t}^l$ be the value network for time $t$ on grid size (time interval) $l$.
Let $\mathcal{R}^{l}$ be the restriction operator to a coarser grid with size $2l$: $\mathcal{R}^l (\hat{V}_{t}^l) = (\hat{V}_{t}^l + \hat{V}_{t+l}^l)/2$ is the value restriction from $l$ to $2l$. 
% This restriction operator takes into account the backward induction nature of value functions.
% \vspace{-0.1in}
% \begin{equation}
% \mathcal{R}^t_l (\hat{V}_{t,l}) = \frac{\hat{V}_{t,l} + \hat{V}_{t+l,l}}{2}
% \end{equation}
Similarly, we define the prolongation operators $\mathcal{P}^{2l}$ as $\mathcal{P}^{2l}(\hat{V}_{t}^{2l}) = \hat{V}_{t}^{2l}$ if $t \in \mathcal{T}^{2l}$ or $\hat{V}_{t+l}^{2l}$ otherwise,
% \vspace{-0.1in}
% \begin{equation}
%     \mathcal{P}^{2l}(\hat{V}_{t}^{2l}) = \begin{cases}
%         \hat{V}_{t}^{2l}, & \text{ if } t \in \mathcal{T}^{2l}\\
%         \hat{V}_{t+l}^{2l}, & \text{ otherwise }
%     \end{cases},
%     % \vspace{-0.1in}
% \end{equation}
where $\mathcal{T}^{2l}:= \{n \cdot 2l: n \in \mathbb{N}_0, n < T/2l\}$. Let $\mathbb{O}^l(t,x,p;\hat{V})$ solve \ref{eq:opt_prob} at $(t,x,p)$ where $\hat{V}$ is the value at $t+l$, and outputs an approximation for $V(t,x,p)$. The dataset $\{(t,x^{(j)},p^{(j)},\mathbb{O}^l(t,x^{(j)},p^{(j)};\hat{V}_{t+l}^{l}))\}$ is used to train $\hat{V}_{t}^l(\cdot,\cdot)$. Let $r_{t}^l(x,p) = \hat{V}_{t}^l(x,p) - \mathbb{O}^l(t,x,p;\hat{V}_{t+l}^l)$ be the residual.
To achieve $r_{t}^l(x,p) \approx 0$ for all $(t,x,p) \in \mathcal{T}^l \times \mathcal{X}\times \Delta(I)$,
we restrict the fine grid approximations and residuals to the coarse grid and solve to determine the corrections. To do so, let $e^{l}_{t}(x,p)$ be the correction in grid $l$ at $(t,x,p)$. The coarse-grid problem is
% \vspace{-0.1in}
\begin{equation}
% \small
\begin{aligned}
\underbrace{\mathcal{R}^l r_t^l}_{\textcolor{red}{\text{residual}}} &= \underbrace{\mathbb{O}^{2l}(t,x,p;\mathcal{R}^l\hat{V}_{t+2l}^l + e^{2l}_{t+2l}) - \left(\mathcal{R}^l \hat{V}_t^l + e^{2l}_t(x,p)\right)}_{\textcolor{red}{\text{coarse-grid eq. w/ corrections}}}
     - \underbrace{\left(\mathbb{O}^{2l}(\mathcal{R}^l\hat{V}_{t+2l}^l) - \mathcal{R}^l \hat{V}_t^l\right)}_{\textcolor{red}{\text{coarse-grid eq. w/o corrections}}},
\end{aligned}
% \vspace{-0.1in}
\end{equation}
% \vspace{-0.05in}
where $e^{2l}_t(x,p)$ is computed backward from $T-2l$ using $e^{2l}_{T} = 0$:
\begin{equation}\label{eq:correction}
% \small
\begin{aligned}
    e^{2l}_t(x,p) = &\mathbb{O}^{2l}(t,x,p;\mathcal{R}^l\hat{V}_{t+2l}^l + e^{2l}_{t+2l}) - \mathbb{O}^{2l}(\mathcal{R}^l\hat{V}_{t+2l}^l) - \mathcal{R}^l r_t^l.
\end{aligned}
% \vspace{-0.1in}
\end{equation}
This correction ensures consistency: If $\hat{V}_{t}^l = V(t,\cdot,\cdot)$ for all $t \in \mathcal{T}^l$, $e^{2l}_t(\cdot,\cdot)=0$ for all $t \in \mathcal{T}^{2l}$.
The coarse grid corrections are prolonged to the fine grid to update the fine-grid value approximation. Note that from Eq.~\ref{eq:correction}, computing the coarse correction in our case requires two separate minimax calls with similar loss formulations. We further accelerate the multigrid solver by warm-starting these minimax problems using the recorded minimax solution derived from the fine grid (during the residual computation).     

\subsection{Model-free multiagent reinforcement learning}
% \vspace{-0.1in}
Recent studies \citep{rudolph2025reevaluating, sokota2022unified} showed that policy gradient methods for MARL, such as PPO and MMD, can effectively solve IIEFGs with properly tuned hyperparameters. 
We show that the atomic structure can be directly applied to this unified model-free framework with minimal code changes, while yielding significant solution improvement for 2p0s1.
% Following the implementation in \citet{rudolph2025reevaluating}, we also propose a Deep RL approach to solve 
% \vspace{-0.1in}
% \paragraph{CAMS-DRL.}
For CAMS-DRL, the policy network of P1 takes in the infostate $(t,x,p)$
and outputs $I$ logit vectors $\ell_k \in \mathbb{R}^{I}$ and $I$ action prototypes $u^k\in \mathcal{U}$ for $k \in [I]$. Each logit vector $\ell_k$ is transformed via softmax to define the behavioral strategy of type-$i$ P1, i.e., the probability $\eta_{i}(u^k;t,x,p)$ of choosing action $u^k$. 
% \vspace{-0.05in}
% \begin{equation}\label{eq:informed_policy}
%     \pi^i(u|t, x, p) = \sum_{k=1}^I\alpha^k_i(t, x, p)\mathcal{N}(u; \mu_i, \sigma_i^2)
% \end{equation}
% An advantage of formulating the policy network in this way is the ability to perform Bayes' belief updates within the policy network without having to formulate a separate learning problem. 
% Given the belief $p$ and the conditional probabilities $\alpha_i^k$, the posterior beliefs $p^k$ as a result of observing action $u^k$ for $k\in[I]$ are computed using Bayes' theorem. 
% Note that since the policy is modeled as a Gaussian mixture, the total entropy and log-probability (used in PG regularization) are the sum of the individual counterparts. 
The policy network for P2 takes in $(t, x, p)$ and outputs a single logit $\ell \in \mathbb{R}^{I+1}$ and action prototypes $\mu^k$ for $k \in [I+1]$. 
% The logit $\ell$ represents the probability of taking each action. 
We directly solve the NE of these policy models using PPO and MMD. 

% \vspace{-0.12in}
\subsection{Model predictive control}
% \vspace{-0.1in}
When dynamics $f$ is known and differentiable, and an initial state $(t_0,x_0,p_0)$ is given, we can formulate the primal (resp. dual) minimax objective as the sum over $I^K$ (resp. $(I+1)^K$) game tree paths (Fig.~\ref{fig:overview_fig}a). CAMS-MPC builds the computational graph for the entire tree where the policy networks follow CAMS-DRL,
and applies DS-GDA to this differentiable loss. This method is feasible thanks to the atomic structure of NEs and for small $I$. Since P1 in 2p0s1 games reveals their payoff mid-game, the game tree further collapses after information revelation. E.g., Hexner's game (see below) has a proven game-tree complexity of $I$, where P1 plays a fixed nonrevealing strategy before splitting to $I$ type-dependent revealing strategies. Due to this collapse, modeling strategies using neural networks turns out to be more effective for the convergence to NEs than using infostate-wise parameterization.

%%%%%%%%%%%%%%%%%%%%%%%%%%%%%%%%%%%%%%%%%%%%%%%%%%%%%%%%%%%%%%%%%%%%%%%%%%%%%%%%%%%
% \vspace{-0.1in}
\section{Empirical Validation}\label{sec:validation}
% \vspace{-0.1in}

\paragraph{Hexner's game.}
We introduce Hexner's game~\citep{hexner1979differential,ghimire24a} to compare CAMS variants with baselines on solution quality and computational cost. This game has an analytical NE. The dynamics is $\dot{x}_j = A_j x_j + B_j u_j$
for $j=[2]$, where $x_j \in \mathcal{X}_j$, $u_j \in \mathcal{U}_j$, and $A_j$ and $B_j$ are known matrices. The target state of P1 is $z\theta$ where $\theta$ is drawn with distribution $p_0$ from $\Theta$, $|\Theta| = I$, and $z \in \mathbb{R}^{d_x}$ is fixed and public. 
% Denote by $\eta_i(t)$ and $\zeta(t)$ the random actions at time $t$ induced by strategy pair $(\eta_i, \zeta)$. 
The expected payoff to P1 is:
% \vspace{-0.08in}
\begin{equation*} \label{eq:hexcost}
\small
    J(\{\eta_i\}, \zeta) = \mathbb{E}_{i \sim p_0} \left[\int_0^{T} \|\eta_i(t)\|_{R_1}^2 - \|\zeta(t)\|_{R_2}^2 dt\right. 
    + \left. \| x_1(T) - z\theta_i\|^2_{K_1} - \| x_2(T) - z\theta_i\|_{K_2}^2\Bigg],\right.
% \vspace{-0.05in}
\end{equation*}
where $R_1, R_2, K_1, K_2 \succ 0$ are control- and state-penalty matrices. The goal of P1 is to get closer to the target $z\theta$ than P2. To take information advantage, P1 needs to decide when to home-in to and reveal the target. 
\textbf{Analytical NE:}
There exists a critical time $t_r := t_r(T, \{A_j\}, \{B_j\}, \{R_j\}, \{K_j\})$. If $t_r \in (0, T)$, P1 moves towards $\mathbb{E}[\theta]$ as if it does not know the actual target until $t_r$ when it fully reveals the target, i.e., value convexification happens at $t_r$. If $t_r \leq 0$, P1 homes towards the actual target at $t=0$. P2's NE mirrors P1. See proof in App.~\ref{app:hexner}.
% \vspace{-0.15in}
\begin{figure}[h!]
% \begin{wrapfigure}{r}{0.5\textwidth}
    % \vspace{-0.0in}
    \centering
        \includegraphics[width=\linewidth]{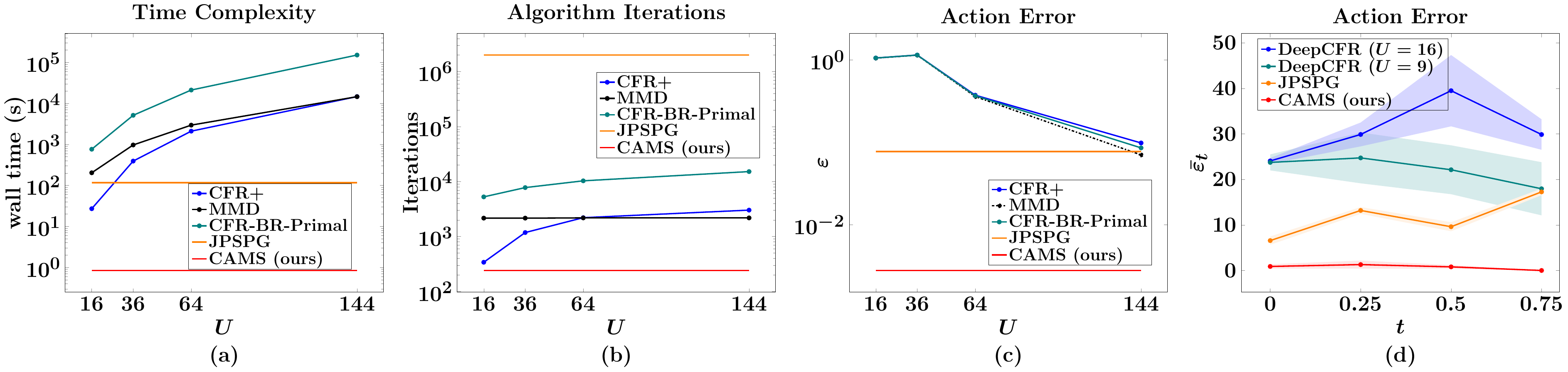}
    % \vspace{-0.15in}
    \caption{(a-c) Comparisons b/w CAMS, JPSPG, CFR+, MMD, CFR-BR-Primal on 1-stage Hexner's game. (d) Comparison b/w CAMS, JPSPG, and DeepCFR on 4-stage Hexner's w/ similar compute.}
    \label{fig:cfr+_compare}
    % \vspace{-0.25in}
% \end{wrapfigure}
\end{figure}
\vspace{-0.1in}

\begin{figure}[!b]
    \centering
    \vspace{-0.1in}
    \includegraphics[width=.9\linewidth]{figures/exploitability_comparison.png}
    \vspace{-0.1in}
    \caption{Exploitability comparison in the normal-form Hexner's game.}
    \vspace{-0.1in}
    \label{fig:exp_comp}
\end{figure}

% \vspace{-0.2in}
\subsection{Effect of atomic NE on value approximation} \label{sec:prev_baseline}
% \vspace{-0.1in}
\paragraph{Action-size-invariant convergence of CAMS.} We use a normal-form Hexner's game with $\tau = T$ and a fixed initial state $x_0 \in \mathcal{X}$ to demonstrate that baseline algorithms suffer from increasing discrete action sizes while CAMS does not. The baselines are SOTA normal-form solvers including CFR+~\citep{tammelin2014solving},  MMD~\citep{sokota2022unified}, JPSPG~\citep{martin2024joint}, and a modified CFR-BR~\citep{johanson2012finding} (dubbed CFR-BR-Primal), where we focus on converging P1's strategy and only compute P2's best responses. Among these, only JPSPG naturally handles continuous action spaces. All baselines except JPSPG are standard implementations in OpenSpiel~\citep{LanctotEtAl2019OpenSpiel}. JPSPG implementation details are in App.~\ref{app:jpspg}.
The normal-form primal game has a trivial ground-truth strategy where P1 goes directly to its target.
For visualization, we use $d_x = 4$ (position and velocity in 2D).  
For baselines except JPSPG, 
we use discrete action sets defined by 4 grid sizes so that $U = |\mathcal{U}_j| \in \{16, 36, 64, 144\}$. 
All algorithms terminate when a threshold of $NashConv(\pi_i) = \max_{\pi_{i}'} V_i(\pi_{i}') - V_i(\pi_i)$ is met. For conciseness, we only consider solving P1's strategy and thus use P1's NashConv. We set the threshold to $10^{-3}$ for baselines and use a more stringent threshold of $10^{-5}$ for CAMS. 
We then use DeepCFR and JPSPG as baselines for a 4-stage game where $T=1$ and $\tau = 0.25$. 
DeepCFRs were run for 1000 CFR iterations (resp. 100) with 10 (resp. 5) traversals for $U = 9$ (resp. $16$).  The wall-time costs for game solving are 17 hours using CAMS (baseline), 24 hours for JPSPG, 29 hours ($U=9$) and 34 hours ($U=16$) using DeepCFR, all on an A100 GPU. More details on experiment settings can be found in App. ~\ref{sec:DeepCFR}. Furthermore, JPSPG was run for $2\cdot 10^8$ iterations, where each iteration consists of solving a game with a random initial state and type, and performing a strategy update.

Fig.~\ref{fig:cfr+_compare} summarizes the comparisons with baselines. For the normal-form game, we compare both computational cost and the expected action error $\varepsilon$ from the ground-truth action of P1: $\varepsilon(x_0) := \mathbb{E}_{i \sim p_0} \Big[\sum_{k=1}^{|\mathcal{U}|}\alpha_{ki}||u_k - u_i^*(x_0)||_2\Big]$, where $u_i^*(x_0)$ is the ground truth for type $i$ at $x_0$. In the 4-stage game, we compare the expected action errors at each time-step: $\bar{\varepsilon}_t:= \mathbb{E}_{x_t \sim \pi}\Big[\varepsilon(x_t)\Big]$, where $\pi$ is the strategy learned by the respective learning method. For each strategy, we estimate $\{\bar{\varepsilon}_t\}_{t=1}^4$ by generating 100 trajectories with initial states uniformly sampled from $\mathcal{X}$. In terms of computational cost, all baselines (except JPSPG) have complexity and wall-time costs increasing with $U$, while CAMS is invariant to $U$. We also compare exploitability of CFR+, MMD, and JPSPG against CAMS for the normal-form Hexner's game in Fig. \ref{fig:exp_comp}. For a fair comparison, best responses are computed in the continuous action spaces.

With similar or less compute, CAMS achieves significantly better strategies than DeepCFR and JPSPG in the 4-stage game. Sample trajectories for the 4-stage game are visualized in App.~\ref{app:baselines}. Fig.~\ref{fig:primal_and_primal-dual} compares the ground-truth vs. approximated NEs for 10-stage games with different initial states. While approximation errors exist, CAMS successfully learns the target-concealing behavior of P1. Averaging over 50 trajectories derived from CAMS, P1 conceals the target until $t_r = 0.60s \pm 0.06s$ (compared to the ground-truth $t_r = 0.5s$).  

\begin{wraptable}{r}{0.45\textwidth}
{\small
    \vspace{-0.15in}
    \centering
        \caption{Runtime comparison of CAMS with and without multigrid}
        \vspace{-0.05in}
    \begin{tabularx}{\linewidth}{ X | X | X}
    \hline
         \# time steps & no multigrid & multigrid $\downarrow$\\
         \hline
         4 & 9.3 hrs & \textbf{2.3} hrs\\
         10 & 27.6 hrs & \textbf{10.9} hrs\\
         16 & 46.2 hrs & \textbf{17.8} hrs\\
    \hline
    \end{tabularx}
    \label{tab:runtime_multigrid}
    % \vspace{-0.1in}
}
\end{wraptable}

\vspace{-0.1in}
\paragraph{CAMS scalability with multigrid.}  We solve Hexner's games using CAMS with and without multigrid to demonstrate the scalability of our approach and the effect of multigrid. We report the runtime on one H100 GPU for 4-, 10-, and 16-stage games in Tab.~\ref{tab:runtime_multigrid}. We run the 2-level multigrid on the 4- and 10-stage games, and 4-level multigrid on the 16-stage game (see Alg.~\ref{alg:n_multigrid} for pseudo code). We report resulting trajectories in App.~\ref{app:multigrid}.

\begin{figure}
    \centering
    \includegraphics[width=\linewidth]{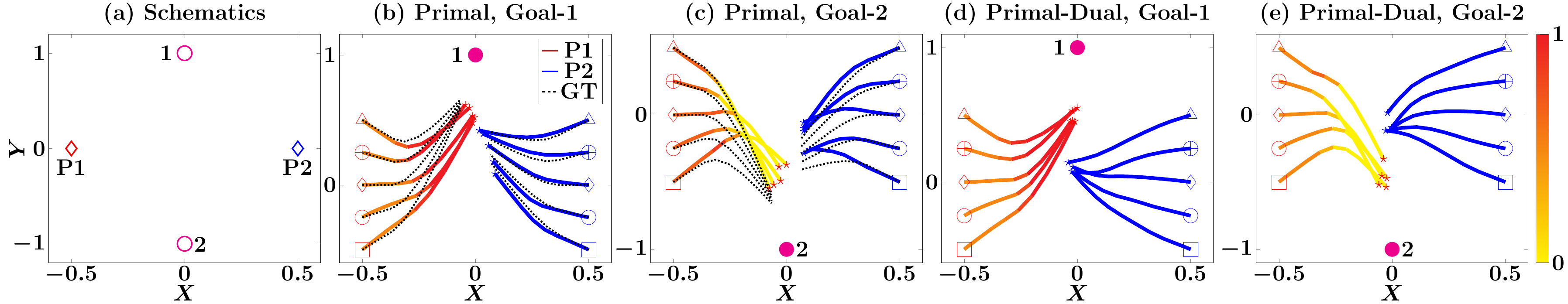}
    \caption{(a) Hexner's game schematics: one goal is selected out of two possible goals: Goal-1 and Goal-2, and communicated to P1. (b-e) Sample trajectories for the primal game (b-c) where P1 plays Nash and P2 plays best response, and primal-dual game (d-e) where both players play Nash. Dotted lines are ground-truth Nash. Color shades indicate evolution of public belief (Pr[Goal is 1]). Filled \textcolor{magenta}{Magenta circle} represents the true goal. Initial position pairs are marked with the same markers.}
    \label{fig:primal_and_primal-dual}
    \vspace{-0.1in}
\end{figure}

% \begin{figure}[!h]
%     \centering
%     \begin{minipage}{\textwidth}
%         \centering
%         \includegraphics[width=\linewidth]{figures/primal_games_fix.pdf}
%     \end{minipage}
%     \begin{minipage}{\textwidth}
%             \centering        \includegraphics[width=\linewidth]{figures/primal_dual_new_plot.pdf}
%     \end{minipage}
%     % \vspace{-0.1in}
%     \caption{Sample trajectories for the primal game (a-d) where P1 plays Nash and P2 plays best response, and primal-dual game (e-h) where both players play Nash. Cols 1 \& 2 are unconstrained, cols 3 \& 4 are w/ collision constraint. Dotted lines are ground-truth Nash. Color shades indicate evolution of public belief ($1$ means Goal-1). Initial position pairs are marked with same markers.}
%     \label{fig:primal_and_primal-dual}
%     % \vspace{-0.2in}
% \end{figure}
% \vspace{-0.08in}

\vspace{-0.1in}
\subsection{Effect of atomic NE on model-free MARL}
\vspace{-0.1in}
For model-free MARL, we compare CAMS-DRL with MMD, PPO and R-NaD with discrete actions of size 100 formed by pairing each of the ten linearly spaced $x$-direction acceleration between $(-1, 1)$ and $y$-direction acceleration between $(-4, 4)$. We test the policy convergence in the normal-form game by comparing P1's learned policy every 8192 steps (corresponds to 1 iteration in Fig.~\ref{fig:camsdrl_n_mpc}(a)) against the ground truth policy. MMD, PPO, and R-NaD are implemented as in \cite{rudolph2025reevaluating}. Results in Fig.~\ref{fig:camsdrl_n_mpc} (a-b), which uses the same comparison metric as outlined in Sec.~\ref{sec:prev_baseline}, show that CAMS-DRL approximates NEs accurately, while PPO and MMD fail.
R-NaD, on the other hand, shows better performance than MMD and PPO with little to no hyperparameter tuning (see App.~\ref{app:sweep}). 
This contradicts with observations in \cite{rudolph2025reevaluating}.
We conjecture that PG methods such as PPO and MMD have difficulty converging to non-interior solutions, which is the case in 2p0s1 due to the atomic structure of NEs. 
We performed the same comparison for a 4-stage game, with a discrete action size of 100 composed by pairing linearly spaced $x$- and $y$-direction accelerations, all between $(-12, 12)$. Results in Fig.~\ref{fig:camsdrl_n_mpc}b show significant improvement in solution accuracy when the atomic structure is exploited.

\begin{figure}
    % \vspace{-0.2in}
    \centering
    % \includestandalone[width=\linewidth]{figures/time_compare}
    % \includegraphics[width=\linewidth]{figures/compare_w_cfrs}
    % \includegraphics[width=\linewidth]{figures/compare_all}
    \includegraphics[width=\linewidth]{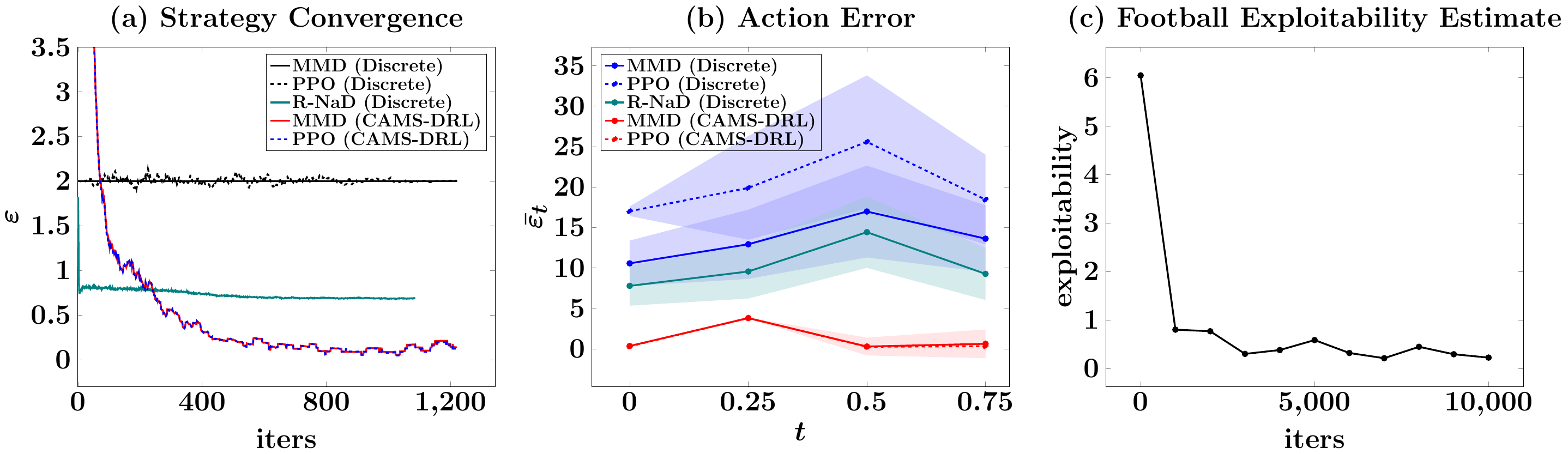}
    % \vspace{-0.15in}
    \caption{Comparisons b/w CAMS-DRL and standard PG methods on (a) 1-stage and (b) 4-stage games. (c) Exploitability estimate of the strategies in the Football game at different iterations.}
    \label{fig:camsdrl_n_mpc}
    \vspace{-0.2in}
\end{figure}
% \begin{wrapfigure}{r}{0.57\textwidth}
%     % \vspace{-0.2in}
%     \centering
%     % \includestandalone[width=\linewidth]{figures/time_compare}
%     % \includegraphics[width=\linewidth]{figures/compare_w_cfrs}
%     % \includegraphics[width=\linewidth]{figures/compare_all}
%     \includegraphics[width=\linewidth]{figures/rl_baselines_w_rnad.pdf}
%     % \vspace{-0.15in}
%     \caption{Comparisons b/w CAMS-DRL and standard PG methods on (a) 1-stage and (b) 4-stage games.}
%     \label{fig:camsdrl}
%     \vspace{-0.3in}
% \end{wrapfigure}
% \vspace{-0.2in}
\subsection{Effect of atomic NE on MPC}
% \vspace{-0.1in}
\paragraph{Hexner's game.} With known differentiable dynamics and $I=2$, a 10-stage Hexner's (primal) game has at most $2^{10}$ rollouts, making exact policy gradient computation feasible. Using DS-GDA, the convergence to the ground-truth NE takes 5 minutes on a M1 Pro Macbook. See results in App.~\ref{app:hexner_mpc}.
\vspace{-0.1in}
% \begin{wrapfigure}{r}{0.5\textwidth}
%     % \hspace{-0.1in}
%     \vspace{-0.3in}
%     \centering
%     \includegraphics[width=\linewidth]{figures/exp_plot_football.png}
%     \caption{Exploitability estimate of the strategies in the Football game at different iterations.}
%     \vspace{-0.2in}
%     \label{fig:exp_football}
% \end{wrapfigure}
\paragraph{Football game.} The independence of the game-tree complexity from the action space allows us to solve games beyond what IIEFG solvers can afford. Here we model an 11-vs-11 American Football play as a short, two-team game in a 2D plane. The horizontal axis represents ``downfield'' progress and the vertical axis is sideline-to-sideline. Each player follows double-integrator kinematics and chooses acceleration at discrete time steps. Physical contact is captured by a smooth ``merge'' weight that grows as opponents approach, softly blending their velocities and attenuating their ability to apply new acceleration, so tackles emerge continuously rather than through hard impulses. The offense has two types ($I=2$): an ``RB power push'' that prefers the RB to advance straight upfield, and a ``QB throw'' that rewards whichever offensive player gets furthest downfield.
With loose bounds on velocity and acceleration, the soft tackle dynamics is control-affine, allowing Isaacs' condition to hold. Together with differentiable rollouts, the game settings satisfy A1-5, leading to atomic NE.
With $K=10$ and known differentiable dynamics, it is feasible to solve P1's (and then P2's) strategy by applying DS-GDA directly to the minimax objective constituting all $I^K$ ($(I+1)^K$) rollouts. See App.~\ref{app:football} for detailed game settings. The convergence takes 30 minutes on a M1 Pro Macbook. The resulting plays are summarized in Fig.~\ref{fig:overview_fig}b,c and animated in the \href{https://github.com/ghimiremukesh/cams/blob/iclr/README.md}{github repo}.
Qualitatively, the results resemble real football tactics: the offense either tries to push through the defense (aka inside zone play), or goes out wide by faking a move (aka waggle play), while the defense, in response, either close-in on the player with the ball possession or stay back to guard. Importantly, our results show that the offense conceal their play selection for 0.5 seconds, which is comparable to coaching analyses that estimate roughly a one-second window before the play becomes clear~\citep{wake_forest}. Quantitatively, we estimate the exploitability of the strategies and plot its convergence in Fig.~\ref{fig:camsdrl_n_mpc}c. We also provide a comparison of resulting trajectories in App.~\ref{app:sense} when each player switches to their respective BR policy to show that the final policies are minimally exploitable.

\section{Conclusion}
% \vspace{-0.1in}
Unlike general IIEFGs where NE strategies are distributions over the entire action space, we showed that 2p0s1 games enjoy an atomic NE structure when P1 can precisely control the public belief, leading to an exponentially reduced game-tree complexity. We demonstrated the utilities of this NE structure in solving games with continuous action spaces in model-free and model-based modes, in terms of computational cost and solution quality. Our methods enable fast approximation of deceptive and counter-deception \textit{team} strategies, with potential applications to sports, missile/drone defense, and risk-sensitive robotics applications, tailored for specific team dynamics, action spaces, and task specifications. 

%% optional ethics statement and reproducable 
\section*{Ethics Statement}
\paragraph{LLM Usage.} Large language models (LLMs) were sparingly used for (1) polishing the writing, (2) code generation for visualization, and (3) some proof steps for Thm.~\ref{thm:splitting} and Thm.~\ref{thm:convergence}. All codes and proofs are verified by authors.

\paragraph{Impact.}
This work examines strategic behavior in sequential decision-making and how deceptive tactics can arise in autonomous agents. The findings have dual-use implications; all experiments are simulation-only and involve no human subjects or personal data.
\section*{Reproducibility Statement}
The paper provides all the relevant details needed to setup the experiments and reproduce the results. These artifacts include: 1) value approximation algorithm and the necessary assumptions to achieve the theoretical guarantees, 2) detailed description of the game setups in the appendix, and, 3) fully executable code with adequate documentation shared via a github repository.

\bibliography{iclr2025_conference}

@article{alphazero,
  title={Mastering chess and shogi by self-play with a general reinforcement learning algorithm},
  author={Silver, David and Hubert, Thomas and Schrittwieser, Julian and Antonoglou, Ioannis and Lai, Matthew and Guez, Arthur and Lanctot, Marc and Sifre, Laurent and Kumaran, Dharshan and Graepel, Thore and others},
  journal={arXiv preprint arXiv:1712.01815},
  year={2017}
}

@article{alphago,
  title={Mastering the game of go without human knowledge},
  author={Silver, David and Schrittwieser, Julian and Simonyan, Karen and Antonoglou, Ioannis and Huang, Aja and Guez, Arthur and Hubert, Thomas and Baker, Lucas and Lai, Matthew and Bolton, Adrian and others},
  journal={nature},
  volume={550},
  number={7676},
  pages={354--359},
  year={2017},
  publisher={Nature Publishing Group}
}

@article{pluribus,
  title={Superhuman AI for multiplayer poker},
  author={Brown, Noam and Sandholm, Tuomas},
  journal={Science},
  volume={365},
  number={6456},
  pages={885--890},
  year={2019},
  publisher={American Association for the Advancement of Science}
}

@article{rebel,
  title={Combining deep reinforcement learning and search for imperfect-information games},
  author={Brown, Noam and Bakhtin, Anton and Lerer, Adam and Gong, Qucheng},
  journal={Advances in Neural Information Processing Systems},
  volume={33},
  pages={17057--17069},
  year={2020}
}

@article{diplomacy,
  title={Human-level play in the game of Diplomacy by combining language models with strategic reasoning},
  author={FAIR†, Meta Fundamental AI Research Diplomacy Team and Bakhtin, Anton and Brown, Noam and Dinan, Emily and Farina, Gabriele and Flaherty, Colin and Fried, Daniel and Goff, Andrew and Gray, Jonathan and Hu, Hengyuan and others},
  journal={Science},
  volume={378},
  number={6624},
  pages={1067--1074},
  year={2022},
  publisher={American Association for the Advancement of Science}
}

@article{stratego,
  title={Mastering the game of Stratego with model-free multiagent reinforcement learning},
  author={Perolat, Julien and De Vylder, Bart and Hennes, Daniel and Tarassov, Eugene and Strub, Florian and de Boer, Vincent and Muller, Paul and Connor, Jerome T and Burch, Neil and Anthony, Thomas and others},
  journal={Science},
  volume={378},
  number={6623},
  pages={990--996},
  year={2022},
  publisher={American Association for the Advancement of Science}
}

@article{sog,
  title={Student of Games: A unified learning algorithm for both perfect and imperfect information games},
  author={Schmid, Martin and Morav{\v{c}}{\'\i}k, Matej and Burch, Neil and Kadlec, Rudolf and Davidson, Josh and Waugh, Kevin and Bard, Nolan and Timbers, Finbarr and Lanctot, Marc and Holland, G Zacharias and others},
  journal={Science Advances},
  volume={9},
  number={46},
  pages={eadg3256},
  year={2023},
  publisher={American Association for the Advancement of Science}
}

@article{cardaliaguet2009numerical,
  title={Numerical approximation and optimal strategies for differential games with lack of information on one side},
  author={Cardaliaguet, Pierre},
  journal={Advances in Dynamic Games and Their Applications: Analytical and Numerical Developments},
  pages={1--18},
  year={2009},
  publisher={Springer}
}

@InProceedings{ghimire24a,
  title = 	 {State-Constrained Zero-Sum Differential Games with One-Sided Information},
  author =       {Ghimire, Mukesh and Zhang, Lei and Xu, Zhe and Ren, Yi},
  booktitle = 	 {Proceedings of the 41st International Conference on Machine Learning},
  pages = 	 {15512--15539},
  year = 	 {2024},
  editor = 	 {Salakhutdinov, Ruslan and Kolter, Zico and Heller, Katherine and Weller, Adrian and Oliver, Nuria and Scarlett, Jonathan and Berkenkamp, Felix},
  volume = 	 {235},
  series = 	 {Proceedings of Machine Learning Research},
  month = 	 {21--27 Jul},
  publisher =    {PMLR},
  pdf = 	 {https://raw.githubusercontent.com/mlresearch/v235/main/assets/ghimire24a/ghimire24a.pdf},
  url = 	 {https://proceedings.mlr.press/v235/ghimire24a.html},
}

@article{hexner1979differential,
  title={A differential game of incomplete information},
  author={Hexner, G},
  journal={Journal of Optimization Theory and Applications},
  volume={28},
  pages={213--232},
  year={1979},
  publisher={Springer}
}

@article{harsanyi1967games,
  title={Games with incomplete information played by “Bayesian” players, I--III Part I. The basic model},
  author={Harsanyi, John C},
  journal={Management science},
  volume={14},
  number={3},
  pages={159--182},
  year={1967},
  publisher={INFORMS}
}

@article{zinkevich2007regret,
  title={Regret minimization in games with incomplete information},
  author={Zinkevich, Martin and Johanson, Michael and Bowling, Michael and Piccione, Carmelo},
  journal={Advances in neural information processing systems},
  volume={20},
  year={2007}
}

@book{aumann1995repeated,
  title={Repeated games with incomplete information},
  author={Aumann, Robert J and Maschler, Michael and Stearns, Richard E},
  year={1995},
  publisher={MIT press}
}

@inproceedings{abernethy2011blackwell,
  title={Blackwell approachability and no-regret learning are equivalent},
  author={Abernethy, Jacob and Bartlett, Peter L and Hazan, Elad},
  booktitle={Proceedings of the 24th Annual Conference on Learning Theory},
  pages={27--46},
  year={2011},
  organization={JMLR Workshop and Conference Proceedings}
}

@article{zheng2023universal,
  title={Universal gradient descent ascent method for nonconvex-nonconcave minimax optimization},
  author={Zheng, Taoli and Zhu, Linglingzhi and So, Anthony Man-Cho and Blanchet, Jos{\'e} and Li, Jiajin},
  journal={Advances in Neural Information Processing Systems},
  volume={36},
  pages={54075--54110},
  year={2023}
}

@article{tammelin2014solving,
  title={Solving large imperfect information games using CFR+},
  author={Tammelin, Oskari},
  journal={arXiv preprint arXiv:1407.5042},
  year={2014}
}

@InProceedings{pmlr-v15-mcmahan11b,
  title = 	 {Follow-the-Regularized-Leader and Mirror Descent: Equivalence Theorems and L1 Regularization},
  author = 	 {McMahan, Brendan},
  booktitle = 	 {Proceedings of the Fourteenth International Conference on Artificial Intelligence and Statistics},
  pages = 	 {525--533},
  year = 	 {2011},
  editor = 	 {Gordon, Geoffrey and Dunson, David and Dudík, Miroslav},
  volume = 	 {15},
  series = 	 {Proceedings of Machine Learning Research},
  address = 	 {Fort Lauderdale, FL, USA},
  month = 	 {11--13 Apr},
  publisher =    {PMLR},
  pdf = 	 {http://proceedings.mlr.press/v15/mcmahan11b/mcmahan11b.pdf},
  url = 	 {https://proceedings.mlr.press/v15/mcmahan11b.html},
  abstract = 	 {We prove that many mirror descent algorithms for online convex optimization (such as online gradient descent) have an equivalent interpretation as follow-the-regularized-leader (FTRL) algorithms. This observation makes the relationships between many commonly used algorithms explicit, and provides theoretical insight on previous experimental observations.  In particular, even though the FOBOS composite mirror descent algorithm handles $L_1$ regularization explicitly, it has been observed that the FTRL-style Regularized Dual Averaging (RDA) algorithm is even more effective at producing sparsity.  Our results demonstrate that the key difference between these algorithms is how they handle the cumulative $L_1$ penalty. While FOBOS handles the $L_1$ term exactly on any given update, we show that it is effectively using subgradient approximations to the $L_1$ penalty from previous rounds, leading to less sparsity than RDA, which handles the cumulative penalty in closed form.  The FTRL-Proximal algorithm, which we introduce, can be seen as a hybrid of these two algorithms, and significantly outperforms both on a large, real-world dataset.}
}

@article{cardaliaguet2007differential,
  title={Differential games with asymmetric information},
  author={Cardaliaguet, Pierre},
  journal={SIAM journal on Control and Optimization},
  volume={46},
  number={3},
  pages={816--838},
  year={2007},
  publisher={SIAM}
}

@article{koller1992complexity,
  title={The complexity of two-person zero-sum games in extensive form},
  author={Koller, Daphne and Megiddo, Nimrod},
  journal={Games and economic behavior},
  volume={4},
  number={4},
  pages={528--552},
  year={1992},
  publisher={Elsevier}
}

@inproceedings{gilpin2007gradient,
  title={Gradient-based algorithms for finding Nash equilibria in extensive form games},
  author={Gilpin, Andrew and Hoda, Samid and Pena, Javier and Sandholm, Tuomas},
  booktitle={Internet and Network Economics: Third International Workshop, WINE 2007, San Diego, CA, USA, December 12-14, 2007. Proceedings 3},
  pages={57--69},
  year={2007},
  organization={Springer}
}

@inproceedings{billings2003approximating,
  title={Approximating game-theoretic optimal strategies for full-scale poker},
  author={Billings, Darse and Burch, Neil and Davidson, Aaron and Holte, Robert and Schaeffer, Jonathan and Schauenberg, Terence and Szafron, Duane},
  booktitle={IJCAI},
  volume={3},
  pages={661},
  year={2003}
}

@inproceedings{gilpin2006finding,
  title={Finding equilibria in large sequential games of imperfect information},
  author={Gilpin, Andrew and Sandholm, Tuomas},
  booktitle={Proceedings of the 7th ACM conference on Electronic commerce},
  pages={160--169},
  year={2006}
}

@article{sandholm2010state,
  title={The state of solving large incomplete-information games, and application to poker},
  author={Sandholm, Tuomas},
  journal={Ai Magazine},
  volume={31},
  number={4},
  pages={13--32},
  year={2010}
}

@inproceedings{johanson2012finding,
  title={Finding optimal abstract strategies in extensive-form games},
  author={Johanson, Michael and Bard, Nolan and Burch, Neil and Bowling, Michael},
  booktitle={Proceedings of the AAAI Conference on Artificial Intelligence},
  volume={26},
  number={1},
  pages={1371--1379},
  year={2012}
}

@article{lanctot2009monte,
  title={Monte Carlo sampling for regret minimization in extensive games},
  author={Lanctot, Marc and Waugh, Kevin and Zinkevich, Martin and Bowling, Michael},
  journal={Advances in neural information processing systems},
  volume={22},
  year={2009}
}

@inproceedings{brown2019deep,
  title={Deep counterfactual regret minimization},
  author={Brown, Noam and Lerer, Adam and Gross, Sam and Sandholm, Tuomas},
  booktitle={International conference on machine learning},
  pages={793--802},
  year={2019},
  organization={PMLR}
}

@article{brown2020combining,
  title={Combining deep reinforcement learning and search for imperfect-information games},
  author={Brown, Noam and Bakhtin, Anton and Lerer, Adam and Gong, Qucheng},
  journal={Advances in Neural Information Processing Systems},
  volume={33},
  pages={17057--17069},
  year={2020}
}

@inproceedings{perolat2021poincare,
  title={From poincar{\'e} recurrence to convergence in imperfect information games: Finding equilibrium via regularization},
  author={Perolat, Julien and Munos, Remi and Lespiau, Jean-Baptiste and Omidshafiei, Shayegan and Rowland, Mark and Ortega, Pedro and Burch, Neil and Anthony, Thomas and Balduzzi, David and De Vylder, Bart and others},
  booktitle={International Conference on Machine Learning},
  pages={8525--8535},
  year={2021},
  organization={PMLR}
}

@article{sokota2022unified,
  title={A unified approach to reinforcement learning, quantal response equilibria, and two-player zero-sum games},
  author={Sokota, Samuel and D'Orazio, Ryan and Kolter, J Zico and Loizou, Nicolas and Lanctot, Marc and Mitliagkas, Ioannis and Brown, Noam and Kroer, Christian},
  journal={arXiv preprint arXiv:2206.05825},
  year={2022}
}

@article{LanctotEtAl2019OpenSpiel,
  title     = {{OpenSpiel}: A Framework for Reinforcement Learning in Games},
  author    = {Marc Lanctot and Edward Lockhart and Jean-Baptiste Lespiau and
               Vinicius Zambaldi and Satyaki Upadhyay and Julien P\'{e}rolat and
               Sriram Srinivasan and Finbarr Timbers and Karl Tuyls and
               Shayegan Omidshafiei and Daniel Hennes and Dustin Morrill and
               Paul Muller and Timo Ewalds and Ryan Faulkner and J\'{a}nos Kram\'{a}r
               and Bart De Vylder and Brennan Saeta and James Bradbury and David Ding
               and Sebastian Borgeaud and Matthew Lai and Julian Schrittwieser and
               Thomas Anthony and Edward Hughes and Ivo Danihelka and Jonah Ryan-Davis},
  year      = {2019},
  eprint    = {1908.09453},
  archivePrefix = {arXiv},
  primaryClass = {cs.LG},
  journal   = {CoRR},
  volume    = {abs/1908.09453},
  url       = {http://arxiv.org/abs/1908.09453},
}

@article{de1996repeated,
  title={Repeated games, duality and the central limit theorem},
  author={De Meyer, Bernard},
  journal={Mathematics of Operations Research},
  volume={21},
  number={1},
  pages={237--251},
  year={1996},
  publisher={INFORMS}
}

@inproceedings{burch2014solving,
  title={Solving imperfect information games using decomposition},
  author={Burch, Neil and Johanson, Michael and Bowling, Michael},
  booktitle={Proceedings of the AAAI Conference on Artificial Intelligence},
  volume={28},
  number={1},
  year={2014}
}

@article{moravvcik2017deepstack,
  title={Deepstack: Expert-level artificial intelligence in heads-up no-limit poker},
  author={Morav{\v{c}}{\'\i}k, Matej and Schmid, Martin and Burch, Neil and Lis{\`y}, Viliam and Morrill, Dustin and Bard, Nolan and Davis, Trevor and Waugh, Kevin and Johanson, Michael and Bowling, Michael},
  journal={Science},
  volume={356},
  number={6337},
  pages={508--513},
  year={2017},
  publisher={American Association for the Advancement of Science}
}

@article{cen2021fast,
  title={Fast policy extragradient methods for competitive games with entropy regularization},
  author={Cen, Shicong and Wei, Yuting and Chi, Yuejie},
  journal={Advances in Neural Information Processing Systems},
  volume={34},
  pages={27952--27964},
  year={2021}
}

@article{vieillard2020leverage,
  title={Leverage the average: an analysis of kl regularization in reinforcement learning},
  author={Vieillard, Nino and Kozuno, Tadashi and Scherrer, Bruno and Pietquin, Olivier and Munos, R{\'e}mi and Geist, Matthieu},
  journal={Advances in Neural Information Processing Systems},
  volume={33},
  pages={12163--12174},
  year={2020}
}

@article{zanette2019limiting,
  title={Limiting extrapolation in linear approximate value iteration},
  author={Zanette, Andrea and Lazaric, Alessandro and Kochenderfer, Mykel J and Brunskill, Emma},
  journal={Advances in Neural Information Processing Systems},
  volume={32},
  year={2019}
}

@article{jacot2018neural,
  title={Neural tangent kernel: Convergence and generalization in neural networks},
  author={Jacot, Arthur and Gabriel, Franck and Hongler, Cl{\'e}ment},
  journal={Advances in neural information processing systems},
  volume={31},
  year={2018}
}

@InProceedings{amos2017icnn,
  title = {Input Convex Neural Networks},
  author = {Brandon Amos and Lei Xu and J. Zico Kolter},
  booktitle = {Proceedings of the 34th International Conference on Machine Learning},
  pages = {146--155},
  year = {2017},
  volume = {70},
  series = {Proceedings of Machine Learning Research},
  publisher = {PMLR},
}

@article{han2013multigrid,
  title={Multigrid Methods for Second Order Hamilton--Jacobi--Bellman and Hamilton--Jacobi--Bellman--Isaacs Equations},
  author={Han, Dong and Wan, Justin WL},
  journal={SIAM Journal on Scientific Computing},
  volume={35},
  number={5},
  pages={S323--S344},
  year={2013},
  publisher={SIAM}
}

@book{trottenberg2000multigrid,
  title={Multigrid},
  author={Trottenberg, Ulrich and Oosterlee, Cornelius W and Schuller, Anton},
  year={2000},
  publisher={Elsevier}
}

@article{braess1983new,
  title={A new convergence proof for the multigrid method including the V-cycle},
  author={Braess, Dietrich and Hackbusch, Wolfgang},
  journal={SIAM journal on numerical analysis},
  volume={20},
  number={5},
  pages={967--975},
  year={1983},
  publisher={SIAM}
}

@article{martin2024joint,
  title={Joint-perturbation simultaneous pseudo-gradient},
  author={Martin, Carlos and Sandholm, Tuomas},
  journal={arXiv preprint arXiv:2408.09306},
  year={2024}
}

@inproceedings{martin2023finding,
  title={Finding mixed-strategy equilibria of continuous-action games without gradients using randomized policy networks},
  author={Martin, Carlos and Sandholm, Tuomas},
  booktitle={Proceedings of the Thirty-Second International Joint Conference on Artificial Intelligence},
  pages={2844--2852},
  year={2023}
}

@article{henson2003multigrid,
  title={Multigrid methods nonlinear problems: an overview},
  author={Henson, Van and others},
  journal={Computational imaging},
  volume={5016},
  pages={36--48},
  year={2003},
  publisher={SPIE}
}

@article{rudolph2025reevaluating,
  title={Reevaluating Policy Gradient Methods for Imperfect-Information Games},
  author={Rudolph, Max and Lichtle, Nathan and Mohammadpour, Sobhan and Bayen, Alexandre and Kolter, J Zico and Zhang, Amy and Farina, Gabriele and Vinitsky, Eugene and Sokota, Samuel},
  journal={arXiv preprint arXiv:2502.08938},
  year={2025}
}

@misc{wake_forest, title={Wake forest's slow mesh RPO for explosive plays}, url={https://coachandcoordinator.com/2020/10/wake-forests-slow-mesh-rpo-for-explosive-plays/}, journal={Coach and Coordinator}, author={Grabowski, Keith}, year={2020}, month={Oct}}

@inproceedings{hawkin2011automated,
  title={Automated action abstraction of imperfect information extensive-form games},
  author={Hawkin, John and Holte, Robert and Szafron, Duane},
  booktitle={Proceedings of the AAAI Conference on Artificial Intelligence},
  volume={25},
  number={1},
  pages={681--687},
  year={2011}
}

@inproceedings{kroer2015discretization,
  title={Discretization of continuous action spaces in extensive-form games},
  author={Kroer, Christian and Sandholm, Tuomas},
  booktitle={Proceedings of the 2015 international conference on autonomous agents and multiagent systems},
  pages={47--56},
  year={2015}
}

@article{brown2015simultaneous,
  title={Simultaneous abstraction and equilibrium finding in games},
  author={Brown, Noam and Sandholm, Tuomas W},
  year={2015},
  publisher={Carnegie Mellon University}
}
% \bibliography{ref}
\bibliographystyle{iclr2026_conference}

%%%%%%%%%%%%%%%%%%%%%%%%%%%%%%%%%%%%%%%%%%%%%%%%%%%%%%%%%%%%
\newpage
\appendix
\section*{Appendix}
%%%%%%%%%%%%%%%%%%%%%%%%%%%%%%%%%%%%%%%%%%%%%%%%%%%%%%%%%%%%%%%%%%%%%%
\section{Connection between Primal and Dual Games}\label{app:primaldual}
Here we use the infinitely-repeated game setting to explain the connection between the primal and dual games and the interpretation of the dual variable $\hat{p}$. Please see Theorem 2.2 in \cite{de1996repeated} and the extension to differential games in \cite{cardaliaguet2007differential}. 
% We then provide an interpretation of the dual variable $\hat{p} \in \partial V(0,x,p)$ for $(x,p) \in \mathcal{X}\times \Delta(I)$.

% \paragraph{Connection between primal and dual.} (Theorem 2.2 in \cite{***demeyer}) 
\paragraph{Game setting.} In an infinitely-repeated normal-form game with one-sided information, there is a set of $I$ possible payoff tables $\{A_i\}_{i=1}^I$. A table is initially drawn from $p_0$ and shown to P1, while P2 only knows $p_0$. At stage $t \in [T]$, the players draw actions from their strategies $\eta_i$ and $\zeta$, respectively, under the current belief $p_t$. The actions are public, which trigger belief updates, yet the actual payoffs are hidden (or otherwise P2 may infer the true payoff easily). At the terminal $T$, an average payoff is received by P1 from P2. This game is proven to have a value~\citep{aumann1995repeated} for finite and infinite $T$. Here we only consider the latter case where the value is defined as
\begin{equation}
    V(p_0) = \max_{\{\eta_i\}_{i=1}^I} \min_{\zeta} \lim_{T\rightarrow \infty} \frac{1}{T}\sum_{t=1}^T \mathbb{E}_{i \sim p_0}[\eta_i^\top A_i \zeta].
\end{equation}

Notice that this is a special case of the differential games we are interested in, where the incomplete information is on the running loss, the action spaces are discrete, and the dynamics is identity. 

\paragraph{Primal-dual properties.} Let the primal game be $G(p)$ for $p \in \Delta(I)$, the dual game be $G^*(\hat{p})$ for $\hat{p} \in \mathbb{R}^I$, and let $\{\eta_i\}_{i=1}^I$ be the set of strategies for P1 and $\zeta$ the strategy for P2. $\eta_i \in \Delta(d_u)$ and $\zeta \in \Delta(d_v)$. We note that P1's strategy $\{\eta_i\}_{i=1}^I$ can also be together represented in terms of $\pi := \{\pi_{ij}\}^{I,d_u}$ such that $\sum_j^{d_u} \pi_{ij} = p[i]$ and $\eta_i[j] = \pi_{ij}/p[i]$, i.e., nature's distribution is the marginal of $\pi$ and P1's strategy the conditional of $\pi$.
Let $G^i_{\eta\zeta}$ be the payoff to P1 of type $i$ for strategy profile $(\eta, \zeta)$. From \cite{de1996repeated} we have the following results connecting $G(p)$ and $G^*(\hat{p})$:

\begin{enumerate}
    \item If $\pi$ is Nash for P1 in $G(p)$ and $\hat{p} \in \partial V(p)$, then $\{\eta_i\}_{i=1}^I$ is also Nash for P1 in $G^*(\hat{p})$.
    \item If $\pi$ is Nash for P1 in $G^*(\hat{p})$ and $p$ is induced by $\pi$, then $p \in \partial V^*(\hat{p})$ and $\pi$ is Nash for P1 in $G(p)$.
    \item If $\zeta$ is Nash for P2 in $G^*(\hat{p})$ and $p \in \partial V^*(\hat{p})$, then $\zeta$ is also Nash for P2 in $G(p)$. 
    \item If $\zeta$ is Nash for $G(p)$, and let $\hat{p}^i := \max_{\eta \in \Delta(d_u)} G^i_{\eta \zeta}$ and $\hat{p} :=[\hat{p}^1,...,\hat{p}^I]^T$, then $p \in \partial V^*(\hat{p})$ and $\zeta$ is also Nash for P2 in $G^*(\hat{p})$.
\end{enumerate}

From the last two properties we have: If $\zeta$ is Nash for $G(p)$ and $G^*(\hat{p})$, then $\hat{p} = \max_{\eta \in \Delta(d_u)} G^i_{\eta \zeta}$, i.e., $\hat{p}[i]$ is the payoff of type $i$ if P1 plays a best response for that type to P2's Nash. 
%%%%%%%%%%%%%%%%%%%%%%%%%%%%%%%%%%%%%%%%%%%%%%%%%%%%%%%%%%%%%%%%%%%%%%

%%%%%%%%%%%%%%%%%%%%%%%%%%%%%%%%%%%%%%%%%%%%%%%%%%%%%%%%%%%%
\section{Proof of Theorem \ref{thm:splitting}}
\label{app:splitting}
\paragraph{Theorem \ref{thm:splitting}} (A splitting reformulation of the primal and dual DPs). 
    The RHS of Eq.~\ref{eq:primal-subdp} can be reformulated as
        \begin{equation}\label{eq:opt_prob} \tag{$\text{P}_1$}
        \small
        \begin{aligned}
            &\min_{\{u^k\}, \{\alpha_{ki}\}} \max_{\{v^k\}} \quad \sum_{k=1}^{I}\lambda^k \Big(V(t+\tau, x^k, p^k) + \tau \mathbb{E}_{i \sim p^k} [l_i(x, u^k, v^k)]\Big)\\
            &\text{s.t.} \quad u^k \in \mathcal{U}, \quad x^k = x+ \tau f(x, u^k, v^k), \quad v^k \in \mathcal{V}, \\
            &\quad \quad \alpha_{ki} \in [0, 1],\; \sum_{k=1}^I \alpha_{ki} = 1,\; \lambda^k = \sum_{i=1}^I \alpha_{ki} p[i], \\
            &\quad \quad p^k[i] = \frac{\alpha_{ki}p[i]}{\lambda^k}, \quad \forall i, k \in [I].
        \end{aligned}
        \end{equation}
    And the RHS of Eq.~\ref{eq:dual-subdp} can be reformulated as
    \begin{equation}\label{eq:opt_prob_dual} \tag{$\text{P}_2$}
    \small
    \begin{aligned}
        &\min_{\{v^k\}, \{\lambda^{k}\}, \{\hat{p}^k\}} \max_{\{u^k\}} \sum_{k=1}^{I+1}\lambda^k \left(V^*(t+\tau, x^k, \hat{p}^k - \tau l(x, u^k, v^k))\right)\\
        &\text{s.t.} \quad u^k \in \mathcal{U}, \quad v^k \in \mathcal{V}, \quad x^k = x + \tau f(x, u^k, v^k),\\
        &\quad \quad \lambda^{k} \in [0, 1],\; \sum_{k=1}^{I+1} \lambda^{k}\hat{p}^k = \hat{p}, \; \sum_{k=1}^{I+1}\lambda^k = 1,\; k \in [I+1].\\
    \end{aligned}
    \end{equation}
\begin{proof} Recall that the primal DP is: 
\begin{equation}
\begin{aligned}
    V_\tau(t_0,x_0,p) & = \min_{\{\eta_i\} 
    } \mathbb{E}_{i \sim p,~u \sim \eta^i}\left[
    \max_{v \in \mathcal{V}}\left\{ V_\tau(t_0 + \tau, x'(u,v), p'(u)) + \tau l_i(x, u,v)\right\} \right] \\
    & = \min_{\{\eta_i\} 
    } \int_{\mathcal{U}} \bar{\eta}(u) 
    \max_{v \in \mathcal{V}} \left\{V_\tau(t_0 + \tau, x'(u,v), p'(u)) + \tau \mathbb{E}_{i\sim p'(u)} [l_i(x, u,v)] \right\} du \\
    & = \min_{\{\eta_i\} 
    } \int_{\mathcal{U}} \bar{\eta}(u) 
    a(u,p'(u)) du,
\end{aligned}
\end{equation} 
where the last equality uses Isaacs' condition to reduce $v$ as an implicit function of $u$:
\begin{equation}
     a(u,p'(u)) = \max_{v \in \mathcal{V}} V_\tau(t_0 + \tau, x'(u,v), p'(u)) + \tau \mathbb{E}_{i\sim p'(u)} [l_i(x, u,v)].
\end{equation}
Now we introduce a pushforward measure $\nu$ on $\Delta(I)$ for any $E \subset \Delta(I)$:
$\nu(E) = \int_{\{\,u: p'(u)\in E\}} \bar{\eta}(u)\,du$. Let $\eta_{p'}$ be the conditional measure on $\mathcal{U}$ for each $p'$. Then we have
\begin{equation*}
\begin{aligned}
    \min_{\{\eta_i\}} \int_{\mathcal{U}} \bar{\eta}(u) 
    a(u,p'(u)) du & = \min_{\nu} \int_{\Delta(I)} \min_{\eta_{p'}} \Bigl[\int_{p'(u)=p'} a(u,p')\,\eta_{p'}(du)\Bigr]\nu(dp') \\ 
    & = \min_{\nu} \int_{\Delta(I)} \min_{u \in \mathcal{U}} a(u,p') \nu(dp') \\
    & = \min_{\nu} \int_{\Delta(I)} \tilde{a}(p') \nu(dp').
\end{aligned}
\end{equation*}
This leads to the following reformulation of $V_\tau$:
\begin{equation}\label{eq:primal_reformulation}
\begin{aligned}
    V_\tau(t_0,x_0,p) = & \min_{\nu} \int_{\Delta (I)} \tilde{a}(p')\nu(dp') \\
    & \text{s.t.} \quad \mathbb{E}_{\nu} [p'] = p.
\end{aligned}
\end{equation}

Now we show that the RHS of Eq.~\ref{eq:primal_reformulation} computes the convexificiation of $\tilde{a}(p')$ at $p'=p$.

To do so, we first show $V_\tau$ is convex on $\Delta(I)$: Let two probability measures $\nu^1$ and $\nu^2$ be the solutions for $V_\tau(t_0,x_0,p^1)$ and $V_\tau(t_0,x_0,p^2)$, respectively. For any $\theta \in [0,1]$ and $p^\theta = \theta p^1 + (1-\theta)p^2$, the mixture $\nu^\theta := \theta \nu^1 + (1-\theta) \nu^2$ satisfies $\int_{\Delta(I)} p'\nu(dp') = \theta \int_{\Delta(I)} p'\nu^1(dp') + (1-\theta) \int_{\Delta(I)} p'\nu^2(dp') = p^\theta$ and is a feasible solution to Eq.~\ref{eq:primal_reformulation} for $p^\theta$. Therefore $$V(p^\theta) \leq \int_{\Delta(I)} a(p')\nu^\theta(dp') = \theta V(p^1) + (1-\theta) V(p^2).$$

Then we show $V(p) \leq \tilde{a}(p)$ for any $p \in \Delta(I)$: The Dirac measure $\delta_p$ concentrated at $p$ satisfies $\int_{\Delta(I)} p'\delta_p(dp') = p$. Thus $\delta_p$ is a feasible solution to Eq.~\ref{eq:primal_reformulation} and by definition $V(p) \leq \int_{\Delta(I)} \tilde{a}(p')\delta_p(dp') = \tilde{a}(p)$.

Lastly, we show $V$ is the largest convex minorant of $\tilde{a}$: Let $h$ be any convex function on $\Delta(I)$ such that $h(p)\leq \tilde{a}(p)$ for all $p$. Given $p \in \Delta(I)$, for any probability measure $\nu$ that satisfies $\int_{\Delta(I)} p'\nu(dp') = p$, we have $$h(p) = h(\int_{\Delta(I)} p'\nu(dp')) \leq \int_{\Delta(I)} h(p')\nu(dp') \leq \int_{\Delta(I)} \tilde{a}(p')\nu(dp').$$ Since this inequality holds for arbitrary $\nu$, including the optimal ones that define $V(p)$ through Eq.~\ref{eq:primal_reformulation}, it follows that $h(p) \leq V(p)$. With these, $V(\cdot)$ in Eq.~\ref{eq:primal_reformulation} is the convexification of $\tilde{a}(\cdot)$.

Since convexification in $\Delta(I)$ requires at most $I$ vertices, $\nu^*$ that solves Eq.~\ref{eq:primal_reformulation} is $I$-atomic. We will denote by $\{p^k\}_{k\in[I]}$ the set of ``splitting'' points that have non-zero probability masses according to $\nu^*$, and let $\lambda^k := \nu^*(p^k)$. Using Isaacs' condition, $\argmin_{u \in \mathcal{U}} a(u,p)$ is non-empty for any $p\in\Delta(I)$, and therefore each $p^k$ is associated with (at least) one action in $\argmin_{u \in \mathcal{U}} a(u,p^k)$. As a result, $\{\eta_i\}$ is also concentrated on a common set of $I$ actions in $\mathcal{U}$. Specifically, denote this set by $\{u^k\}_{k\in[I]}$, we should have $\alpha_{ki} := \eta_i(u^k) = \lambda^k p^k[i]/p[i]$. Thus we reach \ref{eq:opt_prob}. The same proof technique can be applied to the dual DP to derive \ref{eq:opt_prob_dual}.
\end{proof}

\section{Proof of Theorem~\ref{thm:convergence}}
\label{app:convergence}

% Temporary Prompt 
% Let l_i: \mathcal{U} \times \mathcal{V} \rightarrow R be strictly convex in \mathcal{U} and strictly concave in \mathcal{V} for i \in [I], g_i: \mathcal{X} \rightarrow R. l_i and g_i are bounded and Lipschitz continuous and differentiable in their corresponding domains.  \mathcal{U}, \mathcal{V}, \mathcal{X} are compact. The value function satisfies Bellman backup V(t,x,p) = \min_{u^k, \alpha^k_i} \max_{v^k} \sum_{k=1}^I \sum_{i=1}^I \alpha^k_i*(V(t+1, x+\tau*f(u^k,v^k), p^k) + \tau* l_i(u^k,v^k)), subject to u^k \in \mathcal{U}, \sum_{k=1}^I \alpha^k_i = p[i], p\in \Delta(I), i.e., p on I-dimensional simplex. \tau is a time interval. f:  \mathcal{U} \times \mathcal{V} \rightarrow \mathcal{X} is bounded, Lipschitz continuous, differentiable, and control-affine in \mathcal{U} and \mathcal{V}. The terminal boundary condition is V(T,x,p) = \sum_{i=1}^I p[i]*g_i(x) for (x,p) \in \mathcal{X} \times \Delta(I). 

\paragraph{Theorem \ref{thm:convergence}.} For any $(t,x,p) \in [0,T]\times \mathcal{X} \times \Delta(I)$, if A1-5 hold, then for any $\epsilon > 0$, there exist $C>0$, and $\tau \in (0, \tau^*]$, such that 
% \begin{equation}
%     V(t,x,p) \leq \max_{\zeta \in \mathcal{Z}} J(t,x,p;\{\eta_{i,{\tau}}\}^\dagger, \zeta) \leq V(t,x,p) + C(T-t)\tau.
% \end{equation}
% Similarly, for any $(t,x,\hat{p}) \in [0,T]\times \mathcal{X} \times \mathbb{R}^I$, 
% \begin{equation}
%     V^*(t,x,\hat{p}) \leq \max_{\{\eta_i\} \in \{\mathcal{H}^i\}^I} J^*(t,x,\hat{p};\{\eta_i\}, \zeta_{\tau}^\dagger) \leq V^*(t,x,\hat{p}) + C(T-t)\tau.
% \end{equation}

% \begin{equation}\label{eq:proof_11}
%         V(t,x,p) \leq \max_{\zeta \in \mathcal{Z}} J(t,x,p;\{\eta_{i,{\tau}}\}^\dagger, \zeta) \leq V_\tau(t,x,p) + C(T-t)\tau,
% \end{equation}
% and,
\begin{equation}\label{eq:proof_12}
        0 \le \max_{\zeta \in \mathcal{Z}} J(t,x,p;\{\eta_{i,{\tau}}^\dagger\}, \zeta) - V(t, x, p) \le \epsilon
\end{equation}
Similarly, for any $(t,x,\hat{p}) \in [0,T]\times \mathcal{X} \times \mathbb{R}^I$, 
% \begin{equation}\label{eq:proof_21}
%     V^*(t,x,\hat{p}) \leq \max_{\{\eta_i\} \in \{\mathcal{H}^i\}^I} J^*(t,x,\hat{p};\{\eta_i\}, \zeta_{\tau}^\dagger) \leq V^*_\tau(t,x,\hat{p}) + C(T-t)\tau,
% \end{equation}
% and, 
\begin{equation}\label{eq:proof_22}
    0 \le \max_{\{\eta_i\} \in \{\mathcal{H}^i\}^I} J^*(t,x,\hat{p};\{\eta_i\}, \zeta_{\tau}^\dagger) - V^*(t, x,\hat{p}) \le \epsilon 
\end{equation}
\subsection{Settings}
We recall all necessary settings for the proof. 

From Thm.~\ref{thm:splitting}, the Bellman backup for $G_\tau$ is \ref{eq:opt_prob}, which can be written as an operator $T_\tau$:
\begin{equation}\label{eq:BellmanVertex}
            T_\tau[V_\tau](t+\tau,x,p) := V_\tau(t,x,p) = \min_{\{\lambda^k\},\{p^k\}} \quad \sum_{k=1}^{I}\lambda^k \tilde{V}_\tau(t,x,p^k),
\end{equation}
where $\tilde{V}_\tau(t,x,p^k) = \min_{u \in \mathcal{U}}\max_{v \in \mathcal{V}} V_\tau(t+\tau, x+\tau f(x,u,v),p^k) + \tau \mathbb{E}_{i \sim p^k}[l_i(x, u,v)]$ is the non-revealing value at $(t,x,p^k)$, and $\sum_{k=1}^I\lambda^k = 1, ~ \sum_{k=1}^I \lambda^k p^k = p, ~ p^k \in \Delta(I), ~\forall k \in [I]$.

From \cite{cardaliaguet2007differential}, the value $V$ of the original game $G$ is the unique viscosity solution to the following Hamilton-Jacobi equation
\begin{equation}
    \nabla_t w + H(t,x,\nabla_x w) = 0,
\end{equation}
with terminal boundary $w(T,x,p) = \sum_{i=1}^I p[i]g_i(x)$, and is convex in $p$.

\subsection{Preliminary lemmas}
The following lemmas will be used in the main proof.

\begin{lemma}[Uniform Convergence (Thm 3.1 of \cite{cardaliaguet2009numerical}]\label{lem:unifconv}
    Assuming (A1-A5) hold, the map $V_\tau$ converges uniformly to the value function $V$ on compact subsects of $[0, T] \times \mathcal{X} \times \Delta (I)$, in the following sense:
    $$\lim_{\tau \rightarrow 0^+,\;t_k \rightarrow t,\;x' \rightarrow x,\;p' \rightarrow p} V_\tau(t_k, x', p') = V(t, x, p)\quad \forall(t, x, p) \in [0, T] \times \mathcal{X} \times \Delta(I).$$
\end{lemma}
\begin{proof}
    Proved in \cite{cardaliaguet2009numerical} for both $V_\tau$ and $V^*_\tau$. 
\end{proof}
\begin{lemma}\label{lem:avfrozen}
Let $\bar{\mathcal{V}}(t_0) = \{v: [t_0, T] \rightarrow \mathcal{V}\}$ denote a set of open-loop controls for P2. Then, given $f$, a constant control $u \in \mathcal{U}$ on $[t_k, t_{k+1}]$, a time-varying control $v \in \bar{\mathcal{V}}(t)$, there exists a $w$ such that
\begin{equation}
    w \in \text{Co} f(x, u, \mathcal{V})
\end{equation}
where $\text{Co}$ denotes the convex hull.
\end{lemma}
\begin{proof}
    Let 
    \begin{equation} \label{eq:wdef}
    w:=\frac{1}{\tau}\int_{t_k}^{t_{k+1}}f(x, u, v(s))\;ds
    \end{equation}
We show that $w$ as defined above lies in the convex hull of the set $f(x, u, \mathcal{V})$. 

Let $S := f(x, u, \mathcal{V}) = \{f(x, u, \bar{v}):\bar{v} \in \mathcal{V}\}$. For any time $s$, since $v(s) \in \mathcal{V}$, we have:
\begin{equation}
    f(x, u, v(s)) \in S \quad \text{ for all } s \in [t_k, t_{k+1}]
\end{equation}
\paragraph{Case 1: when $v(\cdot)$ is piecewise-constant. \label{sec:case1}} Assume first that $v(\cdot)$ is piecewise constant on the partition $[t_k, t_{k+1}]$:
$$[t_k, t_{k+1}] = \bigcup_{j=1}^m N_j, \quad |N_j| = \Delta t_j, \quad \sum_{j=1}^m \Delta t_j = \tau,$$
and $v(s) = v_j$ on $N_j$. Then
$$\int_{t_k}^{t_{k+1}}f(x, u, v(s)) ds = \sum_{j=1}^m \Delta t_j f(x, u, v_j).$$
Divide by $\tau$ to get 
$$w = \frac{1}{\tau}\sum_{j=1}^m \Delta t_j f(x, u, v_j) = \sum_{j=1}^m q_jf(x, u, v_j), \quad \text{ where } q_j = \frac{\Delta t_j}{\tau}.$$
$q_j \ge 0$ and $\sum_{j=1}^m q_j = 1$, so $w$ is a finite weighted average of points in $S$. Hence, 
\begin{equation}
    w \in \text{Co}(S) = \text{Co}f(x, u, \mathcal{V})
\end{equation}

\paragraph{Case 2: general measurable case.} For a general measurable $v(\cdot)$, we can approximate it by piecewise-constant $v_n(\cdot)$ such that 
$$f(x, u, v_n(\cdot) \rightarrow f(x, u, v(\cdot) \quad \text{ in } L^1([t_k, t_{k+1}]$$

This is true because $f(x, u, \cdot)$ is bounded and measurable through $v(\cdot)$. 

Define
$$w_n := \frac{1}{\tau} \int_{t_k}^{t_{k+1}} f(x, u, v_n(s)\;ds.$$

From \hyperref[sec:case1]{Case 1}, each $w_n \in \text{Co}(S)$.

Moreover, 
\begin{align*}
|w_n - w| &= \Big|\frac{1}{\tau}\int_{t_k}^{t_{k+1}}(f(x, u, v_n(s)) - f(x, u, v(s))\; ds\Big| \\
&\le \frac{1}{\tau} \int_{t_k}^{t_{k+1}}|f(x, u, v_n(s)) - f(x, u, v(s))|\; ds \rightarrow 0    
\end{align*}
So $w_n \rightarrow w$

Finally, $S$ is compact because $\mathcal{V}$ is compact and $f$ is continuous. Hence, Co$(S)$ is compact and therefore closed. Thus, the limit of $w_n \in \text{Co}(S)$ also belongs to $\text{Co}(S)$:
\begin{equation}
    w \in \text{Co}(S) = \text{Co} f(x, u, \mathcal{V})
\end{equation}
\end{proof}

\begin{lemma}[Local Quadratic Error] \label{lem:localerror}
For any $k \in \{0, \dots, L-1\}$, any $x$, a fixed control $u \in \mathcal{U}$, any control $v \in \bar{\mathcal{V}}(t_k)$, there exists some $w \in \text{Co} f(x, u, \mathcal{V})$, such that
\begin{equation}\label{eq:qerror}
    |X_{t_{k+1}}^{t_k, x, u, v} - (x + \tau w)| \le C_1 \tau^2
\end{equation}    
\end{lemma}
\begin{proof}
    We have, from the ODE, 

    \begin{equation}\label{eq:first}
        X(t_{k+1}) = X_{t_{k+1}}^{t_k, x, u, v} = x + \int_{t_k}^{t_{k+1}} f(X(s), u, v(s))\; ds.
    \end{equation}
    From the definition (\ref{eq:wdef}) of $w$, 
    \begin{equation}\label{eq:second}
        x + \tau w = x + \int_{t_k}^{t_{k+1}}f(x, u, v(s)) ds.
    \end{equation}
    Subtract (\ref{eq:second}) from (\ref{eq:first})
    \begin{equation*}
    \begin{aligned}
        X(t_{k+1}) - (x+\tau w) &= \int_{t_k}^{t_{k+1}}\Big(f(X(s), u, v(s)) - f(x, u, v(s)\Big)\; ds \\
        \Big|X(t_{k+1}) - (x+\tau w)\Big| &= \Big|\int_{t_k}^{t_{k+1}}\Big(f(X(s), u, v(s)) - f(x, u, v(s)\Big)\; ds \Big|\\
        & \le \int_{t_k}^{t_{k+1}}\Big|f(X(s), u, v(s)) - f(x, u, v(s)\Big|\; ds
    \end{aligned}
    \end{equation*}
    From Lipschitz continuity of $f$, $|f(X(s), u, v(s) - f(x, u, v(s))| \le L_f |X(s) - x|$, 

    $$\Big|X(t_{k+1}) - (x+\tau w)\Big| \le \int_{t_k}^{t_{k+1}}L_f|X(s) - x|\; ds.$$

    From boundedness of $f$: Let $||f||_\infty \le F$. Then for $s \in [t_k, t_{k+1}]$
    $$\Big|X(s) - x\Big| = \Big|\int_{t_k}^s f(X(r), u, v(r)\;dr \Big| \le \int_{t_k}^s |f(\cdot)|\; dr \le \int_{t_k}^s F dr = F(s-t_k),$$
    which leads to
    \begin{align*}
        \Big|X(t_{k+1}) - (x+\tau w)\Big| &\le \int_{t_k}^{t_{k+1}}L_f\cdot F\cdot(s-t_k)\; ds = L_f \cdot F \int_0^\tau r\;dr = \frac{L_f \cdot F}{2}\tau^2 
    \end{align*}
    which yields equation (\ref{eq:qerror}) with $C_1 = \frac{L_f \cdot F}{2}$. 
\end{proof}

\begin{lemma}[Grid Error]\label{lem:globalerror}
There exists some constant $C>0$ independent of $\tau$ such that, for any $k \in \{0, \cdots, L\}$, $x \in \mathbb{R}^{d_x}$, and $p \in \Delta(I)$, we have
\begin{equation}
\max_{\zeta \in \mathcal{Z}} J(t_k,x,p;\{\eta_{i, \tau}^\dagger\}, \zeta) \le V_\tau(t_k, x, p) + C(T-t_k)\tau
\end{equation}

% \begin{equation}{\label{eq:localerror}}
%     \max_{\zeta \in \mathcal{Z}} J(t_k,x,p;\{\eta_{i,{\tau}}\}^\dagger, \zeta) \le V_\tau(t_k, x, p) + C(L-k)\tau^2
% \end{equation}
\end{lemma}
\begin{proof}
Let $\tau = T/L$, $\{\eta_{i, \tau}^\dagger\}$ be the random strategy associated with the feedback strategy $(u^k, \alpha_{ki})$ in the game $G_\tau$, and $\zeta \in \mathcal{Z}$. Then, for any $k \in \{0, \dots, L\}$, $x \in \mathbb{R}^{d_x}$, and $p \in \Delta(I)$, claim that:

\begin{equation}{\label{eq:localerror}}
    \max_{\zeta \in \mathcal{Z}} J(t_k,x,p;\{\eta_{i, \tau}^\dagger\}, \zeta) \le V_\tau(t_k, x, p) + C(L-k)\tau^2
\end{equation}

We use backward induction to prove the equation (\ref{eq:localerror}). At $k=L$, the inequality holds trivially as there is no time left in the game and the error term goes to 0. Then, assume that (\ref{eq:localerror}) holds for $k+1$:
\begin{equation}\label{eq:localerrorinter}
    \max_{\zeta \in \mathcal{Z}(t_{k+1})} J(t_{k+1},x,p;\{\eta_{i, \tau}^\dagger\}, \zeta) \le V_\tau(t_{k+1}, x, p) + C(L-(k+1))\tau^2
\end{equation}

Now, to prove for any $k$, let us analyze what happens on $[t_k, t_{k+1})$. From the feedback given by (\ref{eq:opt_prob}), for a fixed $(x, p)$ we have:

$$X_T^{t_k, x, \{\eta_{i, \tau}^\dagger\}, \zeta} = X_{T}^{t_{k+1}, x^j, \{\eta_{i, \tau}^{j \dagger}\}, \zeta^j}\quad \text{with probability } \lambda^jp^j[i]/p[i],$$

where
\begin{itemize}
    \item $x^j = X_{t_{k+1}}^{t_k, x, u^j, \zeta(u^j)}$, 
    \item $\{\eta_{i, \tau}^{j\dagger}\}$ is the continuation of the random strategy from time $t_{k+1}$ associated with the same feedback but starting at $(t_{k+1}, x^j, p^j)$, 
    \item $\zeta^j \in \mathcal{Z}(t_{k+1})$ is the continuation of the non-anticipative $\zeta$ after $u^j$ is chosen. 
\end{itemize}
Then, using Lemma \ref{lem:localerror}, there is some $w^j \in \text{Co}f(x, u^j, \mathcal V)$ such that:
\begin{equation}\label{eq:localerror_in_use}
    |x^j - (x + \tau w^j) | \le C_1 \tau^2.
\end{equation}

From the definition of $J$, 
\begin{align*}
    &J(t_k, x, \{\eta_{i, \tau}^\dagger\}, \zeta, p) \\
    &= \sum_{i=1}^{I}p[i]\mathbb{E}_{\{\eta_{i, \tau}^\dagger\}}\Big[g_i\Big(X_{T}^{t_k, x, \{\eta_{i, \tau}^\dagger\}, \zeta}\Big) + \int_{t_k}^Tl_i\Big(X(s), \{\eta_{i, \tau}^\dagger\}(s), \zeta(s)\Big)\;ds\Big]\\
    &= \sum_{i=1}^I \sum_{j=1}^{I+1}p[i]\lambda^j p^j[i]/p[i]\mathbb{E}_{\{\eta_{i, \tau}^{j^\dagger}\}}\Big[g_i\Big(X_{T}^{t_{k+1}, x, \{\eta_{i, \tau}^{j^\dagger}\}, \zeta^j}\Big) + \int_{t_k}^{t_{k+1}}l_i(x(s), u^j, \zeta(u^j))\;ds \\
    &\hspace{3.2in}+\int_{t_{k+1}}^Tl_i\Big(X(s), \{\eta_{i, \tau}^{j^\dagger}\}(s), \zeta^j(s)\Big)\;ds\Big] \\
    &\le \sum_{j=1}^{I+1}\lambda^j \max_{\zeta' \in \mathcal{Z}(t_{k+1})}\sum_{i=1}^I p^j[i] \mathbb{E}_{\{\eta_{i, \tau}^{j^\dagger}\}}\Big[g_i\Big(X_{T}^{t_{k+1}, x, \{\eta_{i, \tau}^{j^\dagger}\}, \zeta'}\Big) + \int_{t_k}^{t_{k+1}}l_i(x(s), u^j, \zeta(u^j))\;ds  \\
    &\hspace{3.2in}+ \int_{t_{k+1}}^Tl_i\Big(X(s), \{\eta_{i, \tau}^{j\dagger}\}(s), \zeta'(s)\Big)\;ds\Big]
\end{align*}

Combining the above result with (\ref{eq:localerrorinter}), we get:
\begin{align*}
    J(t_k, x, \{\eta_{i, \tau}^\dagger\}, \zeta, p) \le \sum_{j=1}^{I+1} \lambda^j \Big[V_\tau(t_{k+1}, x^j, p^j) + C(L - (k+1))\tau^2\Big].
\end{align*}
Finally, applying (\ref{eq:localerror_in_use}) and assuming $C_2$-Lipschitz continuity of $V_\tau$, we arrive at:
\begin{align*}
    &J(t_k, x, \{\eta_{i, \tau}^\dagger\}, \zeta, p) \le \sum_{j=1}^{I+1} \lambda^j \Big[V_\tau(t_{k+1}, x + \tau w^j, p^j) + C_2|x^j - (x + \tau w^j)| + \\
    &\hspace{4in} C(L - (k+1))\tau^2\Big]\\
    &\hspace{0.5in}\le \sum_{j=1}^{I+1} \lambda^j \Big[V_\tau(t_{k+1}, x + \tau w^j, p^j) + C_1C_2\tau^2 +  C(L - (k+1))\tau^2\Big]\\
    &\hspace{0.5in}\le \sum_{j=1}^{I+1} \lambda^j \max_{w \in \text{Co}f(x, u^j, \mathcal V)}\Big[V_\tau(t_{k+1}, x + \tau w, p^j) + C_1C_2\tau^2 + C(L-(k+1))\tau^2\Big]\\
    &\hspace{0.5in}\le V_\tau(t_k, x, p) + (C_1C_2 + 2)\tau^2 + C(L-(k+1))\tau^2. 
\end{align*}
The term $2\tau^2$ arises from two approximation steps: first $\tau^2$ from selecting an approximately optimal convex decomposition in the definition of $V_\tau(t_k, x, p)$, and the second $\tau^2$ from choosing $u^j$ to be only $\tau^2$-optimal for $\min_u \max_v(\cdot)$ so that they are Borel measurable with respect to $(x, p)$. 

Then, choosing $C = (C_1C_2 + 2)$ satisfies (\ref{eq:localerror}).

    Since we have, $L = T/\tau$, and $t_k = k\tau$. We get
        $$\max_{\zeta \in \mathcal{Z}} J(t_k,x,p;\{\eta_{i,{\tau}}^\dagger\}, \zeta) \le V_\tau(t_k, x, p) + C(T - t_k)\tau$$

      Note that compared with the proof provided in \cite{cardaliaguet2009numerical} Theorem 4.1, our proof incorporates instantaneous costs and clarified the stage-wise $\mathcal O(\tau)$ error in (\ref{eq:localerrorinter}).
    %       Briefly, by applying Lemma \ref{lem:localerror}, we first show
    % $$\max_{\zeta \in \mathcal{Z}} J(t_k,x,p;\{\eta_{i,{\tau}}\}^\dagger, \zeta) \le V_\tau(t_k, x, p) + C(L-k)\tau^2$$

    % where, $L = T/\tau$, $k = (T-t_k)/\tau$, and $t_k = k\tau$. Substituting, we get
    %     $$\max_{\zeta \in \mathcal{Z}} J(t_k,x,p;\{\eta_{i,{\tau}}\}^\dagger, \zeta) \le V_\tau(t_k, x, p) + C(T - t_k)\tau$$
\end{proof}

\subsection{Proof of Theorem~\ref{thm:convergence}}

\begin{proof}
    Our proof focuses on the primal game. The same technique can be applied to the dual game. First, it is easy to see $V(t,x,p) \leq \max_{\zeta \in \mathcal{Z}} J(t,x,p;\{\eta_{i,{\tau}}^\dagger\}, \zeta)$ because $\{\eta_{i,{\tau}}^\dagger\}$ is not necessarily NE in $G$. 
    
    Next, we need to show 
    \begin{equation}\label{eq:convergence_proof1}
        0 \le \max_{\zeta \in \mathcal{Z}} J(t,x,p;\{\eta_{i,{\tau}}\}^\dagger, \zeta) - V(t, x, p) \le \epsilon
    \end{equation}
    i.e., applying $\{\eta_{i,{\tau}}^\dagger\}$ (which solves $G_\tau$) to $G$ will yield a value not far from the true value of $G$.

Let us fix some $t'$, $x'$, and $p'$, and some $k$ such that $t' \in [t_{k-1}, t_{k})$. Then, 

\begin{align*}
    \max_{\zeta\in\mathcal Z_r(t')}J(t',x',p';\{\eta_{i,{\tau}}^\dagger\},\zeta) &= \max_{\zeta \in \mathcal Z(t')}J(t',x',p';\{\eta_{i,{\tau}}^\dagger\},\zeta)\\
    &\le \max_{|x-x'|\le F\tau} \max_{\zeta \in \mathcal Z(t_k)}J(t_k,x,p';\{\eta_{i,{\tau}}^{x \dagger}\},\zeta) + \tau||l||_\infty
\end{align*}
where $\{\eta_{i,\tau}^{x \dagger}\}$ is the strategy at $(t_k, x, p')$, $F$ is a bound on $f$, and $||l||_\infty$ term is due to bounded instantaneous cost. Hence, from Lemma \ref{lem:globalerror}
$$\max_{\zeta \in \mathcal(t')Z} J(t', x', p', \{\eta_{i, \tau}\}^\dagger, \zeta) \le \max_{|x-x'|\le F\tau}V_\tau(t_k, x, p') + \mathcal O(\tau)$$

Here, $\mathcal{O}(\tau)$ absorbs $C(T-t_k)\tau + \tau||l||_\infty$. Together with $\tau = T/L \rightarrow 0$ and Lemma \ref{lem:unifconv}, $\max_{|x-x'| \le F\tau}V_\tau(t_k, x, p')$ converges to $V(t', x', p')$ and $\mathcal O(\tau) \rightarrow 0$

Therefore, for any $\epsilon>0$ there exists $\tau^*>0$ such that for all
$\tau\in(0,\tau^*]$ inequality (\ref{eq:convergence_proof1}) holds.
\end{proof}
\section{Prediction Error of Value Approximation}\label{app:complexity}
Here we show that CAMS shares the same exponential error propagation as in standard approximate value iteration (AVI). The only difference from AVI is that the measurement error in CAMS comes from numerical approximation of the minimax problems rather than randomness in state transition and rewards. To start, let the true value be $V(t, x, p)$. Following \cite{zanette2019limiting}, the prediction error $\epsilon^{bias}_t := \max_{x,p} |\hat{V}_t(x,p) - V(t,x,p)|$
% \begin{equation}
%     \epsilon^{bias}_t := \max_{x,p} |\hat{V}_t(x,p) - V(t,x,p)|
% \end{equation} 
is affected by (1) the prediction error $\epsilon^{bias}_{t+\tau}$ propagated back from $t+\tau$, (2) the minimax error $\epsilon^{minmax}_t$ caused by limited iterations in solving the minimax problem at each collocation point:
% \begin{equation}
%     \epsilon^{minmax}_t = \max_{(x,p) \in \mathcal{S}_t} |\tilde{V}(t,x,p) - V(t,x,p)|,
% \end{equation}
$\epsilon^{minmax}_t = \max_{(x,p) \in \mathcal{S}_t} |\tilde{V}(t,x,p) - V(t,x,p)|$,
and (3) the approximation error due to the fact that $V(t,\cdot,\cdot)$ may not lie in the model hypothesis space $\mathcal{V}_t$ of $\hat{V}_t$: 
% \begin{equation}
%     \epsilon^{app}_t = \max_{x,p} \min_{\hat{V}_t \in \mathcal{H}_t} |\hat{V}_t(x,p) - V(t, x,p)|.
% \end{equation}
$\epsilon^{app}_t = \max_{x,p} \min_{\hat{V}_t \in \mathcal{V}_t} |\hat{V}_t(x,p) - V(t, x,p)|$.

\vspace{-0.1in}
\paragraph{Approximation error.} For simplicity, we will abuse the notation by using $x$ in place of $(x,p)$ and omit time dependence of variables when possible. In practice we consider $\hat{V}_t$ as neural networks that share the architecture and the hypothesis space. Note that $\hat{V}_T(\cdot) = V(T,\cdot)$ is analytically defined by the boundary condition and thus $\epsilon^{app}_T = \epsilon^{bias}_T = 0$. To enable the analysis on neural networks, we adopt the assumption that $\hat{V}$ is infinitely wide and that the resultant neural tangent kernel (NTK) is positive definite. Therefore from NTK analysis~\citep{jacot2018neural}, $\hat{V}$ can be considered a kernel machine equipped with a kernel function $r(x^{(i)},x^{(j)}) := \langle\phi(x^{(i)}), \phi(x^{(j)})\rangle$ defined by a feature map $\phi: \mathcal{X} \rightarrow \mathbb{R}^{d_\phi}$. Given training data $\mathcal{S}=\{(x^{(i)}, V^{(i)})\}$, let $r(x)[i] := r(x^{(i)},x)$, $R_{ij} := r(x^{(i)}, x^{(j)})$, $V_\mathcal{S} := [V^{(1)},...,V^{(N)}]^\top$, $\Phi_\mathcal{S} := [\phi(x^{(1)}),...,\phi(x^{(N)})]$, and $w_\mathcal{S}:= \Phi_\mathcal{S}(\Phi_\mathcal{S}^\top \Phi_\mathcal{S})^{-1}V_\mathcal{S}$ be model parameters learned from $\mathcal{S}$, then
\vspace{-0.05in}
\begin{equation}
    \hat{V}(x) = r(x)^\top R^{-1}V_\mathcal{S} =\langle\phi(x), w_\mathcal{S}\rangle
\vspace{-0.05in}
\end{equation} 
% To focus on the influence of approximation error on the prediction, we assume for now $V_{\mathcal{S}}$ has no minimax error. 
is a linear model in the feature space. Let $\theta^{\phi(x)} := r(x)^\top R^{-1}$ and $C := \max_{x } \|\theta^{\phi(x)}\|_1$. 
% Notice that a proper kernel function is bounded: $\max_{x,y} |r(x,y)| \leq M$, and let $\lambda > 0$ be the smallest eigenvalue of $R$. Lem.~\ref{lemma:approximation1} characterizes the upper bound of $C := \max_{x} \| \theta^{\phi(x)} \|_1 $:
% \begin{lemma}\label{lemma:approximation1}
%     $C \leq NM/\lambda$.
% \end{lemma}
% \begin{proof}
% We can first bound the L2 norm $\|R^{-1}r(x)\|_2$:
% \begin{equation}
%     \|R^{-1}r(x)\|_2 \leq \|R^{-1}\|_2 \|r(x)\|_2 = \frac{\|r(x)\|_2}{\lambda} \leq \frac{\sqrt{NM^2}}{\lambda}.
% \end{equation}
% Then since $\|a\|_1 \leq \sqrt{N}\|a\|_2$ for any $a \in \mathbb{R}^N$, we have $ \|R^{-1}r(x)\|_2 \leq NM/\lambda$.
% \end{proof}
Further, let $\mathcal{S}^\dagger := \argmin_{\mathcal{S}} |\langle\phi(x), w_\mathcal{S}\rangle - V(x)|$ and $w^\dagger := w_{\mathcal{S}^\dagger}$, i.e., $w^\dagger$ represents the best hypothetical model given sample size $N$. Since $N$ is finite, the data-dependent hypothesis space induces an approximation error $\epsilon^{app}_t := \max_x |\langle\phi(x), w^\dagger\rangle - V(x)|$. From standard RKHS analysis, we have $\epsilon^{app}_t \propto N^{-\frac{1}{2}}$.
% and define
% \begin{equation}
%     \epsilon := \max_{x} \min_{\mathcal{S}} |<\phi(x), w_\mathcal{S}> - V(x)|  
% \end{equation}
% as the inherent model error. 
% Then we have
% \begin{lemma}\label{lemma:approximation2}
% $\epsilon^{bias}_t \leq C \epsilon^{app}_t.$
% \end{lemma}
% \begin{proof}
%  First note that $\phi(x) = \Phi_{\mathcal{S}} \theta^{\phi(x)}$ when the resultant kernel is positive definite. Then the following proof directly follows that of Lem. 1 in \cite{***lavier}:
% \begin{equation}
%     \begin{aligned}
%         \max_x |\hat{V}(x) - V(x)| & \leq \max_x |\hat{V}(x) - <\phi(x), w^*>| + \max_x |<\phi(x), w^*> - V(x)| \\
%         & = \max_x |<\theta^{\phi(x)}, V_\mathcal{S}> - <\phi(x), w^*>| + \epsilon^{app} \\
%         & = \max_x |<\theta^{\phi(x)}, V_\mathcal{S}> - <\Phi_{\mathcal{S}} \theta^{\phi(x)}, w^*>| + \epsilon^{app} \\
%         & = \max_x |<\theta^{\phi(x)}, V_\mathcal{S} - \Phi_{\mathcal{S}}^\top  w^*>| + \epsilon^{app} \\
%         & \leq \max_x \|\theta^{\phi(x)}\|_1 \epsilon^{app} + \epsilon^{app} = (NM/\lambda + 1)\epsilon^{app}.
%     \end{aligned}
% \end{equation}
% \end{proof}
\paragraph{Error propagation.} 
% Now we study the bound on $\epsilon^{bias}_t$ when $\epsilon^{minmax}_t$ and $\epsilon^{bias}_{t+1}$ are accounted. 
Recall that we approximately solve \ref{eq:opt_prob} at each collocation point.
% \begin{equation}
%     \tilde{V}(t,x) = \min_{\{\lambda\},\{p\},\{u\}} \max_{\{v\}} \sum_{k=1}^I \lambda^k \hat{V}_{t+1}(x'(u^k,v^k),p^k) + <p^k, l(u^k,v^k)>,
%     \label{eq:errorprop1}
% \end{equation}
% where all variables are constrained. 
Let $z := \{\lambda, p, u, v\}$ be the collection of variables and $\tilde{z}$ be the approximated saddle point resulting from DS-GDA. Let $\tilde{V}(t,x,\tilde{z})$ be the approximate value at $(t,x)$ and let $V(t,x,z^*)$ be the value at the true saddle point $z^*$. Lemma~\ref{lemma:propagation1} bounds the error of $\tilde{V}(t,x,\tilde{z})$:

\begin{lemma}\label{lemma:propagation1}
    $\max_{x} |\tilde{V}(t,x,\tilde{z}) - V(t,x,z^*)| \leq \epsilon^{bias}_{t+\tau} + \epsilon^{minmax}_t$. 
\end{lemma}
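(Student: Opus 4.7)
The plan is to insert the exact solution of the DS-GDA's target problem as an intermediate quantity and bound each resulting piece separately. Let
\begin{equation*}
    F(z; W) := \sum_{k=1}^I \lambda^k \left( W(x^k, p^k) + \tau\, \mathbb{E}_{i\sim p^k}[l_i(u^k,v^k)] \right)
\end{equation*}
denote the \ref{eq:opt_prob} objective parameterized by a next-step value function $W$, with $z=(\{u^k\},\{v^k\},\{\alpha_{ki}\})$ subject to the constraints of \ref{eq:opt_prob}. Then $V(t,x,z^\ast) = F(z^\ast; V(t+\tau,\cdot,\cdot))$ where $z^\ast$ is the exact saddle point when the true next-step value is used, and $\tilde V(t,x,\tilde z) = F(\tilde z; \hat V_{t+\tau})$ where $\tilde z$ is the DS-GDA iterate on the \emph{approximate} problem $F(\cdot; \hat V_{t+\tau})$. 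Letting $z^{\ast\ast}$ denote the exact saddle point of $F(\cdot; \hat V_{t+\tau})$, I would apply the triangle inequality
\begin{equation*}
    |\tilde V(t,x,\tilde z) - V(t,x,z^\ast)| \le |F(\tilde z; \hat V_{t+\tau}) - F(z^{\ast\ast}; \hat V_{t+\tau})| + |F(z^{\ast\ast}; \hat V_{t+\tau}) - F(z^\ast; V(t+\tau,\cdot,\cdot))|.
\end{equation*}

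The first term on the right is the DS-GDA optimality residual on the approximate one-step problem and is bounded by $\epsilon^{minmax}_t$ by construction. For the second term, the key structural observation is that $F(z;W)$ depends on $W$ as a convex combination with weights $\lambda^k \ge 0$ summing to one, so a uniform pointwise bound on $|W_1 - W_2|$ transfers directly to a uniform bound on $F$:
\begin{equation*}
    |F(z;\hat V_{t+\tau}) - F(z;V(t+\tau,\cdot,\cdot))| \le \sum_{k=1}^I \lambda^k \,\epsilon^{bias}_{t+\tau} = \epsilon^{bias}_{t+\tau}
\end{equation*}
for every feasible $z$. Because $\min$-$\max$ over the feasible sets in $\{u^k\},\{\alpha_{ki}\}$ and $\{v^k\}$ is $1$-Lipschitz in the sup-norm of its objective, the resulting saddle values also differ by at most $\epsilon^{bias}_{t+\tau}$, controlling the second term. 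Summing the two bounds and taking the supremum over $x$ gives the claim.

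The main obstacle is interpreting the ``exact saddle point'' $z^{\ast\ast}$ in the nonconvex-nonconcave setting: a pointwise saddle need not exist, so I would reformulate the argument using the leader--follower values $\min_{\{u^k\},\{\alpha_{ki}\}}\max_{\{v^k\}} F$, which are well-defined and still obey the sup-norm Lipschitz property via successive application of $\max$ and $\min$ to a uniform pointwise inequality. Under this reformulation, $\epsilon^{minmax}_t$ is naturally interpreted as the gap between the DS-GDA objective value at $\tilde z$ and the exact $\min$-$\max$ of the approximate one-step problem, which aligns with its informal description as the error caused by a finite number of DS-GDA iterations.
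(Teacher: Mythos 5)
Your proof is correct and follows the same skeleton as the paper's: a single triangle inequality that separates the solver error from the propagated bias, with the constraint $\sum_{k}\lambda^k = 1$ doing the work of transferring a uniform bound on the next-step value error into a uniform bound on the one-step objective. The only real difference is the choice of intermediate quantity. The paper inserts $\tilde{V}(t,x,z^*)$, i.e., the approximate objective evaluated at the \emph{true} problem's saddle point, so its second term is a pointwise comparison at a common $z$ and needs nothing beyond the convex-combination bound, while its first term charges $\epsilon^{minmax}_t$ for the gap between the DS-GDA iterate and the saddle point of a problem DS-GDA never actually solves. You instead insert the exact $\min$--$\max$ value of the \emph{approximate} one-step problem, which matches the natural reading of $\epsilon^{minmax}_t$ as the residual of the solver on the problem it is run on, but then requires the extra (standard) observation that the $\min$--$\max$ operator is nonexpansive in the sup-norm of its objective. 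Your bookkeeping is arguably the cleaner of the two given the paper's informal definition of $\epsilon^{minmax}_t$, and your remark about replacing pointwise saddle points with leader--follower values in the nonconvex--nonconcave setting addresses a subtlety the paper leaves implicit; either way the final bound $\epsilon^{minmax}_t + \epsilon^{bias}_{t+\tau}$ is the same.
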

% \paragraph{Proof of Lem.~\ref{lemma:propagation1}}
% \begin{lemma}\label{lemma:propagation1}
%     $\max_{x} |\tilde{V}(t,x,\tilde{z}) - V(t,x,z^*)| \leq \epsilon^{bias}_{t+1} + \epsilon^{minmax}_t$. 
% \end{lemma}
\begin{proof}
    Note that $\sum_{k=1}^I \lambda^k = 1$. Then
    \begin{equation}
    \begin{aligned}
        \max_{x} |\tilde{V}(t,x,\tilde{z}) - V(t,x,z^*)| & \leq \max_{x} |\tilde{V}(t,x,\tilde{z}) - \tilde{V}(t,x,z^*)| + \max_{x} |\tilde{V}(t,x,z^*) - V(t,x,z^*)| \\
        & \leq \epsilon^{minmax}_t + \max_x |\sum_{k=1}^I \lambda^k(\tilde{V}(t+\tau, x',p^k) - V(t+\tau, x', p^k) | \\
        & \leq \epsilon^{minmax}_t + \epsilon^{bias}_{t+\tau}.
    \end{aligned}
    \end{equation}
\end{proof}
% \begin{proof}
%     Note that $\sum_{k=1}^I \lambda^k = 1$. Then
%     \begin{equation}
%     \begin{aligned}
%         \max_{x} |\tilde{V}(t,x,\tilde{z}) - V(t,x,z^*)| & \leq \max_{x} |\tilde{V}(t,x,\tilde{z}) - \tilde{V}(t,x,z^*)| + \max_{x} |\tilde{V}(t,x,z^*) - V(t,x,z^*)| \\
%         & \leq \epsilon^{minmax}_t + \max_x |\sum_{k=1}^I \lambda^k(\tilde{V}(t+1, x',p^k) - V(t+1, x', p^k) | \\
%         & \leq \epsilon^{minmax}_t + \epsilon^{bias}_{t+1}.
%     \end{aligned}
%     \end{equation}
% \end{proof}
Now we can combine this measurement error with the inherent approximation error $\epsilon^{app}_t$ to reach the following bound on the prediction error $\epsilon^{bias}_t$:

\begin{lemma}\label{lemma:propagation2}
$\max_{x} |\hat{V}_t(x) - V(t,x)| \leq C_t(\epsilon^{minmax}_t + \epsilon^{bias}_{t+\tau} + \epsilon^{app}_t) + \epsilon^{app}_t$. 
\end{lemma}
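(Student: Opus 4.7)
The plan is to view $\hat V_t$ as a linear (in training targets) kernel-interpolation predictor, to pass through the best hypothesis $\langle \phi(\cdot), w^\dagger \rangle$ as an intermediate reference, and to invoke Lemma~\ref{lemma:propagation1} to convert the per-sample minimax error into a pointwise training-target error. The constant $C_t$ will appear as the $\ell_1$ norm of the kernel-regression coefficients at the test input, and the ``extra'' $\epsilon^{app}_t$ that sits outside the $C_t$ factor will come from the final test-point triangle inequality against $w^\dagger$.

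First I would split
\begin{equation}
    |\hat V_t(x) - V(t,x)| \leq |\hat V_t(x) - \langle \phi(x), w^\dagger \rangle| + |\langle \phi(x), w^\dagger \rangle - V(t,x)|,
\end{equation}
where the second term is at most $\epsilon^{app}_t$ by the very definition of $w^\dagger$, producing the trailing $+\,\epsilon^{app}_t$. For the first term I would use linearity of the kernel machine in its training targets: defining $V^{\dagger}_\mathcal{S}[i] := \langle \phi(x^{(i)}), w^\dagger \rangle$, the infinite-width / PD-NTK assumption adopted in the appendix ensures that interpolating these targets reproduces the function $\langle \phi(\cdot), w^\dagger \rangle$ everywhere, so
\begin{equation}
    |\hat V_t(x) - \langle \phi(x), w^\dagger \rangle| = |\theta^{\phi(x)} (V_\mathcal{S} - V^{\dagger}_\mathcal{S})| \leq C_t\, \|V_\mathcal{S} - V^{\dagger}_\mathcal{S}\|_\infty
\end{equation}
by H\"older's inequality together with $C_t := \max_x \|\theta^{\phi(x)}\|_1$. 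Finally, at each training point I would decompose
\begin{equation}
    |V_\mathcal{S}[i] - V^{\dagger}_\mathcal{S}[i]| \leq |\tilde V(t,x^{(i)},\tilde z^{(i)}) - V(t,x^{(i)})| + |V(t,x^{(i)}) - \langle \phi(x^{(i)}), w^\dagger \rangle|,
\end{equation}
bounding the first summand by $\epsilon^{minmax}_t + \epsilon^{bias}_{t+\tau}$ via Lemma~\ref{lemma:propagation1} and the second by $\epsilon^{app}_t$ via the defining property of $w^\dagger$. Taking the max over $i$, then over $x$, assembles the stated inequality.

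The step I expect to be the main obstacle is justifying that the kernel interpolator applied to the ideal targets $V^{\dagger}_\mathcal{S}$ reproduces $\langle \phi(\cdot), w^\dagger \rangle$ at test inputs outside $\mathcal{S}$, since in general a ridgeless interpolator returns the minimum-RKHS-norm fit rather than any prescribed $w^\dagger$. Under the infinite-width, positive-definite-NTK setup of the preceding paragraph, this is resolved by taking $w^\dagger$ to be the minimum-norm representer of the best in-class approximation, so that $w^\dagger$ lies in the span of $\{\phi(x^{(i)})\}$ and the representer-theorem calculation goes through; I would flag this as an explicit modeling assumption prior to the amplification step, after which the remainder of the argument is a clean combination of two triangle inequalities and one H\"older bound.
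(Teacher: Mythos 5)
Your proposal follows the paper's own argument essentially step for step: the same triangle inequality through the reference hypothesis $\langle \phi(\cdot), w^\dagger\rangle$, the same H\"older bound via $C_t = \max_x \|\theta^{\phi(x)}\|_1$ applied to the training-target error, and the same decomposition of that error into the Lemma~\ref{lemma:propagation1} bound plus $\epsilon^{app}_t$. The only difference is that you explicitly flag and resolve the reproduction property (that the interpolator applied to the ideal targets returns $\langle\phi(\cdot),w^\dagger\rangle$ off the training set), which the paper uses implicitly in the second line of its chain of inequalities; this is a point of added care, not a divergence in approach.
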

% \paragraph{Proof of Lem.~\ref{lemma:propagation2}}
% \begin{lemma}\label{lemma:propagation2}
% $\max_{x} |\hat{V}_t(x) - V(t,x)| \leq (NM/\lambda + 1)(\epsilon^{minmax}_t + \epsilon^{bias}_{t+1} + \epsilon^{app}_t) + \epsilon^{app}_t$.    
% \end{lemma}
\begin{proof}
    \begin{equation}
    \begin{aligned}
        &\max_{x} |\hat{V}_t(x) - V(t,x)|  \leq \max_{x} |\hat{V}_t(x) - \langle\phi(x),w^\dagger\rangle| + \max_{x} |\langle\phi(x),w^\dagger\rangle - V(t,x)| \\
        & \leq \max_{x}|\langle\theta^{\phi(x)}, \tilde{V}(t,x) - V(t,x)\rangle| +  \max_{x}|\langle\theta^{\phi(x)}, V(t,x) - \phi(x)^\top w^\dagger\rangle| + \epsilon^{app}_t \\
        & \leq C (\epsilon^{minmax}_t + \epsilon^{bias}_{t+\tau} + \epsilon^{app}_t) + \epsilon^{app}_t.
    \end{aligned}
    \end{equation}
\end{proof}
% \begin{proof}
%     \begin{equation}
%     \begin{aligned}
%         \max_{x} |\hat{V}_t(x) - V(t,x)| & \leq \max_{x} |\hat{V}_t(x) - <\phi(x),w^*>| + \max_{x} |<\phi(x),w^*> - V(t,x)| \\
%         & \leq \max_{x}|<\theta^{\phi(x)}, \tilde{V}(t,x) - V(t,x)>| +  \max_{x}|<\theta^{\phi(x)}, V(t,x) - \phi(x)^\top w^*>| + \epsilon^{app}_t \\
%         & \leq C (\epsilon^{minmax}_t + \epsilon^{bias}_{t+1} + \epsilon^{app}_t) + \epsilon^{app}_t.
%     \end{aligned}
%     \end{equation}
% \end{proof}
Lem.~\ref{lemma:propagation3} characterizes the propagation of error:
\begin{lemma}\label{lemma:propagation3}
    Let $\epsilon^{app}_t \leq \epsilon^{app}$, $\epsilon^{minmax}_t \leq \epsilon^{minmax}$, and $C_t \leq C$ for all $t\in[T]$. If $\epsilon^{app}_T = 0$, then $\epsilon^{bias}_0 \leq T C^T (\epsilon^{app} + C(\epsilon^{minmax} + \epsilon^{app}))$.
\end{lemma}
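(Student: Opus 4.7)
The plan is straightforward: iterate the one-step prediction bound from Lemma~\ref{lemma:propagation2} backward from $t=T$ to $t=0$ under the stated uniform bounds, then sum the resulting geometric series. With $C_t \leq C$, $\epsilon^{app}_t \leq \epsilon^{app}$, and $\epsilon^{minmax}_t \leq \epsilon^{minmax}$, Lemma~\ref{lemma:propagation2} collapses to the scalar one-step recursion
\[
\epsilon^{bias}_t \;\leq\; C\,\epsilon^{bias}_{t+\tau} + \beta, \qquad \beta := C(\epsilon^{minmax} + \epsilon^{app}) + \epsilon^{app}.
\]

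The base case to anchor the unrolling is $\epsilon^{bias}_T = 0$. This is justified as follows: the terminal value $V(T,\cdot,\cdot)$ is specified analytically via the DP boundary condition given after Eq.~\ref{eq:primal-subdp}, so $\hat{V}_T$ is set directly from this closed form rather than trained from minimax samples---no minimax problem is solved at $T$ and no approximation is committed there, which the hypothesis $\epsilon^{app}_T = 0$ expresses by ensuring the terminal value already lies in the hypothesis class.

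Unrolling the recursion across all $T$ backward induction steps (interpreting $T$ here as the number of backward steps, consistent with the $K$-step setting of the surrounding complexity theorem) gives
\[
\epsilon^{bias}_0 \;\leq\; \beta \sum_{k=0}^{T-1} C^k.
\]
Since $C \geq 1$, I bound the geometric sum crudely by $T C^T$ (rather than using the tighter $(C^T-1)/(C-1)$) so that the final expression matches the statement exactly, yielding $\epsilon^{bias}_0 \leq T C^T\bigl(\epsilon^{app} + C(\epsilon^{minmax} + \epsilon^{app})\bigr)$.

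The argument is essentially mechanical---a textbook AVI-style error-propagation telescoping---so there is no real obstacle. The only bookkeeping care is avoiding a spurious extra $\epsilon^{app}$ term at the terminal step (by invoking $\epsilon^{bias}_T = 0$ directly rather than re-applying Lemma~\ref{lemma:propagation2} at $t = T$) and using the loose sum bound above. The exponential $C^T$ factor itself is unavoidable once Lemma~\ref{lemma:propagation2} introduces the multiplicative factor $C$ per step, which is precisely the familiar reason AVI amplifies error exponentially in the horizon and motivates the multigrid remedy of Sec.~\ref{sec:multigrid}.
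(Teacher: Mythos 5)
Your proposal is correct and follows essentially the same route as the paper: apply the one-step bound of Lemma~\ref{lemma:propagation2} under the uniform bounds on $C_t$, $\epsilon^{app}_t$, $\epsilon^{minmax}_t$, anchor at $\epsilon^{bias}_T = 0$ (the terminal value being given analytically), unroll to the geometric sum $\bigl(\epsilon^{app} + C(\epsilon^{minmax}+\epsilon^{app})\bigr)\sum_{k=0}^{T-1}C^k = \bigl(\epsilon^{app} + C(\epsilon^{minmax}+\epsilon^{app})\bigr)\frac{1-C^T}{1-C}$, and loosely bound the sum by $TC^T$ using $C\geq 1$. Your write-up simply makes explicit the induction the paper compresses into one line.
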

% \paragraph{Proof of Lem.~\ref{lemma:propagation3}}
% Lem.~\ref{lemma:propagation3} analyzes the propagation of error:
% \begin{lemma}\label{lemma:propagation3}
%     Let $\epsilon^{app}_t \leq \epsilon^{app}$, $\epsilon^{minmax}_t \leq \epsilon^{minmax}$, and $C = NM/\lambda$. If $\epsilon^{app}_T = 0$, then $\epsilon^{bias}_0 \leq H C^H (\epsilon^{app} + C(\epsilon^{minmax} + \epsilon^{app}))$.
% \end{lemma}
\begin{proof}
Using Lem.~\ref{lemma:propagation2} and by induction, we have
    \begin{equation}
        \epsilon^{bias}_0 \leq (\epsilon^{app} + C(\epsilon^{minmax} + \epsilon^{app})) \frac{1-C^T}{1-C} \leq T C^T (\epsilon^{app} + C(\epsilon^{minmax} + \epsilon^{app})).
    \end{equation}
\end{proof}
% \begin{proof}
% Using Lem.~\ref{lemma:propagation2} and by induction, we have
%     \begin{equation}
%         \epsilon^{bias}_0 \leq (\epsilon^{app} + C(\epsilon^{minmax} + \epsilon^{app})) \frac{1-C^H}{1-C} \leq H C^H (\epsilon^{app} + C(\epsilon^{minmax} + \epsilon^{app})).
%     \end{equation}
% \end{proof}
We can now characterize the computational complexity of the baseline algorithm through Thm.~\ref{thm:1}, by taking into account the number of DS-GDA iterations and the per-iteration complexity:
\begin{theorem}\label{thm:1}
    For a fixed $T$ and some error threshold $\delta>0$, with a computational complexity of $\mathcal{O}(T^3C^{2T}I^2\epsilon^{-4}\delta^{-2})$, CAMS achieves 
    \begin{equation}
\small
    \max_{(x,p)\in \mathcal{X}\times \Delta(I)} |\hat{V}_0(x,p) - V(0,x,p)| \leq \delta.
\end{equation}
\end{theorem}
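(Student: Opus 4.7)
The plan is to combine three existing ingredients into one cost-accuracy calculation. First, Lemma~\ref{lemma:propagation3} already gives
\[
\epsilon^{bias}_0 \;\leq\; T\,C^T\bigl(\epsilon^{app}+C(\epsilon^{minmax}+\epsilon^{app})\bigr),
\]
so to force $\epsilon^{bias}_0 \leq \delta$ it suffices to pick the minimax tolerance and the training-set size so that both $\epsilon^{minmax}$ and $\epsilon^{app}$ are of order $\delta/(T C^{T+1})$. I would identify the ``$\epsilon$'' appearing in the statement with $\epsilon^{minmax}$, absorbing the consistency condition $\epsilon \lesssim \delta/(T C^{T+1})$ as an implicit requirement on the user-chosen minimax tolerance, and then spend the rest of the argument controlling $\epsilon^{app}$ through the sample size $N$.

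Second, I would invoke the NTK/RKHS estimate $\epsilon^{app}_t = O(N^{-1/2})$ obtained by viewing $\hat V_t$ as the kernel machine constructed in the paragraph preceding Lemma~\ref{lemma:propagation1}. Requiring $\epsilon^{app}_t \leq \delta/(T C^T(1+C))$ at every time step forces $N = \Theta(T^2 C^{2T} \delta^{-2})$ collocation points per layer, which is the source of the $C^{2T}\delta^{-2}$ factor in the final bound. This step also fixes the hypothesis class uniformly in $t$, so that the same constant $C$ controls the propagation inequality at every time slab.

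Third, for each collocation point the sub-problem \ref{eq:opt_prob} (or its dual \ref{eq:opt_prob_dual}) has $O(I^2 + I(d_u+d_v))$ decision variables, so one DS-GDA iteration costs $O(I^2)$ after absorbing $d_u, d_v$ and the fixed-width value-network evaluation into a constant. By the universal complexity result of \citet{zheng2023universal}, $O(\epsilon^{-4})$ DS-GDA iterations are enough to produce an $\epsilon$-stationary point for each sub-problem. Multiplying the number of time steps ($T$), the number of collocation points per step ($N$), the DS-GDA iteration count ($\epsilon^{-4}$), and the per-iteration cost ($I^2$) yields
\[
T \cdot N \cdot \epsilon^{-4} \cdot I^{2} \;=\; \mathcal{O}\!\bigl(T^{3}\,C^{2T}\,I^{2}\,\epsilon^{-4}\,\delta^{-2}\bigr),
\]
which is the stated complexity.

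The main technical obstacle, in my view, is verifying that the sample-complexity claim $\epsilon^{app}_t = O(N^{-1/2})$ holds uniformly across $t$ with a single kernel constant $C \geq 1$: this requires that the backward-propagated value targets stay in a fixed RKHS ball, an assumption implicit in the NTK analysis and in \citet{zanette2019limiting}. Once that uniformity is granted, the remaining pieces---plugging the per-step accuracies into the geometric sum of Lemma~\ref{lemma:propagation3} and summing the DS-GDA work over all $TN$ sub-problems---are routine bookkeeping.
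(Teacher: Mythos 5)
Your proposal follows essentially the same route as the paper's proof: invoke Lemma~\ref{lemma:propagation3} together with $\epsilon^{app}\propto N^{-1/2}$ to get $N=\mathcal{O}(C^{2T}T^2\delta^{-2})$, then multiply the $TN$ minimax sub-problems by the $\mathcal{O}(\epsilon^{-4})$ DS-GDA iterations and the $\mathcal{O}(I^2)$ per-iteration gradient cost. Your added remarks---making explicit that $\epsilon^{minmax}$ must itself satisfy $\epsilon\lesssim\delta/(TC^{T+1})$, and flagging the uniformity of the kernel constant $C$ across time steps---are reasonable clarifications of points the paper leaves implicit, but they do not change the argument.
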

% \paragraph{Proof of Thm.~\ref{thm:1}}
\begin{proof}
    From Lem.~\ref{lemma:propagation3} and using the fact that $\epsilon^{app} \propto N^{-1/2}$, achieving a prediction error of $\delta$ at $t=0$ requires $N = \mathcal{O}(C^{2T}T^2\delta^{-2})$.
    CAMS solves $TN$ minimax problems, each requires a worst-case $\mathcal{O}(\epsilon^{-4})$ iterations, and each iteration requires computing gradients of dimension $\mathcal{O}(I^2)$, considering the dimensionalities of action spaces as constants. This leads to a total complexity of $\mathcal{O}(T^3C^{2T}I^2\epsilon^{-4}\delta^{-2})$. 
\end{proof}

\section{Multigrid Algorithms and Results} 
Fig.~\ref{fig:multigrid} shows a typical 2-level multigrid (Full-Approximation Scheme or FAS) approach. As discussed, FAS has four steps, namely: (1) restriction of the fine-grid approximation and its residual into the coarse grid (red arrows in Fig.~\ref{fig:multigrid}); (2) computation of the coarse-grid solution by incorporating restricted fine-grid residuals; (3) computation of the coarse-grid correction; and finally, (4) prolongation of the coarse-grid correction to the fine-grid (shown by the blue arrows in Fig.~\ref{fig:multigrid}). This can be further extended to $n$-level multigrid by recursively reducing the coarse-grid size until the desired coarsest grid is reached. Alg.~\ref{alg:n_multigrid} presents the $n$-level multigrid algorithm. 
\label{app:multigrid}
\begin{figure}[!h]
    \centering
    \includegraphics[width=0.5\linewidth]{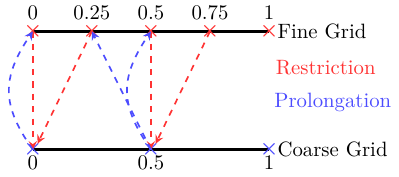}
    \caption{Illustration of 2-level multigrid method.}
    \label{fig:multigrid}
\end{figure}
\begin{algorithm}[!b]
\caption{$n$-Level Multigrid for Value Approximation}
\label{alg:n_multigrid}
\SetAlgoLined
\SetKwInput{Init}{Initialize}
\KwIn{$k_{\max}, k_{\min}, \mathbb{O}$ (minimax solver), $T$ (time horizon), $N$ (number of data points), $\mathcal{R}$ (restriction operator), $\mathcal{P}$ (prolongation operator)}
\BlankLine
\Init{$\mathcal{T}^l \leftarrow [0, l, 2l, \dots, T-l],\; \forall\,l \in \{2^{-k_{\max}}, \dots, 2^{-k_{\min}}\}$, $L \leftarrow 2^{-k_{\min}}$}
\Init{Value networks $\hat{V}_t^l,\; \forall\,t \in \mathcal{T}^l,\; \forall\,l \in \{2^{-k_{\max}}, \dots, 2^{-k_{\min}}\}$, policy set $\Pi \leftarrow \varnothing$}\;
\While{resources not exhausted or until convergence}{
    $R \leftarrow \varnothing,\; E^L \leftarrow \varnothing,\; \mathcal{S} \leftarrow \varnothing$\;
    
    Initialize coarsest-grid correction networks $\varepsilon_t^L,\forall\,t \in \mathcal{T}^L$\;
    
    $\mathcal{S}[t] \leftarrow \text{sample }N\;(t,x,p),\; \forall\,t \in \mathcal{T}^{k_{\text{max}}}$\;
    
    \tcp{down-cycle}
    \For{$k \gets k_{\max}$ \textbf{down to} $k_{\min}+1$}{
        Compute target via $\mathbb{O}^k$ (init. with $\pi_t$ if $\Pi[k] \neq \varnothing$), 
        and store updated policies $\pi_t$ in $\Pi[k],\; \forall\,t \in \mathcal{T}^k$\;
        
        Compute residuals $r^k[t],\; \forall\,t \in \mathcal{T}^k$\;
        
        \If{$k \neq k_{\max}$}{
            $r^k_t \leftarrow r^k_t + \mathcal{R}r_t^{k+1},\; \forall\,t \in \mathcal{T}^k$\;
        }
        Store $r^k_t$ in $R[k]$\;
    }
    \BlankLine
    \For{$t \leftarrow T-L$ \KwTo $0$}{
        \tcp{coarse-solve backwards in time}
        $e^L_t \leftarrow \mathbb{O}^L(\mathcal{R}\hat{V}^l_{t+L} + \varepsilon^L_{t+L}) 
        - \mathbb{O}^L(\mathcal{R}\hat{V}^l_{t+L}) - \mathcal{R}\,r_t^{k_{\min}+1}$
        \tcp*{$e^L_T = 0,\; \varepsilon^L_T = \varnothing$}
        Store $e^L_t$ in $E^L$\;
        
        Fit $\varepsilon_L^t$ to $e_L^t$\;
    }
    \BlankLine
    \tcp{up-cycle}
    \For{$k \gets k_{\min}+1$ \textbf{to} $k_{\max}$}{
        $e^k_t \leftarrow \mathcal{P}(e_t^{k-1}),\; \forall\,t \in \mathcal{T}^k$\;
        
        Update $\hat{V}^k_t \leftarrow \hat{V}^k_t + e^k_t$\;
    }
    \tcp{post smoothing (for all t's and l's)}
    
    $\texttt{target}, \pi_t \leftarrow \mathbb{O}^{l}(\hat{V}^l_{t+l})$ (initialized with $\pi_t$)
    
    Fit $\hat{V}^l_t$ to $\texttt{target}$ and replace $\pi_t$ in $\Pi[l]$\;
}
\end{algorithm}
% \paragraph{Solution accuracy using multigrid for Hexner's game.} 
In Fig.~\ref{fig:multigrid_trajs} we compare learned trajectories via the multigrid approach against the ground truth. The learned trajectories closely resemble the ground truth as P1 successfully concealing its payoff type until a critical time. 
In Fig.~\ref{fig:multigrid_trajs} we visualize the NE trajectories of P2 by solving the dual game. 
\begin{figure}[!h]
    \centering
    \includegraphics[width=\linewidth]{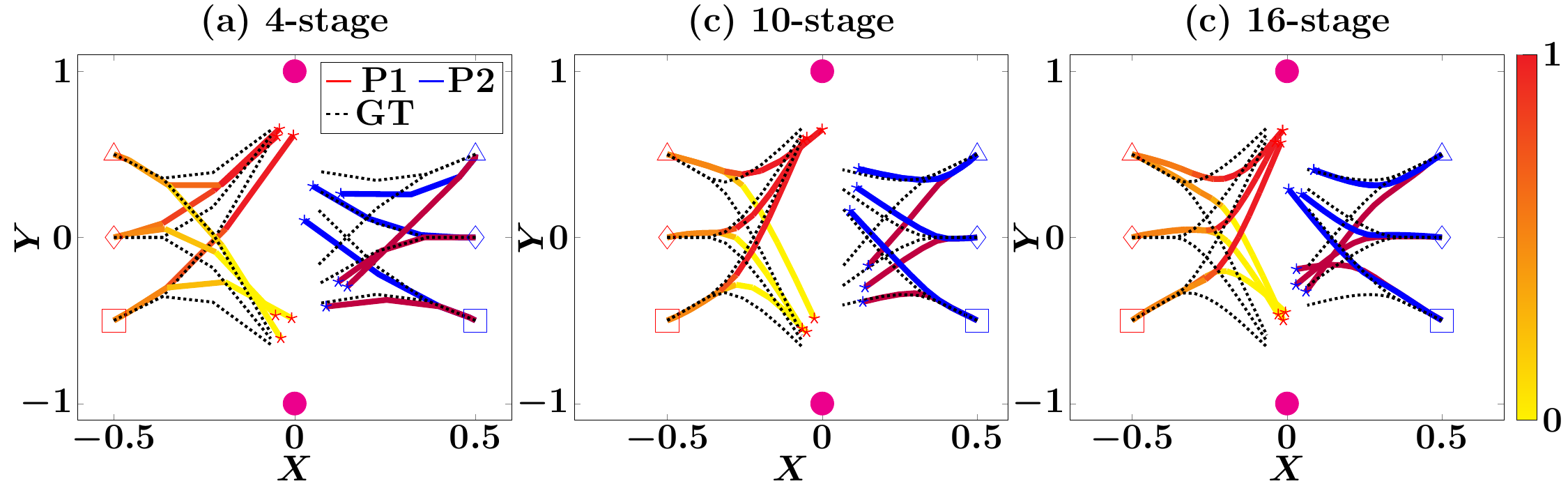}
    \caption{Comparison of trajectories generated using value learned via multigrid method vs the ground truth.}
    \label{fig:multigrid_trajs}
\end{figure}
\begin{figure}[!h]
    \centering
    \includegraphics[width=.8\linewidth]{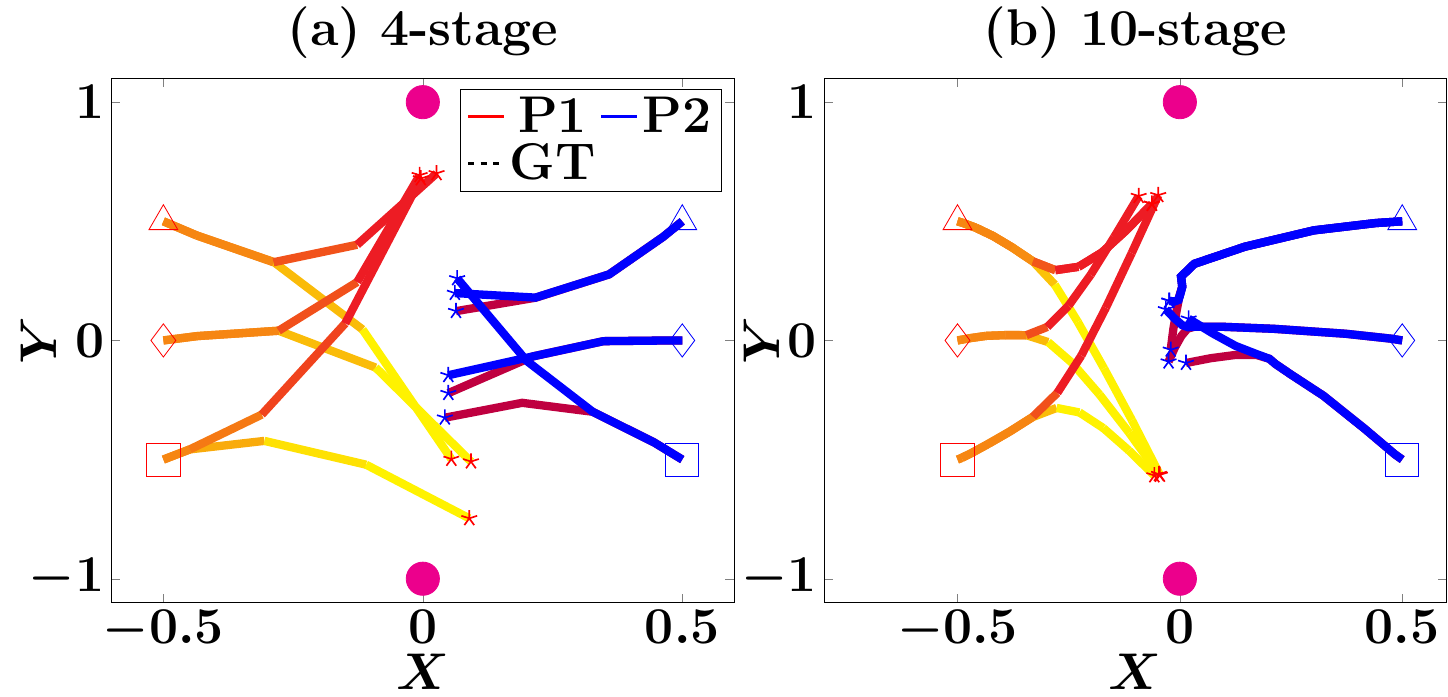}
    \caption{Trajectories when P1 and P2 play their respective primal and dual game. P1's actions are a result of the primal value function whereas P2's actions are a result of the dual value function. Both primal and dual values are learned using multigrid approach.}
    \label{fig:multigrid_dual_trajs}
\end{figure}

% \newpage
% \clearpage
% \section{High Dimensional Hexner's Game}\label{app:hexner_high_dim}
% \paragraph{3D Hexner's game.} To demonstrate the scalability of CAMS, we solve a 3D Hexner's game where the joint action space is now 6D. Accordingly, the state space becomes 12D and the value becomes 13D. Resultant trajectories are visualized in Fig.~\ref{fig:3d_case}. Similar to the 2D case, P1 learns to correctly conceal his target until some critical time. 
% \begin{figure}[!h]
% \vspace{-0.05in}
%     \centering
%     \includegraphics[width=\linewidth]{figures/3d_case_new.pdf}
%     \vspace{-0.2in}
%     \caption{3D Hexner's Game. Color shades indicate the current public belief.}
%     \label{fig:3d_case}
%     \vspace{-0.2in}
% \end{figure}

\section{Analytical Examples}\label{app:analytical}

In this section, we walk through the derivation of analytical NEs for two problems: Hexner's game and a zero-sum variant of the classic beer-quiche game. The former is differential where players take actions simultaneously; the latter is dynamic and turn-based. Both have one-sided payoff information and finite time horizons. These examples are reproduced from \citet{ghimire24a} with permission.

\subsection{Hexner's Game: Analytical Solution}\label{app:hexner}
%%%%%%%%%%%%%%%%%%%%%%%%%%%%%%%%%%%%%%%%%%%%%%%%
Here we discuss the solution to Hexner's game using primal and dual formulations (i.e., \ref{eq:opt_prob} and \ref{eq:opt_prob_dual}) on a differential game as proposed in \citet{hexner1979differential}. 
Consider two players with linear dynamics
% \vspace{-0.05in}
\begin{equation*}
    \dot{x}_i = A_i x_i + B_i u_i,
% \vspace{-0.05in}
\end{equation*}
for $i=1,2$, where $x_i(t) \in \mathbb{R}^{d_x}$ are system states, $u_i(t) \in \mathcal{U}$ are control inputs belonging to the admissible set $\mathcal{U}$, 
% $A_i \in \mathbb{R}^{d_x \times d_x}$ and $B_i \in \mathbb{R}^{d_x \times d_u}$. 
$A_i, B_i \in \mathbb{R}^{d_x \times d_x}$. Let $\theta \in \{-1, 1\}$ be Player 1's type unknown to Player 2. 
Let $p_{\theta}$ be Nature's probability distribution of $\theta$. Consider that the game is to be played infinite many times, the payoff is an expectation over $\theta$:
% \vspace{-0.05in}
\begin{equation} \label{eq:value_def}
\small
\begin{aligned}
    J(u_1, u_2) &= \mathbb{E}_{\theta} \Bigl[\int_0^T \left(\|u_1\|^2_{R_1} - \|u_2\|^2_{R_2}\Bigr) dt \right.+ \\ 
    & \quad \Bigl. \|x_1(T) - z\theta \|^2_{K_1(T)} - \|x_2(T) - z\theta \|^2_{K_2(T)}\Bigr],
\end{aligned}
% \vspace{-0.05in}
\end{equation}
where, \( z \in \mathbb{R}^{d_x} \). \( R_1 \) and \( R_2 \) are continuous, positive-definite matrix-valued functions, and \( K_1(T) \) and \( K_2(T) \) are positive semi-definite matrices. All parameters are publicly known except for \( \theta \), which remains private. Player 1’s objective is to get closer to the target \( z\theta \) than Player 2. However, since Player 2 can deduce \( \theta \) indirectly through Player 1’s control actions, Player 1 may initially employ a non-revealing strategy. This involves acting as though it only knows about the prior distribution \( p_{\theta} \) (rather than the true \( \theta \)) to hide the information, before eventually revealing \( \theta \). 

First, it can be shown that players' control has a 1D representation, denoted by $\tilde{\theta}_i \in \mathbb{R}$, through:
\begin{equation*}\label{eq:action}
    u_i = - R_i^{-1}B_i^TK_ix_i + R_i^{-1}B_i^TK_i\Phi_i z \tilde{\theta}_i,
\end{equation*}
for $i = 1,2$, where $\dot{\Phi}_i = A_i\Phi_i$ with boundary condition $\Phi_i(T) = I$, and 
\begin{equation*}
    \dot{K}_i = -A_i^TK_i - K_iA_i + K_i^TB_iR_i^{-1}B_i^TK_i.
    \label{eq:K}
\end{equation*}
Then define a quantity $d_i$ as:
\begin{equation}
    d_i = z^T \Phi^T_i K_i B_i R_i^{-1} B_i^T K_i^T \Phi_i z.
    \label{eq:d}
\end{equation}

With these, the game can be reformulated with the following payoff function:
\begin{equation}
    J(t,\tilde{\theta}_1, \tilde{\theta}_2) = \mathbb{E}_{\theta}\left[\int_{\tau = t}^T (\tilde{\theta}_1(\tau) - \theta)^2 d_1(\tau) - (\tilde{\theta}_2(\tau) - \theta)^2 d_2(\tau) d\tau\right],
\end{equation}
where $d_1$, $d_2$, $p_{\theta}$ are common knowledge; $\theta$ is only known to Player 1; the scalar $\tilde{\theta}_1$ (resp. $\tilde{\theta}_2$) is Player 1's (resp. Player 2's) strategy. We consider two player types $\theta \in \{-1, 1\}$ and therefore $p_{\theta} \in \Delta (2)$. 

Then by defining critical time:
\begin{equation*}
    t_r = \argmin_{t} \int_{0}^{t} (d_1(s) - d_2(s))ds,
\end{equation*}
we have the following equilibrium:
\begin{align}
    &\tilde{\theta}_1(s) = \tilde{\theta}_2(s) = 0 \quad \forall s\in [0, t_r] \\  
    &\tilde{\theta}_1(s) = \tilde{\theta}_2(s) = \theta \quad \forall s\in (t_r, T],
\end{align}

To solve the game via primal-dual formulation, we introduce a few quantities. First, introduce time stamps $[T_k]_{k=1}^{2r}$ as roots of the time-dependent function $d_1-d_2$, with $T_0 = 0$, $T_{2q+1} = t_r$, and $T_{2r+1} = T$. Without loss of generality, assume:
\begin{align}
    &d_1 - d_2 < 0 \quad \forall t\in (T_{2k}, T_{2k+1})~\forall k=0,...,r, \\
    &d_1 - d_2 \geq 0 \quad \forall t\in [T_{2k-1}, T_{2k}]~\forall k=1,...,r.
\end{align}
Also introduce $D_k := \int_{T_k}^{T_{k+1}} (d_1-d_2)ds$ and  
\begin{equation}
    \tilde{D}_k = \left \{ \begin{array}{ll}
        \tilde{D}_{k+1} + D_k, & \text{if } \tilde{D}_{k+1} + D_k < 0 \\
        0, & \text{otherwise}
    \end{array}\right.,
\end{equation}
with $\tilde{D}_{2r+1} = 0$. 

The following properties are necessary (see \cite{ghimire24a} for details): 
\begin{enumerate}
    \item $\int_{k}^{2q+1} (d_1-d_2)ds = \sum_{k}^{2q} D_k < 0$, $\forall k=0,...,2q$;
    \item $\int_{2q+1}^k (d_1-d_2)ds = \sum_{2q+1}^{k-1} D_k > 0$, $\forall k=2q+2,...,2r+1$;
    \item $\tilde{D}_{2q+2} + D_{2q+1} > 0$;
    \item $\tilde{D}_k < 0, ~ \forall k<2q+1$.
\end{enumerate}
% \begin{proof}
% Properties 1 and 2 are results directly from the definition of $D_k$. 

% For property 3, if $\tilde{D}_{2q+2} + D_{2q+1} \leq 0$, then $\tilde{D}_{2q+2} = \tilde{D}_{2q+3} + D_{2q+2} \leq -D_{2q+1}$, then $ \tilde{D}_{2q+3} \leq -(D_{2q+2}+D_{2q+1}) < 0$ (property 2). This leads to $\tilde{D}_{2q+k} \leq -\sum_{i=1}^{k-1}D_{2q+i} <0$ for $k=1,...,2r-2q$. Thus $\tilde{D}_{2r} < 0$. Contradiction.

% For property 4, first we have $\tilde{D}_{2q+1}=0$ (property 3). Since $D_{2q}<0$ (property 1), $\tilde{D}_{2q} = D_{2q} < 0$.
% \end{proof}

\paragraph{Primal game.} We start with $V(T,p) = 0$ where $p := p_{\theta}[1] = \text{Pr}(\theta = -1)$. The Hamiltonian is as follows:
\begin{align*}
    H(p) & = \min_{\tilde{\theta}_1} \max_{\tilde{\theta}_2} \mathbb{E}_{\theta}\left[ (\tilde{\theta}_1 - \theta)^2 d_1 - (\tilde{\theta}_2 - \theta)^2 d_2\right] \\
    & = 4p(1-p)(d_1-d_2).
\end{align*}
The optimal actions for the Hamiltonian are $\tilde{\theta}_1 = \tilde{\theta}_2 = 1-2p$.
From Bellman backup, we can get
\begin{equation*}
    V(T_k, p) = 4p(1-p)\tilde{D}_k.
\end{equation*}
Therefore, at $T_{2q+1}$, we have
\begin{align*}
    V(T_{2q+1}, p) & = Vex_p\left(V(T_{2q+2},p) + 4p(1-p)D_{2q+1}\right) \\
    & = Vex_p\left(4p(1-p)(\tilde{D}_{2q+2} + D_{2q+1})\right).
\end{align*}
Notice that $\tilde{D}_{2q+2} + D_{2q+1} >0$ (property 3) and $\tilde{D}_k <0$ for all $k < 2q+1$ (property 4), $T_{2q+1}$ is the first time such that the right-hand side term inside the convexification operator, i.e., $4p(1-p)(\tilde{D}_{2q+2} + D_{2q+1})$, becomes concave. Therefore, splitting of belief happens at $T_{2q+1}$ with $p^1 = 0$ and $p^2 = 1$. Player 1 plays $\tilde{\theta}_1 = -1$ (resp. $\tilde{\theta}_1 = 1$) with probability 1 if $\theta = -1$ (resp. $\theta = 1$), i.e., Player 1 reveals its type. This result is consistent with Hexner's. 

\paragraph{Dual game.} To find Player 2's strategy, we need to derive the conjugate value which follows
\begin{align*}
    V^*(t, \hat{p}) & = \left \{ \begin{array}{ll}
       \max_{i\in\{1,2\}} \hat{p}[i] & \forall t \geq T_{2q+1} \\
        \hat{p}[2] - \tilde{D}_t\left(1 - \frac{\hat{p}[1]-\hat{p}[2]}{4\tilde{D}_t}\right)^2 & \forall t < T_{2q+1}, ~4\tilde{D}_t \leq \hat{p}[1] - \hat{p}[2] \leq -4\tilde{D}_t \\
        \hat{p}[1] & \forall t < T_{2q+1}, ~\hat{p}[1] - \hat{p}[2] \geq 4\tilde{D}_t \\
        \hat{p}[2] & \forall t < T_{2q+1}, ~\hat{p}[1] - \hat{p}[2] < 4\tilde{D}_t
    \end{array}\right.
\end{align*}

Here $\hat{p} \in \nabla_{p_\theta} V(0,p_{\theta})$ and $V(0,p_{\theta}) = 4p[1]p[2]\tilde{D}_0$. For any particular $p_* \in \Delta(2)$, from the definition of subgradient, we have $\hat{p}[1] p_*[1] + \hat{p}[2] p_*[2] = 4p_*[1]p_*[2]\tilde{D}_0$ and $\hat{p}[1] - \hat{p}[2] =4(p_*[2]-p_*[1])\tilde{D}_0$. Solving these to get $\hat{p} = [4p_*[2]^2\tilde{D}_0, 4p_*[1]^2\tilde{D}_0]^T$. Therefore $\hat{p}[1] - \hat{p}[2] = 4\tilde{D}_0 (1-2p_*[1]) \in [4\tilde{D}_0, -4\tilde{D}_0]$, and
\begin{equation*}
    V^*(0,\hat{p}) = \hat{p}[2] - \tilde{D}_0\left(1 - \frac{\hat{p}[1]-\hat{p}[2]}{4\tilde{D}_0}\right)^2.
\end{equation*}
% , $C(0,\hat{p})$ is convex to $\hat{p}$ and therefore no splitting of $\hat{p}$.

Notice that $V^*(t, \hat{p})$ is convex to $\hat{p}$ since $\tilde{D}_0 <0$ (property 4) for all $t \in [0, T]$. Therefore, there is no splitting of $\hat{p}$ during the dual game, i.e., $\tilde{\theta}_2 = 1-2p$. This result is also consistent with results in \citet{hexner1979differential}.
%%%%%%%%%%%%%%%%%%%%%%%%%%%%%%%%%%%%%%%%%%%%%%%%%
%%%%%%%%%%%%%%%%%%%%%%%%%%%%%%%%%%%%%%%%%%%%%%%%%
\subsection{Example of a Turn-Based Game}\label{app:beerquiche}
We present a zero-sum variant of the classic beer-quiche game, which is a turn-based incomplete-information game with a perfect Bayesian equilibrium. Unlike in Hexner's game, Player 1 in beer-quiche game wants to maximize its payoff, and Player 2 wants to minimize it; hence, $\text{Vex}$ becomes a $\text{Cav}$. 
\begin{figure}[!h]
    \centering
    \includestandalone{figures/beerquiche}
    \caption{Zero-Sum Beer-Quiche Game}
    \label{fig:beerquiche}
\end{figure}
We solve the game through backward induction (from $t=2,1, 0$) of its primal and dual values (denoted by $V$ and $V^*$ respectively). Players 1 and 2 make their respective decisions at $t=0$ and $t=1$, and the game ends at $t=2$. The state $x$ at a time $t$ encodes the history of actions taken until $t$.
% We describe the states of the game as the decisions made up to that time, e.g., $x=(B,d)$ at $t=2$ means that Player 1 has chosen beer and Player 2 to defer. 

\textbf{Primal game:} First, we compute the equilibrium strategy of Player 1 using the primal value. At the terminal time step ($t=2$), based on Fig.~\ref{fig:beerquiche}, the value for Player 1 is the following:
\begin{equation}
    V(2, x, p) = \left \{ \begin{array}{ll}
        4p_T-2 & \text{if } x = (B, b) \\
        p_T & \text{if } x = (B, d) \\
        2p_T-1 & \text{if } x = (Q, b) \\
        2-2p_T & \text{if } x = (Q, d)
    \end{array}
    \right..
\end{equation}
At the intermediate time step ($t=1$), it is Player 2's turn to take an action. Therefore, the value is a function of Player 1's action at $t=0$ and Player 2's current action. And for the same reason, the value is not a \textit{concavification} (Cav) over the RHS term. 
\begin{equation}
    V(1, x, p) = \min_{v \in \{b, d\}} V(2, (x, v), p).
\end{equation}
We can find the best responses of Player 2 for both actions of Player 1. This leads to
\begin{equation}
    V(1, x, p) =  \left \{ \begin{array}{lll} 
        
        p_T & \text{if } x = B, ~3p_T-2 \geq 0 & (v^* = d) \\
        4p_T-2 & \text{if } x = B, ~3p_T-2 < 0 & (v^* = b) \\
        2-2p_T & \text{if } x = Q, ~4p_T-3 \geq 0 & (v^* = d) \\
        2p_T-1 & \text{if } x = Q, ~4p_T-3 < 0 & (v^* = b)
    \end{array}
    \right..
\end{equation}
Finally, at the beginning of the game ($t=0$), we have
\begin{equation}
    V(0, \emptyset, p) = \text{Cav}\left( \max_{u \in \{B, Q\}} V(1, u, p)\right).
\end{equation}

% \begin{figure}
%     \centering
%     \includegraphics[width=0.7\linewidth]{figures/icml2024_mukesh_beer_1.png}
%     \caption{Primal value $V(0,\emptyset,p_T)$ at $t=0$.}
%     \label{fig:beer1}
% \end{figure}

Cav is achieved by taking the concave hull with respect to $p_T$:
\begin{equation}
    V(0, \emptyset, p) =  \left \{ \begin{array}{lll} 
        5p_T/2  - 1 & \text{if } p_T < 2/3 \\
        p_T & \text{if } p_T \geq 2/3 \\
    \end{array}
    \right..
\end{equation}
When $p_T \in [0, 2/3)$, 
$$V(0, \emptyset, p) = \lambda \max_u V(1, u, p^1) + (1-\lambda) \max_u V(1, u, p^2),$$
where $p^1 = [0, 1]^T$, $p^2 = [2/3, 1/3]^T$, and $\lambda p^1 + (1-\lambda) p^2 = p$. 

Therefore, when $p_T = 1/3$, $\lambda = 1/2$, Player 1's strategy is:
\begin{equation}
\begin{aligned}
    & \Pr(u=Q|T) = \frac{\lambda p^1[1]}{p[1]} = 0,     & \Pr(u=Q|W) = \frac{\lambda p^1[2]}{p[2]} = 3/4, \\
    & \Pr(u=B|T) = \frac{(1-\lambda) p^2[1]}{p[1]} = 1, 
    & \Pr(u=B|W) = \frac{(1-\lambda) p^2[2]}{p[2]} = 1/4.
\end{aligned}
\end{equation}

\textbf{Dual game:} To solve the equilibrium of Player 2, we first derive the dual variable $\hat{p} \in \partial_{p} V(0, \emptyset, p)$ for $p = [1/3, 2/3]^T$. By definition, $\hat{p}^T p$ defines the concave hull of $V(0, \emptyset, p)$, and therefore we have 
\begin{equation}
\begin{aligned}
    & [1/3, 2/3] \hat{p} = V(0, \emptyset, p) = -1/6 \\
    & [0, 1] \hat{p} = V(0, \emptyset, [0, 1]) = -1.
\end{aligned}
\end{equation}
This leads to $\hat{p} = [3/2, -1]^T$.

At the terminal time, we have
\begin{equation}
    \begin{aligned}
        V^*(2, x, \hat{p}) & = \min\{\hat{p}[1] - g_T(x), \hat{p}[2] - g_W(x)\} \\
        & = \left \{ \begin{array}{ll}
            \min\{\hat{p}[1] - 2, \hat{p}[2] + 2\} & \text{if } x = (B, b) \\
            \min\{\hat{p}[1] - 1, \hat{p}[2]\} & \text{if } x = (B, d) \\
            \min\{\hat{p}[1] - 1, \hat{p}[2] + 1\} & \text{if } x = (Q, b) \\
            \min\{\hat{p}[1], \hat{p}[2] - 2\} & \text{if } x = (Q, d) \\
        \end{array} \right.
    \end{aligned}
\end{equation}

At $t=1$, we have
\begin{equation}
        V^*(1, u, \hat{p}) = \text{Cav}_{\hat{p}}\left ( \max_{v} V^*(2, (u, v), \hat{p})\right).
\end{equation}
When $u = B$, the conjugate value is a concave hull of a piece-wise linear function:
\begin{equation}
\begin{aligned}
        V^*(1, B, \hat{p}) & = \text{Cav}_{\hat{p}}\left (
        \left \{ \begin{array}{lll}
            \hat{p}[1] - 1 & \text{if } \hat{p}[2] \geq \hat{p}[1] - 1 & (v^* = d) \\
            \hat{p}[2] & \text{if } \hat{p}[2] \in [\hat{p}[1] - 2, \hat{p}[1] - 1) & (v^* = b) \\
            \hat{p}[1] - 2 & \text{if } \hat{p}[2] \in [\hat{p}[1] - 4, \hat{p}[1] - 2) & (v^* = d)  \\
            \hat{p}[2] + 2 & \text{if } \hat{p}[2] < \hat{p}[1] - 4 & (v^* = b)  \\
        \end{array} \right.
        \right) \\
        & = \left \{ \begin{array}{lll}
            \hat{p}[1] - 1 & \text{if } \hat{p}[2] \geq \hat{p}[1] - 1 & (v^* = d) \\
            2/3 \hat{p}[1] + 1/3 \hat{p}[2] - 2/3 & \text{if } \hat{p}[2] \in [\hat{p}[1] - 4, \hat{p}[1] - 1) & (\text{mixed strategy})\\
            \hat{p}[2] + 2 & \text{if } \hat{p}[2] < \hat{p}[1] - 4 & (v* = b)  \\
        \end{array} \right.
\end{aligned}
\end{equation}

% \begin{figure}
%     \centering
%     \includegraphics[width=0.7\linewidth]{figures/icml2024_mukesh_beer_2.png}
%     \caption{Conjugate value $\max_v ~C(2,B,\hat{p})$ at $t=2$.}
%     \label{fig:beer2}
% \end{figure}

For $\hat{p} = [3/2, -1]^T$ which satisfies $\hat{p}[2] \in [\hat{p}[1] - 4, \hat{p}[1] - 1)$, Player 2 follows a mixed strategy determined based on $\{\lambda_1, \lambda_2, \lambda_3\} \in \Delta(3)$ and $\hat{p}^j \in \mathbb{R}^2$ for $j = 1, 2, 3$ such that:
\begin{enumerate}[label=(\roman*)]
    \item At least one of $\hat{p}^j$ for $j = 1, 2, 3$ should satisfy $\hat{p}[2] = \hat{p}[1] - 1$  and another $\hat{p}[2] = \hat{p}[1] - 4$. These conditions are necessary for $V^*(1, B, \hat{p})$ to be a concave hull:
    \begin{equation}
            V^*(1, B, \hat{p}) = \sum_{j=1}^3 \lambda_j \max_v V^*(2, (B,v), \hat{p}^j).
    \end{equation}    
    % Without loss of generality, we will set $\hat{p}^1$ on line 1 and both $\hat{p}^2$ and $\hat{p}^3$ on line 2;
    \item $\sum_{j=1}^3 \lambda_j \hat{p}^j = \hat{p}$.
\end{enumerate}
These conditions lead to $\lambda_1 = 1/2$ and $\lambda_2 + \lambda_3 = 1/2$. Therefore, when Player 1 picks beer,  Player 2 chooses to defer and bully with equal probability. 

When $u = Q$, we similarly have
\begin{equation}
    V^*(1, Q, \hat{p}) = \left \{ \begin{array}{lll}
            \hat{p}[1] & \text{if } \hat{p}[2] \geq \hat{p}[1] + 2 & (v^* = d) \\
            ... & \text{if } \hat{p}[2] \in [\hat{p}[1] - 2, \hat{p}[1] + 2) & (\text{mixed strategy})\\
            \hat{p}[2] + 1 & \text{if } \hat{p}[2] < \hat{p}[1] - 2 & (v* = b)  \\
        \end{array} \right.
\end{equation}
The derivation of the concave hull when $\hat{p}[2] \in [\hat{p}[1] - 2, \hat{p}[1] + 2)$ is omitted, because, for $\hat{p} = [3/2, -1]^T$, $V^*(1, Q, \hat{p}) = \hat{p}[2] + 1 = 0$ while $v^* = b$, i.e. if Player 1 picks quiche, Player 2 chooses to bully with a probability of 1.

% The value and its conjugate provide behavioral strategies for Player 1 (informed) and Player 2 (non-informed), respectively, for arbitrary initial belief $p$. Moreover, the convexity of the value reveals subsets of $p$ where Player 1 should use a mixed strategy that manipulates the belief in order to improve its value. Similarly, the convexity of the conjugate value reveals subsets of dual variables $\hat{p}$ where Player 2 should use a mixed strategy to mitigate risks due to its uncertainty about Player 1.  

%%%%%%%%%%%%%%%%%%%%%%%%%%%%%%%%%%%%%%%%%%%%%%%%%%%%%%%%%%%%%%%%%%%%%

%%%%%%%%%%%%%%%%%%%%%%%%%%%%%%%%%%%%%%%%%%%%%%%%%%%%%%%%%%%%%%%%%%%%%%%
\section{Hexner's Game Settings, Baselines, and Ground Truth} \label{app:baselines}
\subsection{Game Settings}
The players move in an arena bounded between $[-1, 1]$ in all directions. All games in the paper follow 2D/3D point dynamics as follows:
$\dot{x}_j = Ax_j + Bu_j$, where $x_j$ is a vector of position and velocity and $u_j$ is the action for player $j$. Note that we use $u$ and $v$ in the optimization problems \ref{eq:opt_prob} and \ref{eq:opt_prob_dual} to represent player 1 and player 2's actions respectively. The type independent effort loss for each player $j$ is defined as $l_j(u_j) = u_j^\top R_j u_j$, where $R_1 = \texttt{diag}(0.05, 0.025)$ and $R_2 = \texttt{diag}(0.05, 0.1)$. For the higher dimensional case, $R_1 = \texttt{diag}(0.05, 0.05, 0.025)$ and $R_2 = \texttt{diag}(0.05, 0.05, 0.1)$. Note that, in the incomplete information case, P1 is able to get better payoff by hiding the target because P2 incurs higher effort cost, and hence cannot accelerate as fast as P1. 

\subsection{Ground Truth for Hexner's Game}\label{app:hexner_gt}
For the 4-stage and 10-stage Hexner's game, there exists analytical solution to the equilibrium policies via solving the HJB for respective players. 

\begin{equation*}
    u_j = -R_j^{-1}B_j^\top K_j x_j + R_j^{-1}B_j^\top K_j \Phi_j z\Tilde{\theta}_j,
\end{equation*}
based on the reformulation outlined below in which players' action $\Tilde{\theta}_j \in \mathbb{R}$ become 1D and are decoupled from the state:
where $\Phi_j$ is a state-transition matrix that solves $\dot{\Phi}_j = A_j \Phi_j$, with $\Phi_j(T)$ being an identity matrix, and $K_j$ is a solution to a continuous-time differential Ricatti equation: 
\begin{equation}\label{eq:ricattiode}
    \dot{K}_j = -A_j^\top K_j - K_j A_j + K_j^\top B_j R_j^{-1}B_j^\top K_j,
\end{equation}
Finally, by defining 
\begin{equation*}
    d_j = z^\top \Phi_j^\top K_j B_j R_j^{-1}B_j^\top K_j^\top \Phi_i z
\end{equation*}
and the critical time 
\begin{equation*}
    t_r = \argmin_t \int_0^t (d_1(s) - d_2(s)) ds 
\end{equation*}
and
\begin{align*}
    \Tilde{\theta}_j(t) = \begin{cases}
        0, &t \in [0, t_r]\\ 
        \theta, &t \in (t_r, T]
    \end{cases}.
\end{align*}

As explained in Sec.\ref{sec:validation}, P1 chooses $\theta_1 = 0$ until the critical time $t_r$ and P2 follows. 

Note that in order to compute the ground truth when time is discretized with some $\tau$, we need the discrete counterpart of equation \ref{eq:ricattiode}, namely the discrete-time Ricatti difference equation and compute the matrices $K$ recursively.

% \subsection{Comparison Metrics}\label{sec:comp_met}
% For the normal-form game, we compare both computational cost and the expected action error $\varepsilon$ from the ground-truth action of P1:
% $\varepsilon(x_0):=\mathbb{E}_{i \sim p_0}\left[\sum_{k=1}^{|\mathcal{U}|} \alpha_{ki} \|u_{k} -u^*_i(x_0)\|_2\right]$, where $u^*_i(x_0)$ is the ground truth for type $i$ at $x_0$. 
% % We first compute the expected action for each P1 type $\mathbb{E}(u_A) =  \pi_A \mathcal{A}_1$, $\mathbb{E}(u_B) = \pi_B \mathcal{A}_1$, and then compute the mean of the two $L_2$ errors between ($\mathbb{E}(u_A)$, $\bar{u}_A$), and ($\mathbb{E}(u_B)$, $\bar{u}_B$), where $A$ and $B$ denote the two types and $\bar{u}$ denotes the ground-truth.
% % We compute the expected action from the policy for each P1 type and compare it against the ground truth action. 
% For the 4-stage game, we compare the expected action errors at each time step: $\bar{\varepsilon}_t:=\mathbb{E}_{x_t \sim \pi} [\varepsilon(x_t)]$, where $\pi$ is the strategy learned by DeepCFR, JPSPG, or CAMS.
% % the distance to ground truth actions between DeepCFR and CAMS at all decision nodes in the game. 
% For each strategy, we estimate $\{\bar{\varepsilon}_t\}_{t=1}^4$ by generating 100 trajectories with initial states uniformly sampled from $\mathcal{X}$. The wall-time costs for game solving are 17 hours using CAMS (baseline), 24 hours for JPSPG, 29 hours ($U=9$) and 34 hours ($U=16$) using DeepCFR, all on an A100 GPU. 

\subsection{OpenSpiel Implementations and Hyperparameters} \label{sec:DeepCFR}
We use OpenSpiel \citep{LanctotEtAl2019OpenSpiel}, a collection of various environments and algorithms for solving single and multi-agent games. We select OpenSpiel due to its ease of access and availability of wide range of algorithms. The first step is to write the game environment with simultaneous moves for the stage-game and the multi-stage games (with 4 decision nodes). Note that to learn the policy, the algorithms in OpenSpiel require conversion from simultaneous to sequential game, which can be done with a built-in method. 

In the single-stage game, P1 has two information states representing its type, and P2 has only one information state (i.e., the starting position of the game which is fixed). In the case of the 4-stage game, the information state (or infostate) is a vector consisting of the P1's type (2-D: [0, 1] for type-1, [1, 0] for type-2), states of the players (8-D) and actions of the players at each time step $(4\times 2 \times U)$. The 2-D ``type'' vector for P2 is populated with 0 as it has no access to P1's type. For example, the infostate at the final decision node for a type-1 P1 could be $[0, 1, x^{(8)}, \mathbbm{1}_{u_0}^{(U)}, \mathbbm{1}_{d_0}^{(U)}, \cdots, \mathbbm{1}_{d_2}^{(U)}, \boldsymbol{0}^{(U)}, \boldsymbol{0}^{(U)}]$, and $[0, 0, x^{(8)}, \mathbbm{1}_{u_0}^{(U)}, \mathbbm{1}_{d_0}^{(U)}, \cdots, \mathbbm{1}_{d_2}^{(U)}, \boldsymbol{0}^{(U)}, \boldsymbol{0}^{(U)}]$ for P2, where $u_k$, $d_k$ represent the index of the actions at $k^{th}$ decision node, $k=0, 1, 2, 3$

The hyperparameters for DeepCFR is listed in table~\ref{tab:deepcfr_params}
\begin{table}[!ht]
    \centering
    \caption{Hyperparameters for DeepCFR Training}
    \begin{tabular}{lc} \toprule
         Policy Network Layers&  (256, 256)\\ \midrule
         Advantage Network Layers& (256, 256)\\ \midrule
         Number of Iterations & 1000 (100, for $U=16$) \\ \midrule
         Number of Traversals & 5 (10, for $U=16$)\\ \midrule
         Learning Rate & 1e-3 \\ \midrule
         Advantage Network Batch Size & 1024 \\  \midrule
         Policy Network Batch Size & 10000 (5000 for $U=16$) \\ \midrule
         Memory Capacity & 1e7 (1e5 for $U=16$)\\ \midrule
         Advantage Network Train Steps & 1000 \\ \midrule
         Policy Network Train Steps & 5000 \\ \midrule
         Re-initialize Advantage Networks & True \\ \bottomrule
    \end{tabular}
    \label{tab:deepcfr_params}
\end{table}
\subsection{Joint-Perturbation Simultaneous Pseudo-Gradient (JPSPG)}\label{app:jpspg}
The core idea in the JPSPG algorithm is the use of pseudo-gradient instead of computing the actual gradient of the utility to update players' strategies. By perturbing the parameters of a utility function (which consists of the strategy), an unbiased estimator of the gradient of a smoothed version of the original utility function is obtained. Computing pseudo-gradient can often be cheaper than computing exact gradient, and at the same time suitable in scenarios where the utility (or objective) functions are "black-box" or unknown. Building on top of pseudo-gradient, \citet{martin2024joint} proposed a method that estimates the pseudo-gradient with respect to all players' strategies simultaneously. The implication of this is that instead of multiple calls to estimate the pseudo-gradient, we can estimate the pseudo-gradient in a single evaluation. 
More formally, let $\textbf{f}: \mathbb{R}^{d} \rightarrow \mathbb{R}^n$ be a vector-valued function. Then, its smoothed version is defined as:
\begin{equation}
    \textbf{f}_\sigma (\textbf{x}) = \mathbb{E}_{\textbf{z}\sim \mu}\textbf{f}(\textbf{x} + \sigma \textbf{z}), 
\end{equation}
where $\mu$ is a $d$-dimensional standard normal distribution, $\sigma \neq 0 \in \mathbb{R}$ is a scalar. 
Then, extending the pseudo-gradient of a scalar-valued function to a vector-valued function, we have the following pseudo-Jacobian:
\begin{equation}
    \nabla \textbf{f}_\sigma (\textbf{x}) = \mathbb{E}_{\textbf{z}\sim \mu}\frac{1}{\sigma}\textbf{f}(\textbf{x} + \sigma \textbf{z}) \otimes \textbf{z},
\end{equation}
where $\otimes$ is the tensor product. 

Typically, in a game, the utility function returns utility for each player given their strategy. Let $\textbf{u}:\mathbb{R}^{n \times d} \rightarrow \mathbb{R}^n$ be the utility function in a game with $n$ players, where each player has a $d$-dimensional strategy. Then, the simultaneous gradient of $\textbf{u}$ would be a function $\textbf{v} : \mathbb{R}^{n \times d} \rightarrow \mathbb{R}^{n \times d}$. That is, row $i$ of $\textbf{v}(\textbf{u})$ is the gradient of the utility of the player $i$ with respect to its strategy, $\textbf{v}_i = \nabla_i \textbf{u}_i$. As a result, we can rewrite $\textbf{v}$ concisely as: $\textbf{v} = \texttt{diag}(\nabla \textbf{u})$, where $\nabla$ is the Jacobian. 
With these we have the following:
\begin{equation} \label{eq:jpspg}
\begin{aligned}
    \textbf{v}_\sigma (\textbf{x}) &= \texttt{diag}(\nabla \textbf{u}_\sigma (\textbf{x}))\\
                                   &= \texttt{diag}\left(\mathbb{E}_{\textbf{z}\sim \mu}\frac{1}{\sigma}\textbf{u}_\sigma(\textbf{x} + \sigma \textbf{z}) \otimes \textbf{z}\right) \\
                                   &= \mathbb{E}_{\textbf{z}\sim \mu}\frac{1}{\sigma}\texttt{diag}\bigg(\textbf{u}_\sigma(\textbf{x} + \sigma \textbf{z}) \otimes \textbf{z}\bigg)\\
                                   &= \mathbb{E}_{\textbf{z}\sim \mu}\frac{1}{\sigma}\textbf{u}_\sigma(\textbf{x} + \sigma \textbf{z}) \odot \textbf{z},\\
\end{aligned}
\end{equation}
where $\odot$ is element-wise product and a result of the fact that $\texttt{diag}(\textbf{a} \otimes \textbf{b}) = \textbf{a} \odot \textbf{b}$. 
Hence, by evaluating Eq.~\ref{eq:jpspg} once, we get the pseudo-gradient associated with all players, making the evaluation constant as opposed to linear in number of players. 

Once the pseudo-gradients are evaluated, the players update their strategy in the direction of the pseudo-gradient, assuming each player is interested in maximing their respective utility. 
\paragraph{JPSPG Implementation.}
In games with discrete-action spaces, where strategy is the probability distribution over the actions, JPSPG can be directly applied to get mixed strategy. However, for continuous-action games, a standard implementation would result in pure strategy solution than mixed. In order to compute a mixed strategy, we can turn into neural network as a strategy with an added randomness that can be learned as described in \citet{martin2023finding, martin2024joint}. We similarly define two strategy networks for each player, the outputs of which are scaled based on the respective action bounds with the help of hyperbolic tangent ($\texttt{tanh}$) activation on the final layer. The input to the strategy networks (a single hidden layered neural network with 64 neurons and output neuron of action-space dimension) are the state of the player and a random variable whose mean and variance are trainable parameters. We follow the architecture as outlined by \citet{martin2024joint} in their implementation of continuous-action Goofspiel. We would like to thank the authors for providing an example implementation of JPSPG on a normal-form game. 

In the normal-form Hexner's game, P1's state $\textbf{x}_1 = \{x_1, y_1, \texttt{type}\}$, and P2's state $\textbf{x}_2 = \{x_2, y_2\}$. $x_i$, and $y_i$ denote the x-y coordinates of the player $i$. In 4-stage case, we also include x-y velocities in the state and append the history of actions chosen by both P1 and P2 into the input to the strategy network. As an example, P1's input at the very last decision step a vector $[x_1, y_1, v_{x_1}, v_{y_1}, x_2, y_2, v_{x_2}, v_{y_2}, \texttt{type}, u_{1_x}, u_{1_y}, d_{1_x}, d_{1_y}, u_{2_x}, u_{2_y}, d_{2_x}, d_{2_y}, u_{3_x}, u_{3_y}, d_{3_x}, d_{3_y}] \in \mathbb{R}^{21}$, where $u_j$ and $d_j$ represent actions of P1 and P2, respectively, at $j^{th}$ decision point. P2's input, on the other hand, is the same without the $\texttt{type}$ information making it a vector in $\mathbb{R}^{20}$.

% \subsection{Exploitability Plots}\label{app:exp}
% In the paper, we used the expected distance to the ground-truth actions of the Hexner's game as a comparison metric to highlight the limitation imposed by action-discretization. However, for completeness, we compare the exploitability of CAMS against all classes of baselines: CFR (plus), MMD, and JPSPG in Fig.~\ref{fig:exp_comp}. Note that the best-responses in CFR and MMD are computed in the continuous action space. Hence, we see that the exploitability reduces as we make the action space finer. 

% \begin{figure}[!h]
%     \centering
%     % \includegraphics[width=\linewidth]{figures/exploitability_comparison_all_variants.png}
%     \includegraphics[width=\linewidth]{figures/exploitability_comparison.png}
    
%     \caption{Exploitability vs iterations for the normal-form Hexner's game.}
%     \label{fig:exp_comp}
% \end{figure}

% We also plot the exploitability estimate for the Football case in Fig.~\ref{fig:exp_football}

% \begin{figure}[!h]
%     \centering
%     \includegraphics[width=0.6
%     \linewidth]{figures/exp_plot_football.png}
%     \caption{Exploitability estimate of P1 and P2's policies in the Football case at different iterations.}
%     \label{fig:exp_football}
% \end{figure}

\subsection{Sample Trajectories}
% \subsubsection{CAMS vs DeepCFR vs JPSPG}
Here we present sample trajectories for three different initial states for each P1 type. The policies learned by CAMS results in trajectories that are significantly close to the ground truth than the other two algorithms. 
\begin{figure}[!h]
    \centering
    \includegraphics[width=\linewidth]{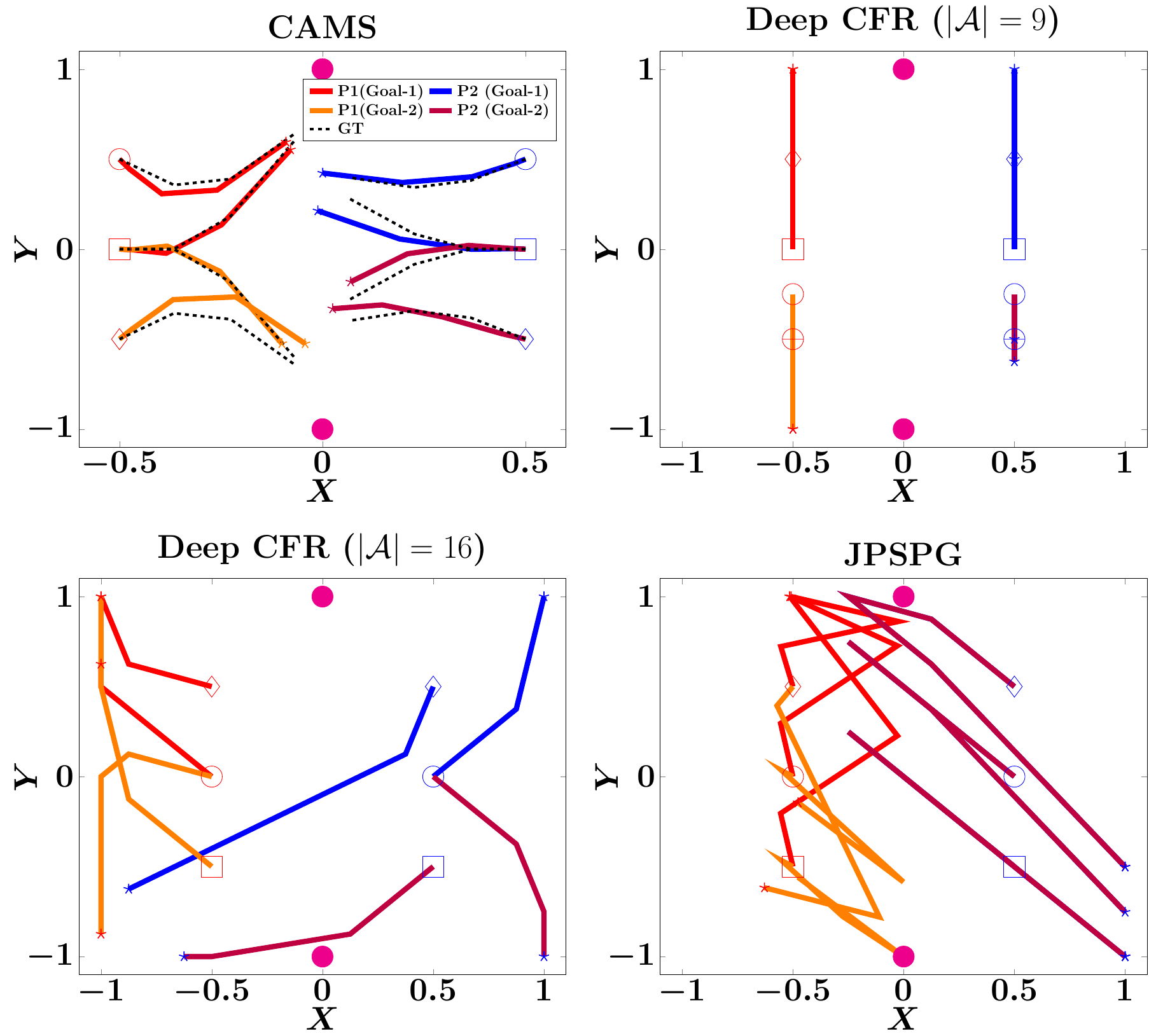}
    \caption{Trajectories generated using CAMS (primal game), DeepCFR, and JPSPG. The initial position pairs are marked with same marker and the final with star. The trajectories from CAMS are close to the ground-truth while those from DeepCFR and JPSPG are not.}
    \label{fig:cams_dcfr_jpspg}
\end{figure}
\newpage
\clearpage
\subsection{Value Network Training Details}
\paragraph{Data Sampling:} At each time-step, we first collect training data by solving the optimization problem (\ref{eq:opt_prob} or \ref{eq:opt_prob_dual}). Positions are sampled uniformly from [-1, 1] and velocities from $[-\bar{v}_t, \bar{v}_{t}]$ computed as $\bar{v}_t = t \times u_{max}$, where $u_{max}$ is the maximum acceleration. For the unconstrained game, $u_{max} = 12$ for both P1 and P2. For the constrained case, $u_{x_{max}} = 6$, $u_{y_{max}} = 12$ for P1 and  $u_{x_{max}} = 6$, $u_{y_{max}} = 4$ for P2. During training, the velocities are normalized between [-1, 1]. The belief $p$ is then sampled uniformly from $[0, 1]$. For the dual value, we first determine the upper and lower bounds of $\hat{p}$ by computing the sub-gradient $\partial_p V(t_0, \cdot, \cdot)$ and then sample uniformly from $[\hat{p}^-, \hat{p}^+]$. 

\paragraph{Training:}
We briefly discuss the training procedure of the value networks. As mentioned in the main paper, both the primal and the dual value functions are convex with respect to $p$ and $\hat{p}$ respectively. As a result, we use Input Convex Neural Networks (ICNN) \citep{amos2017icnn} as the neural network architecture. Starting from $T-\tau$, solutions of the optimization problem \ref{eq:opt_prob} for sampled $(X, p)$ is saved and the convex value network is fit to the saved training data. The model parameters are saved and are then used in the optimization step at $T-2\tau$. This is repeated until the value function at $t=0$ is fit. The inputs to the primal value network are the joint states containing position and velocities of the players $X$ and the belief $p$.

The process for training the dual value is similar to that of the primal value training. The inputs to the dual value network are the joint states containing position and velocities of the players $X$ and the dual variable $\hat{p}$.

\section{Hyperparameter Sweep for PG Baselines} \label{app:sweep}
Here we report a sweep of hyperparameters across different learning rates and entropy coefficient for the PG MMD and PPO algorithms. Specifically, we run the algorithms with learning rates of $\{2.5e-5, 2.5e-4, 2.5e-2, 2.5e-1\}$, and entropy coefficient of $\{0.01, 0.05, 0.1, 0.2\}$. We also run RNaD with all four learning rates. The sweep is reported in Fig.~\ref{fig:sweep}. 

\begin{figure}[!h]
    \centering
    \includegraphics[width=\linewidth]{figures/hyperparam_sweep.png}
    \caption{Hyperparameter sweep for the baseline PG algorithms. RNaD, being a non-standard PG algorithm, doesn't use entropy coefficient the way MMD and PPO do; hence, we copy the same plot across different entropy coefficient value for reference.}
    \label{fig:sweep}
\end{figure}

\section{Solving Hexner's game via MPC}
\label{app:hexner_mpc}
Here we solve a 2D Hexner's primal game with $I=2$, $K=10$, and other settings following App.~\ref{app:baselines}. With these settings and using the equilibrium in App.~\ref{app:hexner_gt}, the true type revelation time is $t_r = 0.5$ second. We directly solve the minimax problem by autodiffing the gradient of the sum of payoffs from the $2^{10}$ paths of the game tree. At each infostate along each path, P1's strategy is modeled by a neural network that takes in $(t,x,p)$ and outputs $I$ action prototypes and an $I$-by-$I$ logit matrix that encodes the type-dependent probabilities of taking each of the action prototypes, and P2's best response is modeled by a separate neural network that takes in $(t,x,p)$ and outputs a single action. With a DS-GDA solver, the search successfully converges to the GT equilibrium. Fig.~\ref{fig:hexner_mpc} illustrates the NE trajectories for one particular initial state and the corresponding belief dynamics.

\begin{figure}[!h]
    \centering
    \includegraphics[width=.6\linewidth]{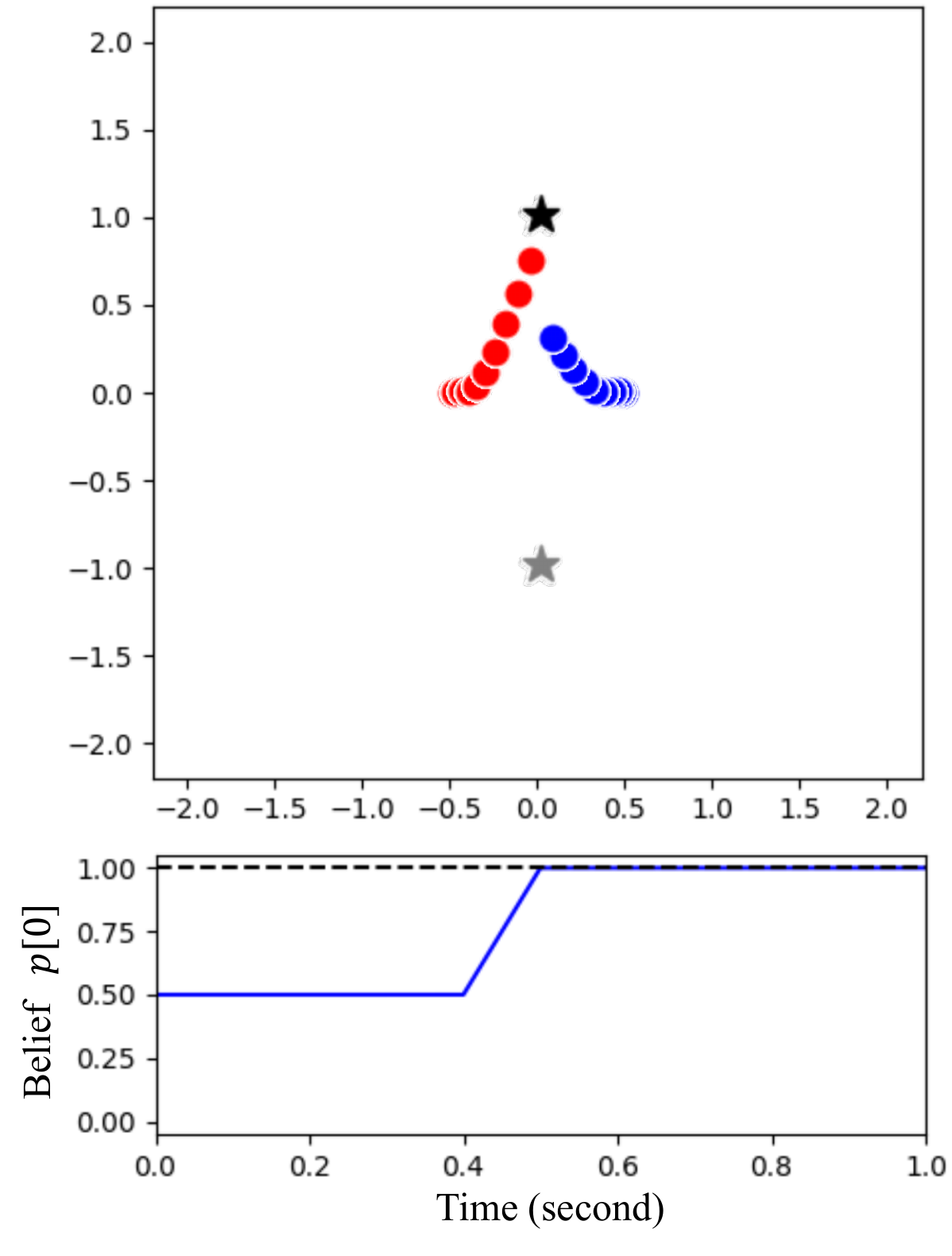}
    \caption{2D Hexner's game solved by MPC.}
    \label{fig:hexner_mpc}
\end{figure}

\section{A Differentiable 11-vs-11 American Football Game}
\label{app:football}
We model a single running/pass play as a 2p0s1 game between the offense (P1) and defense (P2) teams. Each player is a point mass with double-integrator dynamics on a 2D plane. Time is discretised with macro step $\Delta t=\tau$ and $K=T/\tau$ steps, and each macro step is resolved by $n_{\text{sub}}$ semi-implicit Euler substeps for stability.

\paragraph{State, controls, and bounds.}
Let $N=11$ be players per team. offense positions and velocities are $X^{(1)},V^{(1)}\in\mathbb{R}^{N\times 2}$; defense $X^{(2)},V^{(2)}\in\mathbb{R}^{N\times 2}$. We pack them into a state vector $x=[X^{(1)},V^{(1)},X^{(2)},V^{(2)}]\in\mathbb{R}^{8N}$. At each step, the teams apply accelerations $U_1,U_2\in\mathbb{R}^{N\times 2}$ (stacked later as $u=[u_1;u_2]\in\mathbb{R}^{4N}$). Kinematic saturations enforce a playable box of half-width $\mathrm{BOX\_POS}$ and box-limited speeds and accelerations $\mathrm{BOX\_VEL}$, $\mathrm{BOX\_ACC}$ by componentwise clamping after each substep.

\paragraph{Differentiable tackle dynamics (smooth contact and merge).}
During a substep with duration $\delta t=\tau/n_{\text{sub}}$, we first compute a soft, pairwise “stickiness’’ weight between an attacker $i$ and a defender $j$:
\[
w_{ij} \;=\; \sigma\!\left(k_{\text{tackle}}\big(r_{\text{thr}}^{2}-\|X^{(1)}_{i}-X^{(2)}_{j}\|^{2}\big)\right),
\]
where $\sigma(z)=1/(1+e^{-z})$, $k_{\text{tackle}}$ sets steepness and $r_{\text{thr}}^{2}=\mathrm{MERGE\_RADIUS}^2$. These weights form $W\in[0,1]^{N\times N}$. We then compute velocity “sharing’’ and contact accelerations via convex averaging across opponents:
\[
\begin{aligned}
\widehat V^{(1)}_i &= \frac{V^{(1)}_i + \sum_{j} w_{ij} V^{(2)}_j}{1+\sum_j w_{ij}}, 
&\qquad
\widehat V^{(2)}_j &= \frac{V^{(2)}_j + \sum_{i} w_{ij} V^{(1)}_i}{1+\sum_i w_{ij}},\\
A^{(1)}_{\text{c},i} &= \frac{\sum_j w_{ij}\, A^{(2)}_{\text{c},j}}{1+\sum_j w_{ij}},
&\qquad
A^{(2)}_{\text{c},j} &= \frac{\sum_i w_{ij}\, A^{(1)}_{\text{c},i}}{1+\sum_i w_{ij}},
\end{aligned}
\]
with $A^{(\cdot)}_{\text{c}}$ initialised at zero so the first pass merely defines a contact baseline. This produces smooth, differentiable coupling without hard impulses.

To blend \emph{control} and \emph{contact} accelerations we form state-dependent merge probabilities
\[
p^{(1)}_{\text{m},i}=1-\exp\!\Big(-\sum_j w_{ij}\Big), 
\qquad
p^{(2)}_{\text{m},j}=1-\exp\!\Big(-\sum_i w_{ij}\Big),
\]
and set
\[
A^{(1)}_i = \big(1-p^{(1)}_{\text{m},i}\big)\,U_{1,i} + p^{(1)}_{\text{m},i}\,A^{(1)}_{\text{c},i},
\qquad
A^{(2)}_j = \big(1-p^{(2)}_{\text{m},j}\big)\,U_{2,j} + p^{(2)}_{\text{m},j}\,A^{(2)}_{\text{c},j}.
\]
We then perform a semi-implicit Euler update with the shared velocities $\widehat V$:
\[
\begin{aligned}
V^{(1)} &\leftarrow \mathrm{clip}\!\left(\widehat V^{(1)} + A^{(1)}\,\delta t,\; \pm\,\mathrm{BOX\_VEL}\right),\\
V^{(2)} &\leftarrow \mathrm{clip}\!\left(\widehat V^{(2)} + A^{(2)}\,\delta t,\; \pm\,\mathrm{BOX\_VEL}\right),\\
X^{(1)} &\leftarrow \mathrm{clip}\!\left(X^{(1)} + V^{(1)}\,\delta t,\; \pm\,\mathrm{BOX\_POS}\right),\\
X^{(2)} &\leftarrow \mathrm{clip}\!\left(X^{(2)} + V^{(2)}\,\delta t,\; \pm\,\mathrm{BOX\_POS}\right).
\end{aligned}
\]

\paragraph{Control-affine analysis}
Fix a macro time $k$ and a substep, and treat the current state $(X^{(1)},V^{(1)},X^{(2)},V^{(2)})$ as given. The weights $W$, the merge probabilities $p_{\text{m}}^{(\cdot)}$, the shared velocities $\widehat V^{(\cdot)}$, and the contact terms $A^{(\cdot)}_{\text{c}}$ are \emph{functions of the state only} at that substep. Consequently,
\[
A^{(1)} \;=\; \underbrace{\big(1-p^{(1)}_{\text{m}}\big)}_{\text{state-only}} \odot U_1 \;+\; \underbrace{p^{(1)}_{\text{m}}\odot A^{(1)}_{\text{c}}}_{\text{state-only}},
\qquad
A^{(2)} \;=\; \big(1-p^{(2)}_{\text{m}}\big)\odot U_2 \;+\; p^{(2)}_{\text{m}}\odot A^{(2)}_{\text{c}}.
\]
The semi-implicit update is affine in $(A^{(1)},A^{(2)})$, hence affine in $(U_1,U_2)$:
\[
x_{k+1} \;=\; f(x_k) \;+\; B_1(x_k)\,u_1 \;+\; B_2(x_k)\,u_2,
\]
where the “input matrices’’ $B_1,B_2$ are diagonal masks with entries $(1-p_{\text{m}}^{(\cdot)})\,\delta t$ in the velocity rows and $(1-p_{\text{m}}^{(\cdot)})\,\delta t^2$ in the corresponding position rows, all depending only on $x_k$. Thus the map is \emph{control-affine} for any fixed state, and globally \emph{piecewise} control-affine due to the velocity/position clamping at the box limits; the latter introduces non-smooth but almost-everywhere differentiable saturations.

\paragraph{Tackle probability and running cost.}
We summarise the likelihood of a tackle against the ball-carrier (RB) via a differentiable probabilistic OR across all defenders. Let ``rb'' index the RB on offense, then with the same $W$,
\[
p_{\text{tackle}} \;=\; 1 - \prod_{j=1}^N \big(1 - w_{\text{rb},j}\big).
\]
The running loss at a macro step is
\[
\ell_{\text{run}} \;=\; 
\frac{0.1}{2}\,\tau\left(\mathrm{vec}(U_1)^\top R_1\,\mathrm{vec}(U_1) \;-\; \mathrm{vec}(U_2)^\top R_2\,\mathrm{vec}(U_2)\right)
\;+\; \lambda_{\text{tackle}}\;p_{\text{tackle}},
\]
with $R_1=R_2=I_{4N}$ in our defaults, a small control weight to encourage purposeful motion, and $\lambda_{\text{tackle}}$ is the penalty weight for RB being tackled.

\paragraph{Terminal payoffs: power-push (RB) vs. QB throw}
The hidden type $i^\star\in\{0,1\}$ selects the objective. For the power-push run ($i^\star=0$), let $(x_{\text{rb}},y_{\text{rb}})$ denote the RB coordinates and $\alpha_{\text{in}}=-0.8$. The terminal loss is
\[
L_{\text{term}}^{\text{run}} \;=\; -\Big(x_{\text{rb}} + \alpha_{\text{in}}\lvert y_{\text{rb}}\rvert\Big),
\]
which rewards downfield progress while softly encouraging an inside lane. For the QB throw ($i^\star=1$), we reward the deepest downfield offensive player, regardless of role:
\[
L_{\text{term}}^{\text{throw}} \;=\; -\max_{i\in\{1,\dots,N\}} X^{(1)}_{i,x}.
\]
The implemented terminal function is
\[
L_{\text{term}} \;=\; 
\begin{cases}
L_{\text{term}}^{\text{run}}, & i^\star=0,\\
L_{\text{term}}^{\text{throw}}, & i^\star=1.
\end{cases}
\]
The overall zero-sum loss is the sum of running losses over $k=0,\dots,K-1$ plus $L_{\text{term}}$.

% \paragraph{Public belief and Bayesian update}
% The offense chooses an action prototype (i.e., a public message) $j_k\in\{0,1\}$ at step $k$ by sampling a column from a row-stochastic matrix $A_k\in\mathbb{R}^{2\times 2}$; the defense observes $j_k$. Let $p_k\in\Delta^2$ be the public belief over $\{0,1\}$ just before $k$. The belief update is Bayes’ rule with column likelihoods:
% \[
% p_{k+1} \;=\; \frac{p_k \odot A_k[:,j_k]}{\langle \mathbf{1},\, p_k \odot A_k[:,j_k]\rangle}.
% \]
% This $p_k$ enters both players’ observations (we feed its first $I-1$ barycentric coordinates as features).

\paragraph{Initial lineup.}
For $N=11$ we instantiate a realistic I-formation offense against a 4--3 base defense in a normalized field window. Coordinates use $x$ as downfield (increasing towards the defense) and $y$ as lateral. offense aligns its line on the line of scrimmage at $x=\mathrm{LINEUP\_OFF\_X}$ with O-line $y$ coordinates $\{-0.80,-0.40,0.00,0.40,0.80\}$ labelled LT, LG, C, RG, RT, a tight end at $y=1.10$ (right), wide receivers at $y=\pm 1.45$ at the same $x$, a quarterback at $x=\mathrm{LINEUP\_OFF\_X}-0.20$, a fullback at $x=\mathrm{LINEUP\_OFF\_X}-0.30$, $y=0.20$, and the running back at $x=\mathrm{LINEUP\_OFF\_X}-0.40$, $y=0.00$. defense places a four-man line at $x=\mathrm{LINEUP\_DEF\_X}$ with $y\in\{-0.60,-0.20,0.20,0.60\}$, three linebackers at $x=\mathrm{LINEUP\_DEF\_X}-0.15$, $y\in\{-0.80,0.00,0.80\}$, cornerbacks slightly pressed at $x=\mathrm{LINEUP\_DEF\_X}+0.05$, $y=\pm 1.45$, and two safeties deep at $x=\mathrm{LINEUP\_DEF\_X}-0.45$, $y\in\{-0.90,0.90\}$. 
% We index the ball-carrier (RB) as $\text{rb}=10$. All coordinates are softly clipped to $\pm\,\mathrm{BOX\_POS}$ to remain inside the playable window.

\subsection{Making Sense of Exploitability} \label{app:sense}
While exploitability provides a good indication of whether or not the learned policy is converging to the equilibrium, without a well known baseline outcome, it is often difficult to make sense of how drastic the true nash equilibrium is compared to the resulting policy with non-zero $\epsilon$ exploitability. Hence, here we compare the resulting trajectories when each player switch to their best-response policy to get a sense of how ``exploitable'' are the strategies, qualitatively. We plot the comparison in Fig.~\ref{fig:ne_vs_br_football}. Note that unlike in Fig.~\ref{fig:overview_fig}(b), in Fig.~\ref{fig:ne_vs_br_football} P2 plays a dual game, which results in slight change in P1's policy. 

\begin{figure}[!h]
    \centering
    \includegraphics[width=\linewidth]{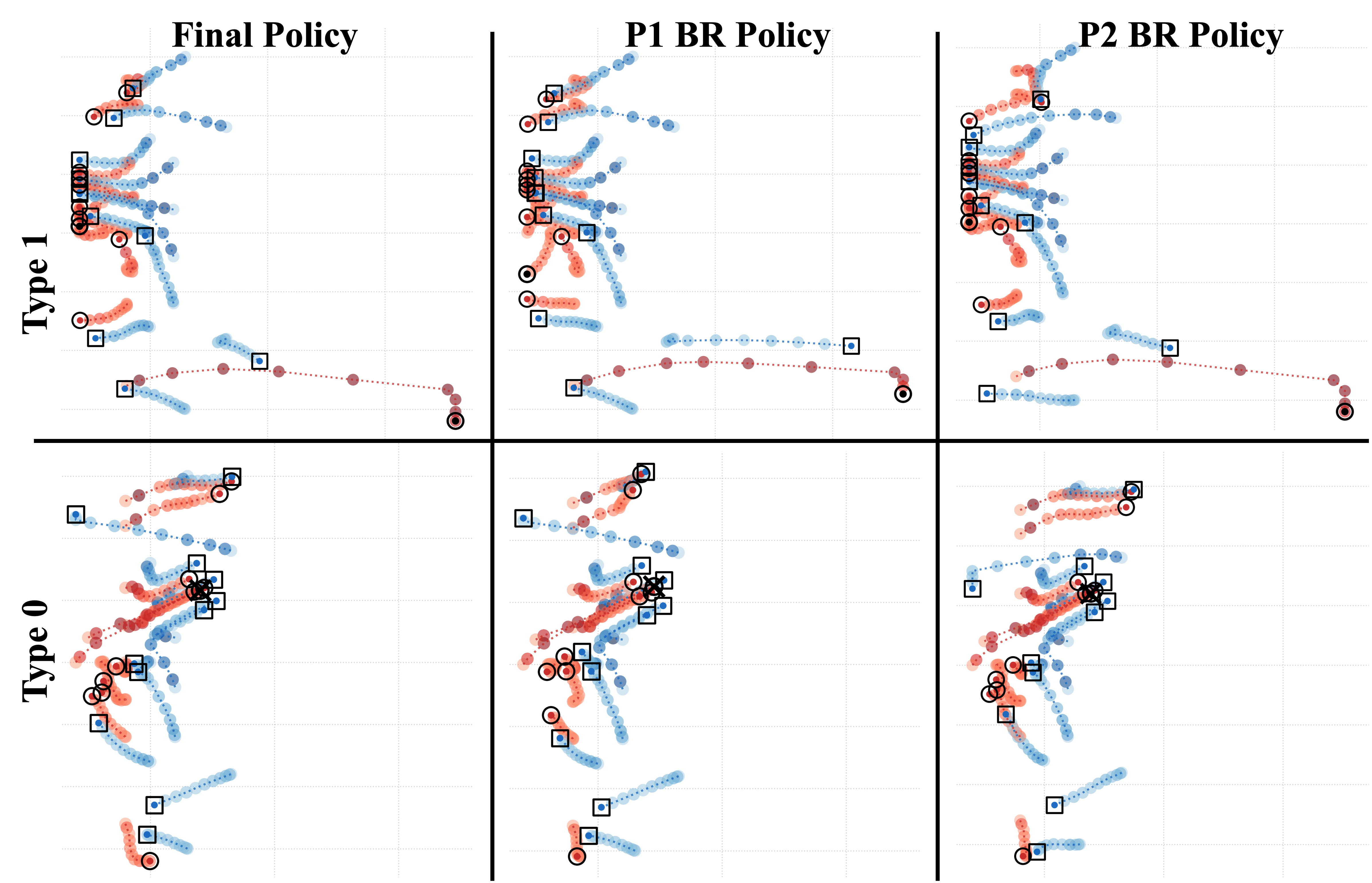}
    \caption{Comparison of resulting behavior from the final P1 and P2's policies vs their respective best-response (BR) policies. Both players' behavior do not shift qualitatively when switching to their BR policies highlighting that the resulting policy with non-zero exploitability is indeed close to the equilibrium.}
    \label{fig:ne_vs_br_football}
\end{figure}

% \section{Broader Impacts}\label{app:impacts}
% This work is concerned with bridging the gap between computational game theory and differential game theory. With
% its possible applications to robotics and AI, there is a need
% for studies on mitigating risks arising from deceptive strategies by robots and machines against human beings.

\end{document}